\documentclass[11pt]{article}

\usepackage{amsmath,amsthm,nicefrac}
\usepackage{amsfonts, amstext}

\usepackage{comment}

\usepackage{typearea}
\typearea{14}
\usepackage[font={small,it}]{caption}

\usepackage{setspace}
\usepackage[compact]{titlesec}

\usepackage{enumitem}
\usepackage[breaklinks]{hyperref}
\hypersetup{colorlinks=true,%
    citebordercolor={.6 .6 .6},linkbordercolor={.6 .6 .6},%
    citecolor=blue,urlcolor=black,linkcolor=blue,pagecolor=black}

\usepackage[nameinlink,capitalise]{cleveref}
\Crefname{algocf}{Algorithm}{Algorithms}
\crefname{algocfline}{line}{lines}
\Crefname{invariant}{Invariant}{Invariants}
\Crefname{claim}{Claim}{Claims}
\Crefname{subclaim}{Subclaim}{Subclaims}

\usepackage{epsfig}
\usepackage{amsthm,amssymb}
\usepackage{amsthm,amssymb}

\usepackage{mathrsfs}
\usepackage{xspace}
\usepackage{soul}
\usepackage{latexsym}
\usepackage{bm}

\usepackage{framed}

\usepackage[dvipsnames]{xcolor}
\definecolor{DarkGray}{rgb}{0.66, 0.66, 0.66}
\definecolor{DarkPowderBlue}{rgb}{0.0, 0.2, 0.6}
\definecolor{fluorescentyellow}{rgb}{0.8, 1.0, 0.0}

\usepackage[ruled,vlined,linesnumbered,algonl]{algorithm2e}
\SetEndCharOfAlgoLine{}
\SetKwComment{Comment}{\footnotesize$\triangleright$\ }{}

\SetCommentSty{mycommfont}

\usepackage{thmtools,thm-restate}

\usepackage{algorithmic}
\usepackage{mathtools}

\allowdisplaybreaks

\newcounter{note}[section]

\sethlcolor{fluorescentyellow}

\newcommand{\initOneLiners}{%
    \setlength{\itemsep}{0pt}
    \setlength{\parsep }{0pt}
    \setlength{\topsep }{0pt}
}

\newtheorem{theorem}{Theorem}[section]
\newtheorem{lemma}[theorem]{Lemma}

\theoremstyle{definition}
\newtheorem{definition}[theorem]{Definition}

\theoremstyle{remark}

\makeatletter
\renewcommand{\theinvariant}{(I\@arabic\c@invariant)}
\makeatother

\newcommand{\floor}[1]{\left\lfloor #1 \right\rfloor}
\newcommand{\ceil}[1]{\left\lceil #1 \right\rceil}

\newcommand{\mailto}[1]{\href{mailto:#1}{\texttt{#1}}}

\newcommand{\NN}{\mathbb N}

\newcommand{\abs}[1]{\left\lvert #1 \right\rvert}

{
\end{proof}}

\newcommand{\SC}{\mathcal C}

\newcommand{\tst}{t^\star}

\newcommand{\eat}[1]{}

\usepackage{tikz}
\usetikzlibrary{arrows.meta,positioning,calc}

\begin{document}

    \title{Hallucination Rates in Language Generation}

    \author{
        Debmalya Panigrahi\thanks{Department of Computer Science, Duke University, 308 Research Drive, Durham, NC 27708, USA. Research supported in part by NSF grant CCF-2329230. Email: {\tt debmalya@cs.duke.edu}.} \and
        Fan Wei\thanks{Department of Mathematics, Duke University, 120 Science Drive, Durham, NC 27710, USA. Research supported by NSF grants DMS-2401414 and DMS-2601663. Email: {\tt fan.wei@duke.edu}.}  \and
        Ian Zhang\thanks{Duke University. Email: \mailto{ian.zhang@duke.edu}.}
    }

    \date{July 2026}

    \maketitle

    \begin{abstract}
        Language generation in the limit is an elegant model introduced by Kleinberg and Mullainathan \cite{KM24} to
        formally study language generation by an algorithm that learns solely based on example strings.
        In this model, an algorithm is said to correctly generate from a language
        if it never makes an error after some finite time.
        In contrast, even sophisticated language models are known to regularly hallucinate in practice.
        In this paper, we initiate the study of language generation in the limit with (infinite) hallucination,
        i.e., the algorithm may generate incorrect strings infinitely often,
        but the errors occur at a limited rate (possibly even with $0$-measure).

        We first show that hallucination, even at rate $0$, makes generation in the limit strictly more powerful,
        i.e., there are language collections that cannot be generated with finite error
        but can be generated with infinite error, even when the errors occur at a $0$-measure set of time-steps.
        Furthermore, we show a surprising conceptual separation between algorithms that are allowed to make finite
        and infinite errors: while all countable language collections are known to be generatable with finite error,
        we show that there exists a strict hierarchy of (uncountable) language collections
        when characterized by the hallucination rate.
        Indeed, this distinction extends to the breadth of generation,
        which characterizes the fraction of the target language that is generated by the algorithm.
        While all countable collections can be generated with the optimal breadth $1/2$~\cite{KW26},
        we show a strict separation between generatable language collections
        at {\em every} breadth level and allowed rate of hallucination.
        Finally, we study generation in the limit {\em without repetition},
        where the algorithm is not permitted to repeat any strings.
        This allows us to compare the set of correct strings generated by the algorithm to the set of incorrect strings,
        rather than the fraction of time-steps when it generates correctly or incorrectly.
        Once again, we demonstrate the existence of a strict hierarchy that separates language collections
        that can be generated without repetition at every rate of hallucination and every breadth of generation.
        Taken together, these results reveal rich structure in language collections
        that can be generated in the limit with hallucination, and establish the rate of hallucination
        as an important parameter in the theoretical study of language generation.
    \end{abstract}

    \pagenumbering{gobble}
    \clearpage

    \pagenumbering{arabic}

    \section{Introduction}\label{sec:intro}The study of language generation has emerged as an important goal in both theory and practice with the advent and popularity of large language models (LLMs). A prominent direction of theoretical research in this domain is that of {\em language generation in the limit}, where the goal is to study the feasibility of generating (formal) languages in an unsupervised setting solely based on example strings provided to the generation algorithm. This framework draws its inspiration from classical work of Gold in the 1960s~\cite{Gol67}, and was formalized recently in the language generation setting by the influential work of Kleinberg and Mullainathan (KM)~\cite{KM24}. They showed that for any countable collection of languages $\SC$, given an example string from a target language $K\in \SC$ in every time-step, an algorithm can generate correct unseen strings from $K$ after a finite number of time-steps. This was surprising because it went against intuition derived from strong impossibility results shown by Angluin in the 1980s~\cite{Ang80,Ang80a} for the same model, but for the more challenging task of {\em identifying} the target language $K$ rather than simply {\em generating} correctly from it. Spurred by the positive result in the KM paper, the language generation in the limit problem has gained rapid popularity in the last couple of years, with various works branching out in several important directions such as understanding the ``breadth'' of the generated set in the target language~\cite{KW25,KMV25,KW26}, robustness of the algorithm to ``noise'' or ``omissions'' in the examples~\cite{RR25,LRT25,MVYZ26,BPZ26,LZ26,KW26}, finer-grained categorization of generatable language collections based on different parameters~\cite{CP25,KMV25,KMV26}, and various other directions~\cite{CP26,ABCK26,AAK26,HP26,LRT26}.

A key observation in language models in practice is the phenomenon of {\em hallucination}, where the language model occasionally generates incorrect strings interleaved with correct ones. Language models, even the most sophisticated ones, routinely produce hallucinated information, which has inspired rigorous studies in both theory~\cite{KV24,KNVZ25} and practice~\cite{HYMZFWCPF25,FKKG24} to understand this phenomenon and study its implications. In this paper, we initiate the study of hallucination in language generation in the limit. The original KM model stipulates that the number of incorrect strings must be finite for a language to be generated correctly by an algorithm. We relax this requirement and allow infinite hallucination, only constraining the rate at which the algorithm hallucinates, and call this {\em language generation with hallucination}.
In particular, we say that a collection of languages can be $\beta$-generated in the limit for some $\beta \in [0, 1]$
if there exists an algorithm whose rate of hallucination in the limit is at most $\beta$, i.e., in the limit, the algorithm generates incorrectly during at most a $\beta$-fraction of the time-steps. Note that this sharply distinguishes generation with hallucination from the KM model: 
an algorithm can make an infinite number of errors even for $\beta = 0$, but one in the KM model can only make a finite number of errors.

The first question is the power of hallucination---are there language collections that cannot be generated in
the KM model but become generatable when we allow the algorithm to hallucinate? Clearly, such language collections,
if they exist, must be uncountable since all countable collections can be generated in the KM model~\cite{KM24}.
We answer this question in a strong positive sense: we show that there are uncountable language collections that cannot
be generated in the KM model, but become generatable even when the set of hallucinated time-steps has zero measure in the time horizon.
\begin{restatable}{theorem}{zero}
    \label{thm:zero}
    There exists a collection $\SC_0$ that is $0$-generatable in the limit,
    but is not generatable with finite error.
\end{restatable}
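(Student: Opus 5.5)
We take the universe to be $\NN$ and let
\[
\SC_0 \;=\; \{\, \NN \setminus B \;:\; B \subseteq \NN \text{ has natural density } 0 \,\}.
\]
This collection is uncountable, since it contains $\NN\setminus B$ for every $B \subseteq \{2^k : k\in\NN\}$. The proof then has two parts: an explicit generator achieving hallucination rate $0$, and an adversarial enumeration that forces any generator to err infinitely often.

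\emph{$0$-generation.} We use a ``sweeping'' generator. Let $S_t$ be the set of strings seen by time $t$. At step $t$ the generator outputs $o_t$, the smallest integer larger than $o_{t-1}$ that is not in $S_t$.
\begin{itemize}
\item Since $|S_t|\le t$, at most $t$ integers are skipped over the first $t$ steps, so $o_t \le 2t+1$.
\item The outputs are strictly increasing and unseen.
\item Every incorrect output lies in $B\cap[1,2t+1]$.
\end{itemize}
Hence the number of errors among the first $t$ steps is at most $|B\cap[1,2t+1]| = o(t)$, because $B$ has density $0$. So the hallucination rate tends to $0$, and $\SC_0$ is $0$-generatable in the limit.

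\emph{No finite-error generator.} Fix any generator $G$. We build $B$ and an enumeration of $K=\NN\setminus B$ online, in stages $k=1,2,\dots$; repetitions in the enumeration are allowed.
\begin{itemize}
\item During stage $k$, after $G$ outputs $x$, the adversary checks $x$. If $x$ is unseen and $x\le 2^k$, it presents $x$ next, so $x$ becomes seen. If $x$ is unseen and $x>2^k$, it adds $x$ to $B$; this makes $x$ an error of $G$.
\item Each small output gets seen, and there are at most $2^k$ small values. So within $2^k+1$ steps of stage $k$, either $G$ outputs a seen string (an error in itself), or it outputs a large unseen $x$, which is added to $B$.
\item Once some error has been forced in stage $k$, the adversary presents every integer $\le 2^{k+1}$ not in $B$ and moves to stage $k+1$.
\end{itemize}
We then verify three properties.
\begin{itemize}
\item Every element added to $B$ in stage $k$ exceeds $2^k$, and each stage adds at most one element. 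Hence $|B\cap[1,2^k]|\le k$, so $B$ has density $0$ and $K\in\SC_0$.
\item Every integer not in $B$ is presented by the end of the stage in which it falls below $2^{k+1}$, so the enumeration of $K$ is complete.
\item Every stage forces at least one error of $G$, so $G$ errs infinitely often.
\end{itemize}

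The main point needing care is ensuring that elements put into $B$ were never presented. This holds because only unseen outputs are ever added to $B$. A second point is that a generator cannot stall by cycling among small unseen strings; the ``present the small output'' rule exhausts those within $2^k$ steps. A final check is that the convention on outputting seen strings, whether counted as an error or disallowed, does not affect either direction.
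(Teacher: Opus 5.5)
Your construction takes a genuinely different route from the paper's and can be made to work. However, the adversary rule as written has a real bug.

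\textbf{The bug.} In stage $k$ you present any unseen output $x\le 2^k$. Elements of $B$ are, by design, unseen forever. For $k\ge 2$, the previous cleanup has already presented every integer $\le 2^k$ outside $B$. So for $k\ge 2$ this case fires \emph{only} on elements of $B$. Presenting such an $x$ puts it into $K$ and retroactively turns the error you booked for it into a correct output. Your remark that only unseen outputs enter $B$ guards the other direction, not this one.

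\textbf{A generator that beats the literal rule.} Take the generator that always outputs the smallest unseen integer. In stage $k$ it outputs $2^k+1$, which you add to $B$. It keeps outputting $2^k+1$ throughout the cleanup. At the start of stage $k+1$ your first case presents $2^k+1$. Every element of $B$ is eventually presented, so $K=\NN$ and this generator never errs.

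\textbf{The fix.} In the first case require $x\notin B$, and treat an output $x\in B$ as an already forced error: go to cleanup without enlarging $B$. Then $B$ is never presented. Each stage adds at most one element, and that element exceeds $2^k$, so $|B\cap[1,2^k]|<k$. Every stage forces an error, for $k\ge 2$ already at its first output.

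\textbf{Model conventions.} The paper requires the adversary to present a distinct string at every step, and a generator can never output a seen string. Your fixed adversary needs no repetitions. Take $x_1=1$; the first case only presents unseen strings. The cleanup of stage $k$ always has fresh strings available: at least $2^k-k\ge 1$ integers in $(2^k,2^{k+1}]\setminus B$ remain, and none has been presented before.

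\textbf{Comparison with the paper.} Your $0$-generation half is correct as written. The paper instead uses a binary tree whose depth-$d$ nodes carry disjoint blocks $S_v$ of $d$ strings; languages are unions of blocks along rays. Its generator relies on structural certification: one seen string of $S_v$ certifies all of $S_v$. It therefore errs only when a block has been exhausted, which happens $O(\sqrt t)$ times because block sizes along a ray are $1,2,3,\dots$. Its lower bound exhausts $S_{v_i}$ and then descends into the child whose subtree avoids the forced output.

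You replace certification by a density argument. The upward sweep keeps outputs distinct and at most $2t$, so the errors up to time $t$ are at most $|B\cap[1,2t]|$. You replace the choice of child by declaring one large output per dyadic scale wrong. The skeleton is the same in both: exhaust what is safe, then diagonalize against the forced guess.

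Your version is more elementary. Since your generator never repeats an output, it also gives the separation in the no-repetition setting. The paper's block-along-a-ray template, on the other hand, is what it reuses (with the $A_v/B_v$ split and $q_i+1$ children) to calibrate the error rate to exactly $\beta$ in the later hierarchy theorems. A density-zero-exception collection like yours does not obviously adapt to that.
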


This raises an interesting question: does the {\em rate of hallucination},
namely the limit on the ratio of time-steps when the algorithm hallucinates to the total number of time-steps,
determine the generatability of language collections?
Indeed, we give a strong positive answer to this question as well: we show that there is a strict hierarchy on language collections based on the rate of hallucination.
\begin{restatable}{theorem}{real}
    \label{thm:real}
    For any $\beta \in (0, 1]$, there exists a collection $\SC_{\beta}$ that is $\beta$-generatable
    in the limit, but is not $\beta'$-generatable in the limit for any $\beta' < \beta$.
\end{restatable}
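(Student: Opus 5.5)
We would prove \Cref{thm:real} by an explicit diagonalization-friendly construction. The idea is to split the time-steps into ``free'' steps, on which the algorithm can always be correct, and ``forced'' steps, on which no algorithm can know which string is correct. The free steps should make up a $(1-\beta)$-fraction of time, and the forced steps a $\beta$-fraction.

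\textbf{Construction.} Take the universe $U=\NN\times\{0,1\}\times\{\mathrm{s},\mathrm{h}\}$, with ``safe'' tag $\mathrm{s}$ and ``hard'' tag $\mathrm{h}$. Fix a set $D\subseteq\NN$ of natural density exactly $1-\beta$; for irrational $\beta$ take $D=\{n:\lfloor n(1-\beta)\rfloor>\lfloor (n-1)(1-\beta)\rfloor\}$. For every function $f:\NN\to\{0,1\}$ let
\[
L_f=\{(n,b,\mathrm{s}):n\in D,\ b\in\{0,1\}\}\cup\{(n,f(n),\mathrm{h}):n\notin D\},
\]
and set $\SC_\beta=\{L_f\}$, an uncountable collection.

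\textbf{Upper bound.} The algorithm processes indices $n=1,2,\dots$ in order, one per time-step. If $n\in D$, it outputs a safe string, which lies in every $L_f$ and is therefore correct. If $n\notin D$, it outputs $(n,0,\mathrm{h})$, or the correct string if $(n,f(n),\mathrm{h})$ has already appeared among the examples. Errors occur only at indices $n\notin D$, so the limsup error rate is at most $1-\operatorname{dens}(D)=\beta$. If the model forbids repeating seen strings, the algorithm simply skips seen safe strings. Since safe strings are infinitely many and unseen ones are always available, this does not change the density.

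\textbf{Lower bound.} This is the main step, and we expect it to be the main obstacle. Given any algorithm, we build $f$ adaptively together with an enumeration of $L_f$, and we want error rate at least $\beta$. The danger is that the algorithm simply avoids hard strings and spams safe ones, so a fixed $D$ is too weak. We therefore also make $D$ part of the index: the languages become $L_{D,f}$ with $D$ ranging over all sets of density $1-\beta$ (or of upper density at most $1-\beta$). The adversary then controls both $D$ and $f$ on indices not yet enumerated. Whenever the algorithm outputs a string with an index $n$ that has not been committed, the adversary declares $n\notin D$ and sets $f(n)$ to disagree with the output. It does this subject to a budget that keeps $\{n:n\notin D\}$ of density $\beta$. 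Meanwhile the adversary commits indices slowly, in increasing order, enumerating each committed language's strings so that every string is eventually presented.

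The delicate part is the accounting. We must show that among the algorithm's outputs, at least a $\beta$-fraction asymptotically land on uncommitted indices where the adversary still has budget. The algorithm cannot outpace the enumeration by reusing committed indices, because there are only finitely many committed strings at each time and repeats are forbidden. We would try to schedule the commitments at rate $o(t)$, so that almost all outputs land on fresh indices. Among those fresh outputs, the adversary can then spend exactly a $\beta$-fraction as errors while declaring the remaining fresh indices in $D$. This yields error rate at least $\beta$, which rules out $\beta'$-generation for every $\beta'<\beta$. Finally, we will recheck that the enlarged collection remains $\beta$-generatable. The upper-bound algorithm must now infer safe strings without knowing $D$: it should output a string only at an index seen with both bits (certifying $n\in D$), or guess otherwise. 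Checking that this algorithm still achieves error rate at most $\beta$ is where we expect to adjust the construction, for instance by encoding membership in $D$ into example strings that the enumeration reveals early.
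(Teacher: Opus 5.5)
There is a genuine gap: no single collection is fixed for which both directions are proved, and the lower-bound plan contains steps that fail.

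\textbf{The collection and the upper bound.} The fixed-$D$ family can be generated with zero errors by always outputting an unseen safe string, as you note. So everything rests on the enlarged family $\{L_{D,f}\}$, and your two variants behave differently.
\begin{itemize}
\item If $D$ may have upper density at most $1-\beta$, then $D=\emptyset$ is allowed, so the family contains every $\{(n,f(n),\mathrm{h}) : n\in\NN\}$. Consider an adversary that answers each output $(n,b,\mathrm{h})$ at an uncommitted index by setting $f(n)=1-b$ and presenting $(n,1-b,\mathrm{h})$ at the next step, using sparse reserved steps to present the smallest unpresented index. It makes all but $O(\log t)$ of the first $t$ outputs wrong, so the upper bound fails for every $\beta<1$.
\item If $D$ must have density exactly $1-\beta$, the upper bound is true, but your suggested generator does not make sense. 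Because of the tag, a single $(n,b,\mathrm{s})$ already certifies $n\in D$, and once both bits at $n$ are seen nothing is left to output there.
\end{itemize}
For the second variant, a generator that works is: output the unseen partner of a half-presented safe index if one exists; otherwise output $(m,0,\mathrm{s})$ for the smallest $m$ at which nothing has been presented or guessed. Then every index up to the current guess $P_t$ is known. Each safe index presented ahead of the pointer is paid for by a correct partner step, so $P_t\le t+h_t$, where $h_t$ counts hard indices presented ahead of the pointer. Hence the number of errors is at most $\abs{D^c\cap[P_t]}-h_t\le \beta t+o(t)$. Your proposal does not contain this argument.

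\textbf{The lower bound.} This carries the whole difficulty, and it is only a plan, parts of which fail.
\begin{itemize}
\item Scheduling commitments at rate $o(t)$ is impossible. The adversary must present a new string of $K$ at every step, and each presented string commits its index (membership in $D$, or the value $f(n)$). Each index carries at most two strings, so at least $t/2$ indices are committed by time $t$.
\item You invoke that repeats are forbidden, but this theorem is in the model with repetitions. The generator may output the partner of a half-presented safe index at every step until it appears. This forces the adversary to present safe pairs back to back, and each pair hands the generator a correct step.
\item Because of this, the calibration ``declare a $\beta$-fraction of fresh outputs wrong'' is off. Against a mirroring generator, if the adversary alternates safe pairs, only half the steps are fresh guesses, so that calibration gives error rate $\beta/2$. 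For $\beta\le 1/2$ the adversary must declare a $2\beta$-fraction of guessed indices hard. For $\beta>1/2$ it must interleave hard presentations and declare all guesses hard.
\end{itemize}
Even with the right calibration, one must still show the adversary works against arbitrary generators. This includes generators that repeat one index, guess sparsely, or re-output at touched indices the adversary has silently declared safe, all while keeping the density of $D$ correct at every scale. None of this is supplied.

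The paper avoids this bookkeeping with a tree in which every language has exactly $q_i$ strings at layer $i$. The forced set $B_v$ gives $q_i-p_i$ guaranteed correct steps. The $p_i-1$ strings chosen from $\abs{A_v}=2q_i$ dodge the generator's outputs, and the $q_i+1$ children let the next node dodge them too. So the adversary spends exactly $q_i$ steps per node, and the last $p_i$ of them are forced errors.
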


Having established the power of hallucination, one may wonder if all language collections can be generated with a sufficiently high rate of hallucination. We show that this is not true: 
in fact, we give a language collection where no algorithm can generate even a {\em single} correct string.
\begin{restatable}{theorem}{ungenall}
    \label{thm:ungen}
    There exists a collection $\SC$ that is not $\beta$-generatable in the limit for any $\beta < 1$.
    Furthermore, for any algorithm $G$, there exists an adversary strategy such that $G$
    never generates a single correct string from some target language $K\in \SC$.
\end{restatable}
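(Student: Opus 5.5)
Take the universe $U=\NN$ and let $\SC$ consist of all infinite subsets of $\NN$; this collection is uncountable. The plan is a diagonalization. The adversary builds the target language $K$ online so that every string the algorithm outputs is placed outside $K$, while $K$ stays infinite and contains every example presented. Then every output is a hallucination, and in particular $G$ never outputs a correct string.

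The adversary keeps two disjoint finite sets: $P_t$, the strings committed to $K$ (the examples shown so far), and $N_t$, the strings committed to the complement of $K$ (the outputs of $G$ so far). At step $t$, after $G$ has produced its outputs $g_1,\dots,g_{t-1}$, the adversary presents the smallest string $x_t \notin N_{t-1}\cup P_{t-1}$. Such a string exists because the committed sets are finite, and it is fresh, so the enumeration never repeats. The algorithm then outputs $g_t$. If $g_t\in P_t$, it is an already-seen example. Under the standard convention, outputs must be unseen strings, so this counts as an error; if the model instead only requires outputs to lie in $K$, I would choose the examples so that an output can never be a shown example while the adversary still wins. Otherwise, the adversary adds $g_t$ to $N_t$. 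Finally set $K=\{x_1,x_2,\dots\}$. Because the $x_t$ are distinct, $K$ is infinite, so $K\in\SC$. The examples enumerate $K$ exactly, every $g_t$ lies outside $K$ or is a repeat of an example, and so $G$ errs at every time step.

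The second claim follows immediately, and so does the first: the hallucination rate equals $1$ on this instance, so no $\beta<1$ is achievable. The main subtlety is the treatment of outputs that repeat earlier examples. If the paper's definition counts such outputs as correct, the naive construction breaks, because $G$ could simply echo an example. In that case I would require correct outputs to be unseen strings, which matches the KM convention. If the paper does not adopt that convention, I would switch to a collection in which examples carry no information, for instance strings of the form $(i,b)$ with target languages indexed by functions $f:\NN\to\{0,1\}$ that can be decided adversarially.

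A second point to check is that the adversary only ever commits finitely many strings at each step, so the greedy choice of $x_t$ is always available and consistent. This is immediate, since both $P_t$ and $N_t$ are finite.
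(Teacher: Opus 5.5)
Your proof is correct and takes essentially the same approach as the paper. Both use the collection of all infinite subsets of $\NN$, and both have the adversary present a fresh string at each step that avoids every earlier example and output, so $K=\{x_1,x_2,\dots\}$ is infinite and no output can ever later appear in $K$. The paper picks $x_t = 1+\max\{x_1,z_1,\dots,x_{t-1},z_{t-1}\}$ instead of the smallest unrevealed string. Your concern about echoed examples does not arise, since the paper's model already requires $z_t\notin\{x_1,\dots,x_t\}$ and counts only unseen strings of $K$ as correct.
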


The above results create a precise hierarchy of language collections starting with countable collections that can be generated with finite error (by the KM result~\cite{KM24}) all the way up to language collections that cannot be generated at all, even with an arbitrarily large rate of hallucination.

Next, we study the tradeoff between the rate of hallucination and the breadth of language generation. One shortcoming of the algorithm in~\cite{KM24} was the existence of countable language collections that were correctly generated by the algorithm, but the strings output by the algorithm formed a subset of zero measure in the actual target language. This motivated Kleinberg and Wei (KW)~\cite{KW25} to introduce the notion of breadth (or density) of generation, which parametrizes the fraction of the target language that is actually generated by the algorithm.
They define two related notions of breadth, {\em upper} and {\em lower} density, which represent upper and lower limits on the fraction of the target language generated by the algorithm.
In a follow-up paper~\cite{KW26}, they also gave a new algorithm which
generates all countable language collections with a lower density of $1/2$,
which is optimal since the algorithm is not allowed to repeat strings provided as examples,
and these strings can be completely disjoint from the algorithm's strings in the worst case.

The KW result~\cite{KW26} implies the absence of a strict hierarchy of language collections by breadth of generation:
for any $\alpha, \alpha'$ satisfying $0 < \alpha < \alpha' \le 1/2$,
there is no countable language collection that can be generated with breadth $\alpha$ but not with breadth $\alpha'$.
Surprisingly, we show that the situation is very different for language generated with hallucination.
We first show that there exists a language collection that cannot be generated with upper density better than $0$
regardless of the amount of hallucination allowed, although it can be generated in the KM model (with $0$ upper density).
\begin{restatable}{theorem}{densityzero}
    \label{thm:density_zero}
    There exists a collection $\SC$ that is generatable in the limit with finite error and upper density $0$, but it cannot be $\beta$-generated with upper density strictly greater than $0$ even with arbitrary $\beta \le 1$.
\end{restatable}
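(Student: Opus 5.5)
The plan is to build an uncountable collection in which every language contains one fixed, very sparse infinite set $A$. A generator can always output strings from $A$ and never err. The adversary, by contrast, can make sure that essentially nothing outside $A$ that the generator outputs ever lies in the target. Throughout I assume the standard conventions of \cite{KM24,KW25}: at each step the generator outputs one string not among the examples seen so far, and density is measured along the fixed enumeration of the universe restricted to $K$.

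\emph{Construction.} Write the universe as $U=\{u_1,u_2,\dots\}$. Let $A=\{u_{2^i}: i\ge 1\}$ and $B=U\setminus A$. For every infinite $S\subseteq B$ let $K_S=A\cup S$, and set $\SC=\{K_S\}$. This collection is uncountable, as it must be by \cite{KW26}.

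\emph{Finite error.} The generator that, at each step, outputs the least element of $A$ it has neither seen nor output is always correct. Its outputs lie in $A$. So it suffices to check that $A$ has density $0$ in every $K_S$ that the adversary below can produce.

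\emph{Adversary.} Fix any generator $G$, with any allowed hallucination rate $\beta\le 1$. At step $t$ the adversary presents the smallest $b\in B$ that has not yet been presented and that $G$ has not output at any earlier step. The target is $K_S$, where $S$ is the set of presented elements together with $A$; the elements of $A$ are interleaved into the presentation at sparse steps, say at the squares, so that $K_S$ is fully enumerated. Every string of $B$ that $G$ outputs falls into one of two cases:
\begin{itemize}
\item it was already presented, so it is not a new string and is disallowed or uncounted;
\item it was not yet presented when $G$ output it, so the adversary never presents it and it lies outside $K_S$, i.e.\ it is a hallucination.
\end{itemize}
Hence $G$'s correct new outputs are contained in $A$.

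\emph{Density of $A$ (main obstacle).} The remaining step is to show that $A$ has density $0$ in $K_S$, i.e.\ that $S$ is not too sparse even though $G$ tries to exclude elements of $B$. Consider the moment the adversary's pointer passes index $N$ of $B$. Each element of $B$ below $N$ was either presented or output by $G$ before the adversary reached it. Since $G$ outputs at most one string per step, and the number of steps so far is at most roughly the number of presented elements plus the sparse $A$-steps, we get
\[
\bigl|S\cap\{b_1,\dots,b_N\}\bigr|\ \ge\ N/2-o(N).
\]
So $S$ has lower density at least about $1/2$ in $B$, whereas $|A\cap\{u_1,\dots,u_n\}|=O(\log n)$. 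Therefore the fraction of the first $n$ elements of $K_S$ that $G$ generates is $O(\log n/n)\to 0$, and the upper density is $0$ for every $\beta$.

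The bookkeeping in this counting step is what I expect to need the most care, especially the interleaving of the $A$-steps and the requirement that $S$ be infinite. Both follow from the same bound, since the pointer advances at least every other step.
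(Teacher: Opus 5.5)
Your proposal is correct and follows essentially the same route as the paper. Both use the collection of supersets of the sparse set $A$ of powers of two, the trivial error-free generator that outputs from $A$, and an adversary that presents the smallest unrevealed string outside $A$ while interleaving $A$, so that every output outside $A$ is a hallucination and $K$ has positive lower density off $A$. The differences are cosmetic: you interleave $A$ at square steps rather than every even step, which gives density about $1/2$ instead of the paper's $1/4$, and you restrict to infinite $S$, which is harmless.
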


We further explore the interplay of breadth and rate of hallucination, and show the existence of a strict hierarchy at every rate of hallucination and breadth:
for any $\beta > 0$ and any $0 \le \alpha \le 1/2$,
we give a language collection that can be generated with lower density $\alpha$ when hallucinating at rate $\beta$,
but can neither be generated with upper density greater than $\alpha$ (regardless of the rate of hallucination),
or with a rate of hallucination less than $\beta$ (regardless of the breadth of generation).
(Note that showing a lower bound against generation with upper density is stronger than a lower bound against generation with lower density because generation with lower density $\alpha$ implies generation with upper density $\alpha$).

\begin{restatable}{theorem}{densityhierarchy}
    \label{thm:two_sided_hierarchy}
    For any density of generation $\alpha \in [0, 1/2]$ and for any $\beta \in (0, 1]$,
    there exists a collection $\SC_{\beta, \alpha}$ that is
    $\beta$-generatable in the limit with lower density $\alpha$,
    but is not $\beta'$-generatable in the limit for any $\beta' < \beta$ (even with upper density $0$),
    and is not $\gamma$-generatable in the limit with upper density $\alpha' > \alpha$ for any $\gamma \le 1$.
\end{restatable}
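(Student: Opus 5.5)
The plan is to build $\SC_{\beta,\alpha}$ as a ``product'' of two earlier constructions. One component is the collection $\SC_\beta$ of \cref{thm:real}, which forces the hallucination rate up to $\beta$. The other is a density gadget, in the spirit of the collection from \cref{thm:density_zero}, which caps the achievable upper density at $\alpha$.

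Concretely, split the universe of strings into two disjoint infinite blocks $U_1$ and $U_2$. A target language has the form $K = A \cup B$, where $A \subseteq U_1$ comes from a copy of $\SC_\beta$. The piece $B \subseteq U_2$ comes from a countable family designed so that $A$ makes up an asymptotic $(1-2\alpha)$-fraction of $K$, in the density sense of \cite{KW25}, and $B$ makes up the remaining $2\alpha$. Interleave the enumeration of $U_1$ and $U_2$ so that these densities are well defined.

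I would make $A$ the ``ungeneratable'' part, using the \cref{thm:density_zero} mechanism. Any string the algorithm outputs in $U_1$ either hallucinates or covers only a zero-density portion of $A$. The $B$ part then carries all of the usable breadth. The $\SC_\beta$ structure lives in the $B$-side, or in a coupling between the adversary's choice on the two sides, so that generating correctly inside $B$ still forces a $\beta$ rate of hallucination.

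\textbf{Upper bound.} The algorithm runs the $\beta$-generator for $\SC_\beta$ on the $B$-side. It also uses the \cite{KW26} density-$1/2$ scheme relative to $B$: the adversary's examples occupy at most half of $B$ in the limit, so the algorithm can claim the other half. Among correct outputs this gives lower density $\tfrac12 \cdot 2\alpha = \alpha$ in $K$, with hallucination rate $\beta$.

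\textbf{Lower bounds.} There are three, and I would prove them separately.
\begin{enumerate}
\item Any $\beta' < \beta$: restrict the adversary to the $B$-coordinate and reduce to the lower bound of \cref{thm:real}. A $\beta'$-generator for $\SC_{\beta,\alpha}$, even with upper density $0$, yields a $\beta'$-generator for $\SC_\beta$ by projecting out the $A$ part. This requires that correct strings in $U_1$ do not help, which the density-zero gadget ensures.
\item Upper density $\alpha' > \alpha$, for every $\gamma \le 1$: the algorithm's correct strings in $A$ have density $0$ by \cref{thm:density_zero}. Within $B$, the adversary's own examples can be chosen to be half of $B$ in the limit and disjoint from the algorithm's outputs, and the density is measured on strings not given as examples. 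Together these cap the density at $\alpha$.
\item The degenerate case $\alpha = 0$: drop $B$ and take $\SC_\beta$ combined directly with the gadget.
\end{enumerate}

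\textbf{Main obstacle.} The hard part is the compatibility of the two adversaries. The adversary from \cref{thm:density_zero} and the one from \cref{thm:real} must be run simultaneously with a single enumeration of one fixed target $K$. Neither may leak information that breaks the other's lower bound, and the $\alpha$ cap must not rely on excluding correctness of examples.

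I would first try a diagonalization in phases. Alternate, with geometrically growing phase lengths, between the steps where the \cref{thm:real} adversary commits to its choice and the steps where the \cref{thm:density_zero} adversary commits. The target $K$ is then defined as the limit of these commitments, and I would verify that each adversary's density and rate accounting survives the dilution caused by the other's phases.
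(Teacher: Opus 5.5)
\textbf{The gap is in your first lower bound.} You project out the $A$ part on the grounds that ``correct strings in $U_1$ do not help, which the density-zero gadget ensures.'' But density zero controls breadth, not the error rate. The hallucination rate is counted per time-step, and the algorithm decides at every step which block to output in.
\begin{enumerate}
\item With the \cref{thm:density_zero} mechanism on $U_1$, every language contains a fixed sparse core such as $\{2^i\}$. The algorithm can output an unrevealed core string at every step and never err. Your $\SC_{\beta,\alpha}$ would then be $0$-generatable, contradicting the second clause of the theorem.
\item Your other stated option, drawing $B$ from a countable family, fails the same way. Run the KM algorithm on the $U_2$-projection of the enumeration, repeating its output while the adversary is in $U_1$; this makes only finitely many errors.
\end{enumerate}
So neither block may be easy, and each has to force rate $\beta$ on its own. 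The $U_1$ gadget would need to be something like $\SC_\beta$, with its safely generatable $B_v$-strings placed in a density-zero part of the ordering. The $U_2$ side would have to be uncountable and $\beta$-hard. Then \cite{KW26}, which concerns countable collections, no longer gives you the density-$1/2$ half.

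Your proposal is inconsistent about where the $\SC_\beta$ structure lives, and none of the placements it describes yields the rate lower bound. Even after such a repair, the joint adversary you call the main obstacle is only a plan. The algorithm may split its outputs across blocks, and both targets must still be fully enumerated. A repaired product is essentially the paper's route for the no-repetition analogue (\cref{lem:interpolate}). There, both gadgets are $\beta$-hard, including the density-zero one of \cref{lem:box_zero_density}.

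\textbf{How the paper proves this theorem.} It avoids products altogether. It keeps the tree $T_\beta$ of \cref{thm:real} (with $q_i - p_i \ge 2$) but splits $A_v$ into $p_i - 1$ bins of $q_i+1$ strings each. A language takes all of $B_v$ and exactly one string from each bin. The ordering is given by a map $f$ with three parts:
\begin{itemize}
\item two strings of each $B_v$ go into a set $H$ of density $2\alpha$;
\item the rest of $B_v$ goes into a density-zero set $R$;
\item each entire bin goes onto a single point of a density-$(1-2\alpha)$ set $Y$.
\end{itemize}
A single adversary (\cref{lem:density_beta_nexist}) handles everything. It enumerates $B_{v_i}$ starting with an $H$-string. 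It then picks from each bin a string the algorithm has not output, and descends to a child whose subtree the algorithm has not touched. This forces $p_i$ errors among the $q_i$ steps of each stage. It also leaves the algorithm at most one correct $H$-string per node and no correct $Y$-strings, which caps the upper density at $\alpha$ for every $\gamma$. Conversely, the generator of \cref{lem:density_algo} is that of \cref{lem:real_algo}, modified to prefer unrevealed $H$-strings of $B_{n(x_t)}$. It keeps rate $\beta$ and secures one $H$-string per node, hence lower density $\alpha$. Because one adversary on one tree controls both quantities, the compatibility problem you identify never arises.
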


Since the rate of hallucination encapsulates the fraction of time-steps when the algorithm generates incorrect strings, a natural question is whether a small rate of hallucination implies that the algorithm outputs a small fraction of incorrect strings.
In general, this is not true. This is because the original KM model~\cite{KM24} (and our previously stated results) allows the generation algorithm to output the same string multiple times, as long as that string has not been provided as an example. This means that an algorithm might have a small rate of hallucination while also having a large fraction of incorrect strings in its output set, by repeating correct strings more frequently than incorrect ones. This motivates us to study the language generation problem in a modification of the KM model that explicitly prohibits the algorithm from repeating the same string multiple times, even if it has not been provided as an example. We call this {\em language generation without repetitions}. Interestingly, this modification does not affect the central KM result: their generation algorithm for any countable collection of languages does not need to repeat any string. But, once again, we observe a more fine-grained picture when we try to categorize uncountable language collections in terms of generation with hallucination. We demonstrate a strict hierarchy of language collections with the allowed rate of hallucination when the algorithm is not allowed to repeat output strings.
\begin{restatable}{theorem}{boxreal}
    \label{thm:box}
    For any $\beta \in (0, 1]$, there exists a collection $\SC_{\beta}$ that is $\beta$-generatable in the limit without
    repetitions, but is not $\beta'$-generatable in the limit without repetitions for any $\beta' < \beta$.
\end{restatable}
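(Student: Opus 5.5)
We first fix the model. Since repetitions are forbidden, every output of the algorithm is a distinct string not previously shown as an example. The hallucination rate is the limiting fraction of time-steps with an incorrect output. Because each step outputs a fresh string, this equals the fraction of distinct output strings lying outside $K$.

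\textbf{Construction.} Partition the universe into blocks. For each $\beta\in(0,1]$ we take an uncountable family of target languages indexed by infinite binary sequences $\sigma$ (or subsets of $\NN$). Each language $K_\sigma$ is a union of two parts:
\begin{itemize}
\item a fixed countable ``safe'' core $A$ that the algorithm can always draw correct strings from;
\item a $\sigma$-dependent part that encodes, in each block, which of two candidate strings is in $K$.
\end{itemize}
The weights are arranged so that, at each step, the examples seen so far do not determine membership of the fresh strings the algorithm must use. A natural template is a ``box'' structure. Time is divided into phases of geometrically growing length. In phase $j$ the adversary's examples reveal only strings from box $B_j$. The remaining correct strings available to the algorithm are either exhausted, or sparse enough that to keep generating fresh strings the algorithm must output a $\beta$-fraction of strings from unresolved boxes, each of which is correct or incorrect depending on an unseen bit of $\sigma$. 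The no-repetition constraint is what forces this: the algorithm cannot recycle the few safe strings it knows.

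\textbf{Upper bound.} We give an explicit algorithm that, for any target $K_\sigma$ and enumeration, outputs fresh strings. In a long-run $(1-\beta)$-fraction of steps it outputs strings it can certify as correct: strings forced into every consistent language, for example companions of seen examples within the same box. In the remaining $\beta$-fraction of steps it guesses. We then show that the certified strings never run out and are never repeated, by counting forced strings against time. Choosing box sizes so the count of certified fresh strings up to time $t$ is asymptotically $(1-\beta)t$ gives hallucination rate at most $\beta$.

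\textbf{Lower bound.} We run an adversary argument, which is the main obstacle. Fix an algorithm $G$ and $\beta'<\beta$. The adversary builds $\sigma$ and the enumeration in stages. It keeps example strings within a slowly growing set so that, by the counting in the construction, the number of fresh strings certifiably in $K$ up to time $t$ is at most $(1-\beta+o(1))t$. Every other output of $G$ lies in a box whose bit is still undetermined, and the adversary sets that bit to make the output incorrect. It must do this consistently, never flipping a bit it has already used, and while still eventually enumerating all of $K_\sigma$, so that $\sigma$ is a legitimate target. We handle this by diagonalization. Each output string touches at most one undecided bit, so we set it adversarially, and we interleave enumeration of already-fixed parts of $K_\sigma$ only at a sublinear rate. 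Then infinitely often the error fraction exceeds $\beta-o(1)>\beta'$, so $\limsup$ of the hallucination rate is at least $\beta$, contradicting $\beta'$-generation.

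The delicate step is ensuring that the sublinear enumeration of $K$ does not supply enough new certifiable strings to let $G$ beat the $(1-\beta)$ budget. I would try first to make each revealed example certify only $O(1)$ further strings, with the phase lengths tuned so these contributions are $o(t)$ beyond the designed $(1-\beta)t$.
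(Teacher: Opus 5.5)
There is a genuine gap. The collection you sketch either makes the problem trivial or forces the rate to $1$, and the mechanism that would produce an intermediate rate $\beta$ is never supplied.

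First, a fixed infinite core $A$ contained in every language makes the collection $0$-generatable without repetitions. At step $t$, output the least element of $A$ not among $x_1,\dots,x_t,z_1,\dots,z_{t-1}$; this is always a correct unseen string and never a repeat. The sparsity of $A$ in the ordering is irrelevant: the hallucination rate counts time-steps, and the algorithm needs only one fresh correct string per step. So your lower bound is false for this collection.

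Second, if $A$ is finite or dropped, the one-of-two blocks certify nothing. Seeing one candidate certifies that its companion is \emph{not} in $K$, so the ``companions of seen examples'' your upper bound relies on do not exist. An adversary that immediately reveals the other candidate of whatever block $G$ just touched, doing cleanup only at sparse times, falsifies all but a vanishing fraction of $G$'s outputs. Such a collection therefore realizes only $\beta=1$.

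Relatedly, your lower bound treats the adversary's examples as something to keep sublinear. But the adversary must emit a new string of $K$ at every step. What it emits on the non-cleanup steps is exactly where the lower bound has to be won, and you leave this unspecified.

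The missing idea is to bound errors inside a block by a cardinality constraint rather than by certification, and to let the adversary compete with the algorithm for that block's strings. The paper takes boxes $S_i$ of $2q_i$ strings and lets $\SC_\beta$ be all $L$ with $\abs{L\cap S_i}=2q_i-p_i$ for every $i$. Here $p_i/q_i\to\beta$ from below and $q_i$ grows only logarithmically.

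For the upper bound, the algorithm repeatedly picks a box $S_b$ with no revealed strings and outputs $q_b$ fresh strings from it. This is possible because the adversary removes at most one string per step. Whatever $K$ is, at most $p_b$ of these outputs are wrong. The slow growth of $q_b$ makes the within-stage overshoot, of order $p_b/t$, vanish.

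For the lower bound, after each algorithm output in $S_i$ the adversary enumerates another string of $S_i$. It also declares the algorithm's first $p_i$ outputs in each box forbidden, and never enumerates them. Because the algorithm cannot repeat, this mirroring caps it at roughly $\abs{S_i}/2=q_i$ outputs per box, the first $p_i$ of which are wrong. Cleanup steps at powers of $2$ complete a legitimate enumeration of $K$ at only $O(\log t)$ cost. This competitive use of the no-repetition constraint is the step your proposal lacks.
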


Finally, we consider the interplay between the breadth of generation and the hallucination rate in language generation without repetition.
Analogous to \cref{thm:two_sided_hierarchy}, we show the existence of a strict hierarchy for generation without repetition at every breadth and hallucination rate. (Note, however, that the language collections used to show \Cref{thm:two_sided_hierarchy} and \Cref{thm:density_hierarchy_without} are different since they correspond to different models of language generation, with and without repetition respectively.)

\begin{restatable}{theorem}{densityhierarchywithout}
    \label{thm:density_hierarchy_without}
    For any density of generation $\alpha \in [0, 1/2]$ and for any $\beta \in (0, 1]$,
    there exists a collection $\SC_{\beta, \alpha}$ that is
    $\beta$-generatable in the limit without repetitions with lower density $\alpha$,
    but is not $\beta'$-generatable in the limit without repetitions for any $\beta' < \beta$
    (even with upper density $0$), and is not $\gamma$-generatable in the limit without repetitions
    with upper density $\alpha' > \alpha$ for any $\gamma \le 1$.
\end{restatable}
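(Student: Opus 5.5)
We plan to construct $\SC_{\beta,\alpha}$ by combining the two gadgets behind \cref{thm:box} and \cref{thm:density_zero}, but adapted to the no-repetition setting. Since outputs can no longer repeat, the hallucination rate is controlled by how many distinct strings are output, not by timing. So the constructions must force the algorithm to \emph{spend} fresh strings rather than revisit safe ones. Concretely, fix a countable universe partitioned into blocks. Each language $L\in\SC_{\beta,\alpha}$ is determined by an infinite sequence of choices, uncountably many in total, in the spirit of the \cref{thm:box} construction. In each block it contains a ``certain'' part and an ``uncertain'' part: a $\beta$-fraction (by count, in a fixed enumeration order) of the strings the algorithm could plausibly output is ambiguous. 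The ambiguity is whether they lie in $L$ or in a sibling language consistent with all examples so far.

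\textbf{Upper bound.} We will give an explicit generator that, on each block, outputs the strings of that block in enumeration order without repetition. It errs exactly on the uncertain part whose membership has not been revealed, and this part has asymptotic fraction at most $\beta$. For the density we interleave in the \cite{KW26} style: the generator emits certain strings of $L$ at a rate such that, counted against $L$, its correct outputs have lower density exactly $\alpha$. We tune the block sizes so that the certain portion of $L$ reachable by any generator is an $\alpha$-fraction. Here $L$ contains an additional ``padding'' region of relative density $1-2\alpha$, and membership in that region is determined by an uncountable choice never revealed by examples, as in \cref{thm:density_zero}. When $\alpha=1/2$ the padding is empty, and the standard argument that examples may be disjoint from outputs gives the cap.

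\textbf{Lower bounds.} Both lower bounds are adversary diagonalisations.

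\begin{enumerate}
\item \emph{Hallucination rate.} Against a claimed $\beta'$-generator with $\beta'<\beta$, the adversary builds the target block by block. Whenever the algorithm outputs a fresh string in an uncertain region, the adversary commits to a sibling excluding it, which is always possible since that string was never presented. Because repetition is banned, the algorithm cannot stall on already-confirmed strings. It must either output at least a $\beta$-fraction of fresh uncertain strings, all made wrong, or exhaust the confirmed ones. The latter forces it into uncertain territory, since we make the confirmed set in each block thin relative to time. Taking limits over the uncountable tree of commitments gives a target in $\SC_{\beta,\alpha}$, with hallucination rate at least $\beta$ along a subsequence. This holds even with upper density $0$.
\item \emph{Density.} Similarly, any string the algorithm outputs in the padding region is declared outside $K$, so correct outputs lie in the $\alpha$-portion plus a vanishing part. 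This bounds the upper density by $\alpha$ for any $\gamma$.
\end{enumerate}

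\textbf{Main obstacle.} The hardest step is making the two adversaries compatible: the single constructed target must simultaneously defeat the rate bound and the density bound. It must also still be a member of the collection, a closure issue arising from the limit of infinitely many commitments. We would first try defining $\SC_{\beta,\alpha}$ as \emph{all} languages obtained from every infinite choice sequence, making it closed under such limits by fiat. We would then check that this enlargement does not break the upper-bound generator, i.e.\ that its error accounting depends only on block structure.
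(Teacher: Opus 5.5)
There is a genuine gap. You never fix $\SC_{\beta,\alpha}$, its ordering, the generator, or the adversaries. Both thresholds are therefore asserted (``a $\beta$-fraction is ambiguous'', ``we tune the block sizes'') rather than derived, and the mechanisms you do describe break in the no-repetition model.

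\emph{The generator.} It does not err ``exactly on the uncertain part''. The adversary moves first at every step, so it can pre-empt the certain strings of a block: all of them if your generator walks the block in enumeration order and certain and uncertain strings alternate, and about half even if the generator takes certain strings first. Under your own lower-bound rule, every fresh uncertain output the generator is pushed onto is then excluded. For the two directions to meet at $\beta$ you need a per-block budget of exclusions, and a per-block yield for the generator that survives pre-emption. The paper gets these from the following ingredients:
\begin{itemize}
\item boxes of $2q_i$ strings with exactly $p_i$ exclusions;
\item a generator that takes $q_i$ outputs from a fresh box, so it makes at most $p_i$ errors whatever the adversary does;
\item an adversary that mirrors the algorithm inside the box of its last output (capping it near $q_i$ outputs per box) and forbids its first $p_i$ outputs there;
\item sparse times $R=\{2^i\}$ used to complete every box to exactly $2q_i-p_i$ strings, which ``closure by fiat'' does not give you.
\end{itemize}

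\emph{The padding.} A padding region ``as in \cref{thm:density_zero}'' also fails, for three reasons.
\begin{itemize}
\item There, membership outside $A$ is unconstrained; that is what lets the adversary declare every output wrong. So the padding's relative density in $L$ is not $1-2\alpha$ but arbitrary, and on languages with dense padding your generator's lower density drops below $\alpha$.
\item Keeping a fixed part like $A$ is worse: emitting unrevealed strings of $A$ is a zero-error generator, which kills the rate lower bound.
\item Your tree-of-commitments picture suggests that safe strings in the main part appear only once the adversary reveals a block. If so, an adversary that spends long stretches on padding leaves your generator nothing safe to output.
\end{itemize}

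\emph{The missing idea.} The paper splits the collection into two gadgets on disjoint universes, each $\beta$-generatable but not $\beta'$-generatable for $\beta'<\beta$.
\begin{itemize}
\item $\SC_1$ is the box collection of \cref{thm:box}, ordered so that only $2p_i$ strings per box lie in the dense part of the order. Its adversary mirrors onto those strings and forbids the algorithm's first $p_i$ of them, forcing upper density $0$ (\cref{lem:box_zero_density}, which by itself settles $\alpha=0$).
\item $\SC_2$ is $\beta$-generatable with lower density $1/2$ and capped at $1/2$. It uses bins of size $2$ for $\beta\le 1/2$, or the tree of \cref{thm:real} re-tuned to $p_i/q_i\to 2\beta-1$ for $\beta>1/2$, compensating for the algorithm securing only half of each $B_v$.
\end{itemize}
For $\SC_1\times\SC_2$, the generator runs $G_1$ or $G_2$ according to whether $x_t\in U_1$ or $x_t\in U_2$. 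Each gadget is thus charged only for the steps the adversary spends in its own universe, which is what rules out starvation. The two adversaries run side by side, and $U_1$ and $U_2$ are interleaved in groups $M_i$ so that $L_2$ has density $2\alpha$ in $L_1\cup L_2$. This yields $2\alpha\cdot\tfrac{1}{2}=\alpha$ in both directions (\cref{lem:interpolate}). Density is governed by where correct outputs sit in the order of $K$, not by the rate at which they are emitted, which is why this explicit ordering is needed.

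\emph{The ``main obstacle''.} The obstacle you single out is not one. The theorem's two impossibility claims are separate: the rate claim asks nothing about density, and the density claim nothing about the rate. Each may use its own adversary, and no single target has to defeat both. The real work is the calibration and interleaving above.
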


\subsection{Related Work}\label{sec:related}

There have been many recent works studying the KM language generation model.
As discussed previously, one line of research has focused on formalizing the tradeoff
between correctness and breadth of generation.
Various notions of breadth have been formalized, which study the ``fraction'' of strings
in the target language that the algorithm eventually
generates~\cite{CP25,KMV25,KW25,KW26,MVYZ26,KMV26,KW26a,GMDT26,LHSGJG26}.
Another line of work has studied the impact of noise or omissions,
where the adversary is allowed to either add or remove strings
from its enumeration of the target language~\cite{RR25,MVYZ26,BPZ26,LZ26,KW26,HP26}.
Other work has investigated whether generatability is closed under finite unions~\cite{HKMV25,BPZ26}.

Another line of research has investigated notions of uniform and non-uniform generation,
which enforce additional constraints on the (finite) time at which the
algorithm must generate correctly, or the number of mistakes the algorithm can
make~\cite{LRT25,CP25,CP26,ABCK26,KPR26,JKO26,BPZ26}.
There has also been a wide range of work studying natural variants of language generation,
or settings that impose additional constraints on the adversary or
algorithm~\cite{PRR25,AAK26,LRT26,RVS26,MVYZ26a,LHJG26,PF25,LHJG26a,FPPSS26}.
Finally there has been work that revisits the problem of language
identification~\cite{PSV26,CPT26,CKP26,CKP27}.

\paragraph{Roadmap.}
We introduce basic definitions used throughout the paper in \Cref{sec:prelim}.
Then, we give our results for generation with hallucination allowing repetitions
(\Cref{thm:zero,thm:real,thm:ungen,thm:density_zero,thm:two_sided_hierarchy}) in \Cref{sec:with_repeats}
and those for generation without repetitions (\Cref{thm:box,thm:density_hierarchy_without}) in
\Cref{sec:without_repeats}. Finally, we conclude with some closing remarks in \Cref{sec:closing}.

    \section{Preliminaries}\label{sec:prelim}This section introduces the standard definitions for language generation in the limit,
along with the new definitions for generation with hallucination.
We use $U$ to denote the countably infinite universe containing all strings.
Every language $L$ is an infinite subset of the universe $U$.
A collection of languages $\SC$ is a subset of the power set of $U$, i.e., $\SC \subseteq 2^U$.
In general, a collection of languages $\SC$ can be countable or uncountable.
It will sometimes be convenient to use the set of natural numbers (positive integers)
to define the universe (this is without loss of generality up to renaming
since there is a bijective map between any two countably infinite sets).
We denote the set of natural numbers by $\NN = \{1, 2, \ldots\}$
and the first $n$ natural numbers $\{1, 2, \ldots, n\}$ by $[n]$.

\subsection{Generation in the Limit}

In the general setup~\cite{KM24}, the universe $U$ and the collection of languages $\SC$
are fixed and known to the algorithm.
In any run of the algorithm, the adversary selects a target language $K \in \SC$;
this choice is unknown to the algorithm.
The adversary presents the strings of $K$ in an enumeration $x_1$, $x_2, x_3, \ldots$,
where each string $x_t$ is contained in $K$, and for every string $y \in K$,
there exists some time $t$ for which $y = x_t$.
In other words, the adversary needs to eventually enumerate all the strings in $K$.
We note that in previous literature, the adversary has sometimes been allowed to repeat strings in its enumeration.
However, since the rate of hallucination is defined in terms of time-steps,
allowing the adversary to repeat strings can artificially boost the rate of hallucination
because the adversary can postpone providing new example strings to the algorithm.
For this reason, we enforce in this paper that the adversary must provide a distinct example string in each time-step.
We also note that it was shown by~\cite{BPZ26} that generation in the limit is equivalent regardless of whether
the adversary is allowed to repeat strings in its enumeration.

At each time-step $t$, the algorithm takes as input the set of strings enumerated by the adversary so far
and outputs a string $z_t$ that is distinct from $x_1, x_2, \ldots, x_t$.
The goal is that after some finite time $\tst$, all strings
$z_t$ for $t \ge \tst$ have to be correct \emph{unseen} strings from the target language $K$.
In other words, $z_t$ must be in $K$ for all $t \ge \tst$.
Note that unlike the adversary, the algorithm is allowed to output the same string multiple times,
as long as the adversary has not output that string in its enumeration yet.
Later, we will also consider the setting where the algorithm cannot repeat strings, similar to the adversary.

\begin{definition}[Generator algorithm]
    A generator algorithm is a function $U^* \to U$ which takes as input a finite ordered
    set of strings $x_1$, \dots, $x_t$, and outputs a string $z_t$ such that $z_t \neq x_{t'}$
    for any $t' \le t$.
\end{definition}

\begin{definition}[Generation in the limit~\cite{KM24}]
    An algorithm $G$ generates in the limit for a collection $\SC$
    if for any $K \in \SC$ and any enumeration $x$ of $K$,
    there exists a time $\tst$ such that for all $t \ge \tst$,
    the generated string $z_t$ at time $t$ is in $K$.
\end{definition}

Given an adversary enumeration $x_1$, $x_2$, \dots\ and corresponding algorithm outputs
$z_1$, $z_2$, \dots, we say that a string $y \in U$ is \emph{unrevealed} at time $t$
if $y$ has not been output by either the adversary or algorithm before time $t$,
i.e., $y \notin \{x_1, z_1, \dots, x_{t-1}, z_{t-1}\}$.

\subsection{Generation with Breadth}
We now introduce the notion of breadth of generation as defined in~\cite{KW25}.
To do so, we first impose an arbitrary fixed ordering on the strings of $U$.
Let $L$ and $L'$ be two languages of $U$.
Let the strings of $L'$ be listed in order as $L' = \{\ell'_1, \ell'_2, \ell'_3, \dots\}$.
The (asymptotic) upper density of $L$ in $L'$ is
\[\mu_{\text{up}}(L, L') = \limsup_{n \to \infty} \frac{\abs{L \cap \{\ell'_1, \ell'_2, \dots, \ell'_n\}}}{n},\]
and the (asymptotic) lower density of $L$ in $L'$ is
\[\mu_{\text{low}}(L, L') = \liminf_{n \to \infty} \frac{\abs{L \cap \{\ell'_1, \ell'_2, \dots, \ell'_n\}}}{n}.\]
We define the breadth of generation in terms of the density of the algorithm's outputs
within the target language $K$.
For any adversary enumeration $E = \{x_1, x_2, x_3, \dots\}$ and algorithm $G$,
let $O(E, G) = \{z_1, z_2, z_3, \dots\}$ be the set of strings that the algorithm ever outputs
when given the enumeration $E$.

\begin{definition}
    For any $\alpha \in [0, 1]$ and collection $\SC$, an algorithm $G$ generates with upper density $\alpha$
    if for every target language $K \in \SC$ and adversary enumeration $E$ of $K$, we have
    \[\mu_{\text{up}}(O(E, G), K) \ge \alpha.\]
    Similarly, an algorithm $G$ generates with lower density $\alpha$ if we have for every $K$ and $E$ that
    \[\mu_{\text{low}}(O(E, G), K) \ge \alpha.\]
\end{definition}

\subsection{Generation with Hallucination}
We now introduce our new definitions of language generation with hallucination.
Consider any target language $K$, an adversary enumeration $x_1, x_2, \ldots$,
and an algorithm that outputs $z_1, z_2, \ldots$. 
Let $V = \{t \mid z_t \in K \setminus \{x_1, \dots,  x_t\}\}$
be the set of time-steps at which the algorithm's output is a correct unseen
string from the target language,
and $W = \NN \setminus V$ be the set of time-steps at which the algorithm's output is incorrect.
Also, denote the prefixes of these sets for the first $t$ time-steps as
$V_t = V \cap [t]$ and $W_t = W \cap [t]$ respectively.

We define $\beta$-generation in the limit to mean that in the limit, an algorithm
generates incorrectly during at most $\beta$ fraction of time-steps.
\begin{definition}
    For any $\beta \in [0, 1]$ and collection $\SC$, an algorithm $\beta$-generates in the limit
    if for every target language $K \in \SC$ and adversary enumeration $E$ of $K$, we have
    \[\limsup_{t \to \infty} \frac{\abs{W_t}}{t} \le \beta.\]
\end{definition}
Note that requiring $\limsup_{t \to \infty} \frac{\abs{W_t}}{t} \le \beta$ is a more stringent
condition than requiring $\liminf_{t \to \infty} \frac{\abs{W_t}}{t} \le \beta$.
We adopt the stronger requirement because it corresponds more naturally to enforcing that
the maximum rate of error is $\beta$.
An even stronger condition would be to require $\sup_{t \to \infty} \frac{\abs{W_t}}{t} \le \beta$,
but using $\limsup$ is more natural in the KM model, which requires the algorithm to generate correctly in the limit.

It is straightforward to see that we can also define $\beta$-generation in the limit
in terms of the fraction of correct time-steps.
\begin{lemma}
    An algorithm $\beta$-generates in the limit if and only if
    \[\liminf_{t \to \infty} \frac{\abs{V_t}}{t} \ge 1 - \beta.\]
\end{lemma}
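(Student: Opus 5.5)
The plan is a direct computation from the definitions, using only that $V$ and $W$ partition $\NN$. Since $W = \NN \setminus V$, the prefixes satisfy $W_t = [t] \setminus V_t$, with $V_t$ and $W_t$ disjoint. So the first step is the pointwise identity
\[
\frac{\abs{W_t}}{t} = 1 - \frac{\abs{V_t}}{t} \qquad \text{for every } t \ge 1.
\]
This holds for any fixed target language $K$, enumeration $E$, and algorithm output sequence.

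The second step is to pass to the limit using the elementary fact that, for any real sequence $a_t$, we have $\limsup_{t\to\infty}(1 - a_t) = 1 - \liminf_{t\to\infty} a_t$. This follows because negation swaps $\sup$ and $\inf$ over tails, and adding a constant commutes with both. Both sides are finite here, since $\abs{V_t}/t \in [0,1]$. Applying this with $a_t = \abs{V_t}/t$ gives
\[
\limsup_{t \to \infty} \frac{\abs{W_t}}{t} = 1 - \liminf_{t\to\infty} \frac{\abs{V_t}}{t}.
\]
Therefore $\limsup_{t\to\infty} \abs{W_t}/t \le \beta$ holds if and only if $\liminf_{t\to\infty} \abs{V_t}/t \ge 1-\beta$.

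Finally, both $\beta$-generation in the limit and the condition in the lemma quantify over all $K \in \SC$ and all enumerations $E$ of $K$ in the same way. The per-run equivalence therefore lifts immediately to the stated equivalence for the algorithm.

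There is no real obstacle. The only point needing care is the $\limsup$/$\liminf$ swap under $a \mapsto 1-a$, which I would state explicitly rather than leave implicit.
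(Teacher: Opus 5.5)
Your proof is correct and follows essentially the same route as the paper: the pointwise identity $\abs{W_t}/t = 1 - \abs{V_t}/t$ (which the paper uses implicitly), combined with the fact that negation swaps $\limsup$ and $\liminf$. Your explicit remark that the per-run equivalence lifts to all $K \in \SC$ and all enumerations is a small detail the paper leaves unstated.
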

\begin{proof}
    Since $\liminf_{t \to \infty} (-y_t) = - \limsup_{t \to \infty} y_t$ for any real sequence $\{y_t\}$,
    we have
    \[\limsup_{t \to \infty} \frac{\abs{W_t}}{t} \le \beta
        \iff \liminf_{t \to \infty} \left( 1 - \frac{\abs{W_t}}{t} \right) \ge 1 - \beta
        \iff \liminf_{t \to \infty} \frac{\abs{V_t}}{t} \ge 1 - \beta.\qedhere\]
\end{proof}

We say that a collection is $\beta$-generatable in the limit with breadth $\alpha$
if there exists a single algorithm $G$ that both $\beta$-generates in the limit and generates with breadth $\alpha$.

\subsection{Generation without Repetitions}
In the previous definitions,
the generation algorithm is allowed to repeat strings that have not yet been presented by the adversary.
This creates the possibility that an algorithm with a small rate of hallucination
can actually have a large fraction of incorrect strings in its output set
by repeating correct strings more frequently than incorrect ones. 
To directly bound the fraction of incorrect strings produced by an
algorithm, we now consider generation without repetitions, where we require that
the algorithm outputs a distinct string in every time-step.
Under this definition, the fraction of time-steps at which the algorithm generates correct strings
exactly matches the fraction of correct strings in the algorithm's output set.

\begin{definition}[Generator algorithm without repetitions]
    A generator algorithm without repetitions is a function $U^* \to U$ which takes as input a finite ordered
    set of strings $x_1, \ldots, x_t$, and outputs a string $z_t$ such that $z_t \neq z_{t'}$
    for any $t' < t$ and $z_t \neq x_{t'}$ for any $t' \le t$.
\end{definition}

The definitions of breadth of generation and hallucination rate given above seamlessly extend to the case of generation without repetition. However, as we mentioned above, these parameters can now be interpreted in terms of the {\em set of strings} output by the algorithm which forms a bijection with the time-steps, since the algorithm is not allowed to repeat any string.

    \section{Generation with Hallucination Allowing Repetitions}\label{sec:with_repeats}In this section, we study the role of (infinite) hallucinations in the
language generation in the limit model (with repetitions allowed).
Recall that we say that an algorithm $\beta$-generates a language collection
if it generates in the limit with rate of hallucination at most $\beta$.

\subsection{Generation with Hallucination Rate $0$}

Our first result shows that there are language collections that are $0$-generatable in the limit,
but are not generatable in the limit in the standard KM model.
This shows that allowing an infinite but measure zero number of hallucinations
is more powerful than the standard model which only allows a finite number of errors.

\zero*

We first describe the collection $\SC_0$.
The universe of the collection is described by an infinite rooted tree $T$ where every node has exactly $2$ children.
We label the nodes of $T$ with strings of $\bigcup_{i \in \NN} [2]^i$,
where the root of $T$ is labeled by $\emptyset$, and we inductively assign
$(u_1, \dots, u_{i})$ to label the $u_{i}$-th child of $(u_1, \dots, u_{i-1})$.
We can think of the label of a node $v$ as encoding the unique path from the root to $v$.

Let the depth of a node $v$ be denoted by $d(v)$ where the depth of the root is $1$,
and the depth of any other node is one more than the depth of its parent.
For each node $v$, we will associate with $v$ a set $S_v$ of size $d(v)$ which contains strings
that are disjoint from any other set $S_u$.
Formally, we have $S_v = \{vj \mid j \in [d(v)]\}$, where $vj$
is the concatenation of $v$ with $j$.
\Cref{fig:zero-tree} depicts the first $3$ layers of the tree.

\begin{figure}[t]
    \centering
    \scalebox{.75}{
        \begin{tikzpicture}[
            edge/.style={draw, very thick},
            info/.style={
                draw,
                rounded corners=10pt,
                thick,
                fill=gray!10,
                align=center,
                inner sep=6pt,
                font=\small
            },
            dots/.style={font=\Large}
        ]

            \node[info, text width=2.8cm] (root) at (0,0)
                {$\begin{array}{c}
                      v=\emptyset\\[1mm]
                      S_v=\{(1)\}
            \end{array}$};

            \node[info, text width=3.4cm] (left) at (-4.2,-2.9)
                {$\begin{array}{c}
                      v=(1)\\[1mm]
                      S_v=\{(1,1),(1,2)\}
            \end{array}$};

            \node[info, text width=3.4cm] (right) at (4.2,-2.9)
                {$\begin{array}{c}
                      v=(2)\\[1mm]
                      S_v=\{(2,1),(2,2)\}
            \end{array}$};

            \node[info, text width=3.6cm] (ll) at (-6.3,-6.2)
                {$\begin{array}{c}
                      v=(1,1)\\[1mm]
                      S_v=\{(1,1,1),(1,1,2),\\
                      (1,1,3)\}
            \end{array}$};

            \node[info, text width=3.6cm] (lr) at (-2.1,-6.2)
                {$\begin{array}{c}
                      v=(1,2)\\[1mm]
                      S_v=\{(1,2,1),(1,2,2),\\
                      (1,2,3)\}
            \end{array}$};

            \node[info, text width=3.6cm] (rl) at (2.1,-6.2)
                {$\begin{array}{c}
                      v=(2,1)\\[1mm]
                      S_v=\{(2,1,1),(2,1,2),\\
                      (2,1,3)\}
            \end{array}$};

            \node[info, text width=3.6cm] (rr) at (6.3,-6.2)
                {$\begin{array}{c}
                      v=(2,2)\\[1mm]
                      S_v=\{(2,2,1),(2,2,2),\\
                      (2,2,3)\}
            \end{array}$};

            \draw[edge] (root.south) -- (left.north);
            \draw[edge] (root.south) -- (right.north);
            \draw[edge] (left.south) -- (ll.north);
            \draw[edge] (left.south) -- (lr.north);
            \draw[edge] (right.south) -- (rl.north);
            \draw[edge] (right.south) -- (rr.north);

            \node[dots] at (-6.3,-8.1) {$\vdots$};
            \node[dots] at (-2.1,-8.1) {$\vdots$};
            \node[dots] at (2.1,-8.1) {$\vdots$};
            \node[dots] at (6.3,-8.1) {$\vdots$};
        \end{tikzpicture}
    }
    \caption{The first three layers of the rooted binary tree used in the proof of \cref{thm:zero}.
    Each node $v$ is labeled together with its associated set $S_v = \{vj : j \in [d(v)]\}$.}
    \label{fig:zero-tree}
\end{figure}

The universe $U = \bigcup_{v \in T} S_v$ is simply the union of the sets at each node in the tree.
$U$ is countable because there are a countable number of nodes in $T$,
and the set $S_v$ is finite for each $v$.
Languages $L \in \SC_0$ will be constructed from infinite rays in the tree $T$
that originate at the root.
Let $P = (v_1, v_2, v_3, \dots)$ be any infinite ray in $T$ where $v_1 = \emptyset$
is the root of $T$, and $v_j$ is a child of $v_{j-1}$ for any $j \ge 2$.
The collection $L_P = \bigcup_{v \in P}S_{v}$ is the union of the sets along each node
in the ray $P$.
Let $I = \{(v_1, v_2, v_3, \dots) \mid v_1 = \emptyset, \text{ and $v_j$ is a child of $v_{j-1}$ for all $j \ge 2$}\}$
be the set of all infinite rays in $T$ that originate at the root.
Then the collection of languages is \[\SC_0 = \{L_P \mid P \in I\}.\]
We now give an algorithm that $0$-generates in the limit for $\SC_0$.

\begin{lemma}
    \label{lem:zero_algo}
    There exists an algorithm that $0$-generates in the limit for $\SC_0$.
\end{lemma}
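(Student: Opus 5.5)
The plan is a simple greedy algorithm: at each step, output an unseen string from the part of the target ray already certified by the examples. Then I argue that this certified part can be exhausted only at a sparse set of time-steps.

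\emph{The algorithm.} Fix an enumeration $x_1, x_2, \dots$ of $K = L_P$ for some ray $P = (v_1, v_2, \dots) \in I$. Each string $vj \in U$ identifies its node $v$ uniquely, because the sets $S_v$ are pairwise disjoint and the label $v$ encodes the root-to-$v$ path. So any example $x_i \in S_v$ certifies that $v$ lies on $P$, and hence so do all ancestors of $v$. At time $t$, let $D_t$ be the maximum depth of a node $v$ with $S_v \cap \{x_1, \dots, x_t\} \neq \emptyset$. The algorithm then knows the prefix $v_1, \dots, v_{D_t}$ of $P$ exactly, and so knows the finite set
\[
A_t = \bigcup_{d \le D_t} S_{v_d} \subseteq K, \qquad \abs{A_t} = \sum_{d=1}^{D_t} d = \frac{D_t(D_t+1)}{2}.
\]
If $A_t \setminus \{x_1, \dots, x_t\} \neq \emptyset$, the algorithm outputs any element of it. This output is a correct unseen string of $K$, so $t \in V$. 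Otherwise it outputs an arbitrary string from $S_u$ for a child $u$ of $v_{D_t}$ (say the first child). Such a string is unseen, since every example lies at depth at most $D_t$, but it may be wrong. The algorithm is well defined as a function of $x_1, \dots, x_t$, and repetitions are allowed, so it is a valid generator algorithm.

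\emph{Bounding the errors.} By definition of $D_t$, every example $x_1, \dots, x_t$ lies in $A_t$. The examples are distinct, so $t \le \abs{A_t} = D_t(D_t+1)/2$. An error can occur only when $A_t$ is fully revealed, i.e.\ $\{x_1,\dots,x_t\} = A_t$. This forces $t = D_t(D_t+1)/2$, so $t$ must be a triangular number. Hence $W_t$ is contained in the set of triangular numbers at most $t$, which has size at most $\sqrt{2t}$. Therefore
\[
\limsup_{t \to \infty} \frac{\abs{W_t}}{t} \le \lim_{t\to\infty} \frac{\sqrt{2t}}{t} = 0 .
\]
This holds for every $K \in \SC_0$ and every enumeration, so the algorithm $0$-generates in the limit.

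\emph{Main obstacle.} The key step is the counting observation that the examples themselves live inside the certified prefix. This ties the number of examples to $D_t$ and pins exhaustion times to triangular numbers. One subtlety to check is that exhaustion can recur infinitely often: an adversary that reveals layers in order exhausts $A_t$ at every triangular time. So we genuinely get infinitely many errors, but at density zero. This is consistent with the other half of \cref{thm:zero}, the impossibility of finite error, which is proved separately.
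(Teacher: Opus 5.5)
Your proof is correct, but it uses a different certified pool than the paper and therefore a different count.

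The paper's algorithm looks only at the node of the current example and outputs an unseen string of $S_{n(x_t)}$. It can err only at a time $t$ when $x_t$ completes $S_{n(x_t)}$, and each node is completed at most once. Since the $k$ smallest sets along a ray have total size $k(k+1)/2$, at most $\sqrt{2t}$ nodes can be completed by time $t$.

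You instead use the whole prefix $A_t$ certified by the deepest touched node. This gives the stronger structural fact $\{x_1,\dots,x_t\} \subseteq A_t$, which pins every possible error time to $t = \abs{A_t}$, a triangular number. Both routes give $\abs{W_t} \le \sqrt{2t}$.

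Your rule's potential error times are a subset of the paper's. For instance, completing $S_{v_2}$ before any string of $S_{v_1}$ appears is a potential error for the paper's rule but not for yours. You also make explicit that the fallback output is unseen, which the paper leaves implicit.

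What the paper's local, per-node charging buys is reusability. It is exactly the template of \cref{lem:real_algo}, where a language contains only part of each $D_v$. There the examples need not lie inside the certified pool, so your step $t \le \abs{A_t}$ has no direct analogue.
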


Intuitively, if the adversary outputs a single string from some set $S_v$,
then the entirety of $S_v$ must be contained in the target language,
allowing the algorithm to safely output any string in $S_v$.
Thus, the only times when the algorithm cannot correctly generate a string is when
for every node $v$, either the adversary has output no strings from $S_v$
or has output all of the strings from $S_v$.
Since the sizes of $S_v$ are increasing, we show that the algorithm
only generates incorrectly at a density $0$ fraction of times.

For any string $x \in U$, we define $n(x)$ to be the unique node $v \in T$
such that $x \in S_v$.

\begin{lemma}
    \label{lem:zero_contain}
    For any language $L \in \SC_{0}$ and string $x$, if $x \in L$, then $S_{n(x)} \subseteq L$.
\end{lemma}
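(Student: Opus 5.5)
The plan is to unfold the definitions of $\SC_0$ and of $L_P$ directly; nothing beyond the construction is needed. Fix $L \in \SC_0$, so $L = L_P$ for some ray $P = (v_1, v_2, \dots) \in I$, and let $x \in L$ be arbitrary.

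First, by definition $L_P = \bigcup_{v \in P} S_v$, so $x \in S_{v_j}$ for some node $v_j$ on the ray $P$. Second, the sets $S_v$ are pairwise disjoint, so $n(x)$ is well defined. Indeed, every element of $S_v$ has the form $vj$ with $j \in [d(v)]$, and the node $v$ is recovered from such a string by deleting its last coordinate. Hence $n(x) = v_j$. Third, since $v_j \in P$, the whole set $S_{v_j} = S_{n(x)}$ is one of the sets in the union defining $L_P$. Therefore $S_{n(x)} \subseteq L_P = L$.

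The only point requiring care is the disjointness and well-definedness of $n(\cdot)$, which the construction asserts. I would verify it explicitly by noting that the map $vj \mapsto v$ (deleting the final entry) inverts the labeling. One might worry that a string $vj$ could coincide with a node label $v'$ used in some other $S_{v''}$. This is harmless: the universe $U$ consists only of the elements of the sets $S_v$, and two elements $vj$ and $v'j'$ are equal only if $v = v'$ and $j = j'$. So no genuine obstacle is expected; the lemma is immediate once this is checked.
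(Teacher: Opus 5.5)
Your proof is correct and matches the paper's: both write $L = L_P = \bigcup_{v \in P} S_v$, deduce $n(x) \in P$, and conclude $S_{n(x)} \subseteq L$. Your explicit check that the sets $S_v$ are pairwise disjoint, so that $n(\cdot)$ is well defined, is handled correctly; the paper simply takes this from the construction.
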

\begin{proof}
    Fix an arbitrary ray $P \in I$ such that $x \in L_P$.
    By definition, $L_P = \bigcup_{v \in P} S_{v}$.
    Thus if $x \in L$, that implies $n(x) \in P$, so $S_{n(x)} \subseteq L$.
\end{proof}

We now define the algorithm that $0$-generates in the limit for $\SC_0$.
At some time $t$, if there exists a string in $S_{n(x_t)} \setminus \{x_1, \dots, x_t\}$, the algorithm outputs
$z_t$ to be any such string.
Otherwise, we set $z_t$ to be an arbitrary string of $U$.

\begin{proof}[Proof of~\cref{lem:zero_algo}]
    By \cref{lem:zero_contain}, if the algorithm outputs a string from $S_{n(x_t)} \setminus \{x_1, \dots, x_t\}$,
    then the output $z_t$ must be a correct unseen string from the target language.
    Thus, the algorithm outputs an incorrect string at some time $t$
    only if $S_{n(x_t)} \subseteq \{x_1, \dots, x_t\}$.
    Let $F_t = \{t' \le t \mid S_{n(x_{t'})} \subseteq \{x_1, \dots, x_{t'}\}\}$
    be the set of such times up to $t$.
    We have $\abs{W_t} \le \abs{F_t}$, so it suffices to argue that
    $\limsup_{t \to \infty} \abs{F_t}/t = 0$.

    The adversary is not allowed to repeat strings in its enumeration,
    so for each node $v$, there can be at most one time $t$
    such that $n(x_t) = v$ and $S_{n(x_t)} \subseteq \{x_1, \dots, x_t\}$.
    To bound $\abs{F_t}$, note that if $S_{n(x_t)} \subseteq \{x_1, \dots, x_t\}$
    at some time $t$, the adversary must have output exactly $\abs{S_{n(x_t)}}$
    strings from $S_{n(x_t)}$ in the first $t$ times.
    For any ray $P = (v_1, v_2, \dots) \in I$, the $m$ smallest sets
    are exactly the sets $S_{v_1}$, \dots, $S_{v_m}$
    which have a total size of $1 + \dots + m = \Omega(m^2)$.
    This implies that at time $t$, the adversary can only fully output the strings
    from at most $O(\sqrt{t})$ sets, so $\abs{F_t} = O(\sqrt{t})$.
    Thus,
    \[\limsup_{t \to \infty} \frac{\abs{W_t}}{t} \le \limsup_{t \to \infty} \frac{\abs{F_t}}{t} = 0\]
    as desired.
\end{proof}

We now show that there does not exist an algorithm that generates in the limit for $\SC_{0}$.

\begin{lemma}
    \label{lem:zero_nexist}
    There does not exist an algorithm that generates in the limit for $\SC_0$.
\end{lemma}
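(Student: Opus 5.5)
We argue by a diagonalization in which the adversary builds the target ray online. Fix any algorithm $G$. The adversary constructs a ray $P = (v_1, v_2, \dots)$ node by node, together with an enumeration of $L_P$, so that $G$ makes an error at infinitely many time-steps. The key observation is that when all strings of the current node $v_k$ have been enumerated and nothing from deeper nodes has been shown, the examples seen so far, namely $S_{v_1} \cup \dots \cup S_{v_k}$, are consistent with the languages through both children of $v_k$. Any output $z_t$ that is a correct unseen string for one child's subtree lies outside the other child's subtree. Unseen strings of $S_{v_1},\dots,S_{v_k}$ do not exist at that moment, so the output cannot be correct for both branches.

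The construction proceeds in phases. In phase $k$, the adversary has already fixed $v_1,\dots,v_k$ and has enumerated exactly $S_{v_1}\cup\dots\cup S_{v_k}$ in some order (say, node by node). At the next time-step $t$, before choosing $v_{k+1}$, consider the output $z_t$; this output depends only on $x_1,\dots,x_t$, where $x_t$ is the last string of $S_{v_k}$. Formally, we look at the time $t$ right after the last string of $S_{v_k}$ is presented.

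We then split into cases on $z_t$:
\begin{itemize}
\item If $z_t \in S_u$ for some $u$ in the subtree of child $(v_k,1)$, the adversary sets $v_{k+1} = (v_k,2)$.
\item If $z_t$ lies in the subtree of $(v_k,2)$, the adversary sets $v_{k+1} = (v_k,1)$.
\item If $z_t$ lies in $S_{v_1}\cup\dots\cup S_{v_k}$, it is already seen (or is not a legal output), so it is an error regardless of the choice.
\item If $z_t$ lies elsewhere in $T$, it is outside every ray extending $v_1,\dots,v_k$ and is again an error.
\end{itemize}
In every case $z_t \notin L_P \setminus \{x_1,\dots,x_t\}$ for the eventual ray $P$, because $P$ extends $v_{k+1}$ and so avoids the subtree that $z_t$ sits in. The adversary then enumerates $S_{v_{k+1}}$ and moves to phase $k+1$.

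The resulting sequence is a valid enumeration of $L_P$: it lists each string exactly once with no repeats, and every string of every $S_{v_k}$ is eventually presented. Since each phase produces one erroneous time-step, $G$ errs infinitely often. Hence $G$ does not generate in the limit for $\SC_0$.

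The main subtlety is the adaptivity. The adversary's future choices must not change $G$'s past outputs. This holds because $G$ is deterministic and $z_t$ depends only on the prefix already fixed, so the error at each phase boundary is preserved once $P$ is completed. One must also check that the limiting object is a genuine ray in $I$ and that the enumeration is distinct and exhaustive; both follow directly from the construction.
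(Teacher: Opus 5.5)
Your proposal is correct and follows essentially the same argument as the paper. The adversary builds the ray stage by stage and fully enumerates $S_{v_k}$; at the time $t$ when the last string of $S_{v_k}$ is presented, it chooses $v_{k+1}$ to be a child whose subtree avoids $z_t$, so that $z_t$ is an error in every phase. Your four-way case split (the two child subtrees, already-seen strings, and strings off the path) is the same as the paper's case analysis.
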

\begin{proof}
    Assume for contradiction that there exists an algorithm $G$ that generates in the limit for $\SC_0$.
    We will inductively define an adversary strategy and corresponding target language $K \in \SC_{0}$
    which ensures that $G$ does not generate in the limit.

    The adversary strategy will proceed in an infinite number of stages, where at each stage, the
    adversary ensures that the algorithm generates at least one incorrect string.
    At any stage $i$, the adversary will completely enumerate the strings in $S_v$
    for some node $v$ that is at depth $i$.
    At the time $t$ when the adversary outputs the last string in $S_v$,
    we will argue that we can choose $K$ so that the algorithm's output $z_t$
    is not an unseen string from $K$.

    At some stage $i$, inductively assume that the adversary has chosen some path
    of nodes $v_1$, \dots, $v_{i}$ such that $v_1 = \emptyset$ is the root,
    and $v_{j}$ is the child of $v_{j-1}$ for all $j \ge 2$.
    For the base case of $i = 1$, set $v_1 = \emptyset$.
    Also assume that inductively, the adversary has output every string
    from $S_{v_1} \cup \dots \cup S_{v_{i-1}}$ by the time stage $i$ begins.
    During stage $i$, we will output all of the strings in $S_{v_i}$.
    Note that $v_i$ must have depth $i$, so $\abs{S_{v_i}} = i$.
    For each of the next $i$ time-steps, the adversary will output a distinct string from $S_{v_i}$
    in some arbitrary order.
    Let $t_i$ be the last of those $i$ time-steps, i.e., the time when the adversary
    outputs the last string from $S_{v_i}$.

    If $z_{t_i}$ is not contained in the subtree of $v_i$,
    we set $v_{i+1}$ to be an arbitrary child of $v_i$.
    Otherwise, since $v_i$ has two children, there must be a child of $v_i$
    whose subtree does not contain the algorithm's output $z_{t_i}$.
    We set $v_{i+1}$ to be the child of $v_i$ that does not contain $z_{t_i}$.

    Finally, the target language $K = \{x_1, x_2, \dots\}$ is the set of all strings
    output by the adversary.
    Clearly $K \in \SC_0$ because the adversary chooses $v_{i+1}$
    to be a child of $v_i$ at each stage, and then enumerates every string of $S_{v_{i+1}}$.
    We now claim that for each $i \in \NN$, the algorithm's output $z_{t_i}$ is not an unseen
    string from $K$.
    If $z_{t_i}$ is not contained in the subtree of $v_i$, then it must either
    have already been output by the adversary, or does not appear in $K$.
    If $z_{t_i}$ is contained in $S_{v_i}$, it must have already been output by the adversary.
    Otherwise, if $z_{t_i}$ is contained in the subtree of a child of $v_i$,
    we chose $v_{i+1}$ so that $z_{t_i}$ is not contained in the subtree of $v_{i+1}$.
    At any future time, the adversary will only output strings that are contained
    in the subtree of $v_{i+1}$, implying that $z_{t_i} \notin K$ as desired.

    We have shown that for any algorithm $G$, there exists a target language $K \in \SC_0$
    and an adversary enumeration $x_1$, $x_2$, \dots\ of $K$ such that $G$
    generates incorrectly infinitely often.
    Thus, there does not exist an algorithm that generates in the limit for $\SC_{0}$.
\end{proof}

Combining \cref{lem:zero_algo,lem:zero_nexist} gives the proof for \cref{thm:zero}.

\subsection{Hierarchy of Generation}

We now show that there exists a collection $\SC$ that is not generatable
with any hallucination rate less than $1$.
In fact, for any algorithm $G$, there exists an adversary strategy such that $G$
never generates a single correct string.
Compared to \cref{thm:zero}, this forms the other end of the hierarchy for generation with infinite hallucination.
The proof closely follows the proof of Proposition~A.1 in~\cite{HKMV25}.

\ungenall*

\begin{proof}
    Let the collection $\SC = \{L \subseteq \NN \mid \abs{L} = \infty\}$
    be the set of all infinite subsets of $\NN$
    and fix an arbitrary algorithm $G$.
    We will show that for any $G$, there exists a target language $K \in \SC$ and adversary enumeration
    such that $G$ never generates a single correct string.
    This would imply that $\SC$ is not $\beta$-generatable in the limit for any $\beta < 1$.

    We inductively construct a sequence of adversary outputs $x_1$, $x_2$, \dots.
    For the base case, let $x_1 = 1$.
    Then, for any $t > 1$, set $x_t = 1 + \max\{x_1, z_1, \dots, x_{t-1}, z_{t-1}\}$.
    Finally, let the target language be $K = \{x_1, x_2, \dots\}$.
    Clearly $K$ is an infinite subset of $\NN$, so $K \in \SC$.

    To see that $G$ never generates a correct unseen string, let $t$ be an arbitrary time,
    and consider the algorithm's output $z_{t}$.
    The output $z_t$ is a correct unseen string from $K$ if and only if $z_t \in \{x_{t+1}, x_{t+2}, \dots\}$.
    However, at any future time $t' > t$, the adversary's output $x_{t'}$ is set to be strictly larger than $z_t$.
    Thus, $G$ never generates a single correct string, implying that
    $\SC$ is not $\beta$-generatable in the limit for any $\beta < 1$.
\end{proof}

Our next result establishes a strict hierarchy of generation at every rate of hallucination.

\real*

The collection $\SC_{\beta}$ is constructed in a similar manner to the collection $\SC_0$
used in the proof of \cref{thm:zero}.
As before, we will have an infinite rooted tree $T_{\beta}$ where each node $v$ contains a pool of strings.
To define the tree, we first need to introduce two sequences $\{p_{i}\}, \{q_{i}\}$ for $i \ge 1$.
Each node in a particular layer $i$ of the tree will have exactly $q_{i} + 1$ children.
We fix $\{p_{i}\}$ and $\{q_{i}\}$ to be any sequences of positive integers
such that $p_{i} / q_{i}$ converges to $\beta$ from below,
i.e., satisfying the properties that $1 \le p_{i} \le q_{i}$ and $p_{i}/q_{i} \le \beta$ for all $i \in \NN$,
and $\lim_{i \to \infty} p_{i}/q_{i} = \beta$.
It is easy to see that such sequences can be constructed for any $\beta$.
Note that the sequences $\{p_i\}$ and $\{q_i\}$ depend on $\beta$, but we omit
$\beta$ from the subscripts for ease of notation.

Since each node at depth $i$ has exactly $q_{i} + 1$ children,
a node at depth $i$ can be labeled by $i-1$ tuples where the $j$-th coordinate takes values from $[q_{j} + 1]$.
The root of $T_{\beta}$ is labeled by $\emptyset$, and we inductively assign
$(u_1, \dots, u_{i})$ to label the $u_{i}$-th child of $(u_1, \dots, u_{i-1})$.
As before, we can think of the label of a node $v$ as encoding the unique path from the root to $v$.
The total set of labels for nodes in $T_{\beta}$ is
\[\bigcup_{i \in \NN} \prod_{j < i} [q_{j} + 1].\]

For each node $v \in T_{\beta}$ at some depth $i = d(v)$, we associate with $v$ a set
\[D_v = \{vj \mid j \in [3q_{i} - p_{i} + 1]\}.\]
The set $D_v$ is further partitioned into two sets $A_v = \{vj \mid j \in [2q_{i}]\}$ of size $2q_{i}$,
and $B_v = \{vj \mid j \in \{2q_{i} + 1, \dots, 3q_{i} - p_{i} + 1\}\}$ of size $q_{i} - p_{i} + 1$.
Recall that $v$ is an $i - 1$ tuple, and $vj$ is the $i$ tuple formed by concatenating $v$ with the integer $j$.
\Cref{fig:real-tree} depicts an example of a possible tree
for specific values of $\{p_{i}\}$ and $\{q_{i}\}$.

\begin{figure}[t]
    \centering
    \scalebox{.85}{
        \begin{tikzpicture}[
            scale=0.84,
            transform shape,
            edge/.style={draw, very thick},
            info/.style={
                draw,
                rounded corners=10pt,
                thick,
                fill=gray!10,
                align=center,
                inner sep=6pt,
                font=\small
            },
            info3/.style={
                draw,
                rounded corners=10pt,
                thick,
                fill=gray!10,
                align=center,
                inner sep=4pt,
                font=\scriptsize
            },
            placeholder/.style={
                draw,
                rounded corners=10pt,
                thick,
                fill=gray!10,
                align=center,
                inner sep=4pt,
                minimum width=2.7cm,
                minimum height=1.8cm,
                font=\Large
            },
            dots/.style={font=\Large}
        ]

            \node[info, text width=3.5cm] (root) at (0,0)
                {$\begin{array}{c}
                      v=\emptyset\\[1mm]
                      A_v=\{(1),(2)\}\\[1mm]
                      B_v=\{(3)\}
            \end{array}$};

            \node[info, text width=4.9cm] (left) at (-5.0,-3.3)
                {$\begin{array}{c}
                      v=(1)\\[1mm]
                      A_v=\{(1,1),(1,2),(1,3),(1,4)\}\\[1mm]
                      B_v=\{(1,5),(1,6)\}
            \end{array}$};

            \node[info, text width=4.9cm] (right) at (5.0,-3.3)
                {$\begin{array}{c}
                      v=(2)\\[1mm]
                      A_v=\{(2,1),(2,2),(2,3),(2,4)\}\\[1mm]
                      B_v=\{(2,5),(2,6)\}
            \end{array}$};

            \node[info3, text width=3.2cm] (l1) at (-9.8,-7)
                {$\begin{array}{c}
                      v=(1,1)\\[0.7mm]
                      A_v=\{(1,1,1),(1,1,2),\\
                      (1,1,3),(1,1,4)\}\\[0.7mm]
                      B_v=\{(1,1,5)\}
            \end{array}$};

            \node[info3, text width=3.2cm] (l2) at (-6.0,-7)
                {$\begin{array}{c}
                      v=(1,2)\\[0.7mm]
                      A_v=\{(1,2,1),(1,2,2),\\
                      (1,2,3),(1,2,4)\}\\[0.7mm]
                      B_v=\{(1,2,5)\}
            \end{array}$};

            \node[info3, text width=3.2cm] (l3) at (-2.2,-7)
                {$\begin{array}{c}
                      v=(1,3)\\[0.7mm]
                      A_v=\{(1,3,1),(1,3,2),\\
                      (1,3,3),(1,3,4)\}\\[0.7mm]
                      B_v=\{(1,3,5)\}
            \end{array}$};

            \node[info3, text width=3.2cm] (r1) at (1.8,-7)
                {$\begin{array}{c}
                      v=(2,1)\\[0.7mm]
                      A_v=\{(2,1,1),(2,1,2),\\
                      (2,1,3),(2,1,4)\}\\[0.7mm]
                      B_v=\{(2,1,5)\}
            \end{array}$};

            \node[placeholder] (r2) at (5.6,-7) {$\vdots$};
            \node[placeholder] (r3) at (9.4,-7) {$\vdots$};

            \draw[edge] (root.south) -- (left.north);
            \draw[edge] (root.south) -- (right.north);

            \draw[edge] (left.south) -- (l1.north);
            \draw[edge] (left.south)      -- (l2.north);
            \draw[edge] (left.south) -- (l3.north);

            \draw[edge] (right.south) -- (r1.north);
            \draw[edge] (right.south)      -- (r2.north);
            \draw[edge] (right.south) -- (r3.north);

            \node[dots] at (-9.8,-8.6) {$\vdots$};
            \node[dots] at (-6.0,-8.6) {$\vdots$};
            \node[dots] at (-2.2,-8.6) {$\vdots$};

            \node[dots] at (1.8,-8.6) {$\vdots$};
            \node[dots] at (5.6,-8.6) {$\vdots$};
            \node[dots] at (9.4,-8.6) {$\vdots$};
        \end{tikzpicture}
    }
    \caption{A partial view of the rooted tree from the proof of \cref{thm:real},
        with $(p_{1},q_{1}) = (1,1)$ at the root,
        $(p_{2}, q_{2}) = (1,2)$ at the second layer,
        and $(p_{3}, q_{3}) = (2,2)$ at the third layer.
        Note that these values of $p_{i}$ and $q_{i}$
        are chosen for illustrative purposes and do not correspond to the
        construction of $p_{i}$ and $q_{i}$ used in the proof of \cref{thm:real}.
        For each node $v$, $D_v = A_v \cup B_v$.
    }
    \label{fig:real-tree}
\end{figure}

We now describe the collection $\SC_{\beta}$.
The universe $U_{\beta} = \bigcup_{v \in T_{\beta}} D_v$ is simply the union of the sets at each node in the tree.
It is clear that $U_{\beta}$ is countable.
A language $L \in \SC_{\beta}$ will be partially parameterized by a ray in $T_{\beta}$.
Let $I_{\beta} = \{(v_1, v_2, v_3, \dots) \mid v_1 = \emptyset,
\text{ and $v_j$ is a child of $v_{j-1}$ for all $j \ge 2$}\}$
be the set of all infinite rays in $T_{\beta}$ that originate at the root.
Each ray $P \in I_{\beta}$ will now be associated with an infinite set of languages $E_P$ that we will describe now.
Each language $L \in E_P$, will contain only strings from $D_{v_i}$ where $v_i$ is contained in the ray $P$.
In particular, each language $L$ will contain all of the strings from each $B_{v_i}$ along with exactly $p_{i} - 1$
strings from each $A_{v_i}$.
The set $E_P$ consists of the languages formed from all possible ways of choosing exactly $p_{i} - 1$ strings from each
$A_{v_i}$.
Note that in total, each language contains exactly $q_i$ strings from each layer $D_{v_i}$---exactly
$q_{i} - p_{i} + 1$ strings from $B_{v_i}$, and $p_{i} - 1$ strings from $A_{v_i}$.
The final collection $\SC_{\beta} = \bigcup_{P \in I_{\beta}} E_P$
is simply the union of the sets of languages for each ray.

In a rough sense, the algorithm tries to identify the ray $P$ corresponding to the target language.
If the algorithm identifies that some node $v$ is part of the ray
(because the adversary output a string from the corresponding set $D_v$), then the algorithm can
safely generate from the entire set $B_v$ of size $q_i - p_i + 1$.
However, the adversary can delay revealing new nodes by outputting $q_i$ strings
at each layer $i$, before moving on to the next layer.
So in a very rough sense, the algorithm can generate correctly for $q_i - p_i$ out of $q_i$ time-steps
at each layer $i$, but can do no better.

\begin{lemma}
    \label{lem:real_algo}
    There exists an algorithm that $\beta$-generates in the limit for $\SC_{\beta}$.
\end{lemma}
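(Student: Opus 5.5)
The plan is to give an explicit algorithm that generates from the $B$-sets of nodes certified to lie on the target ray. Then I will show that the adversary can force errors only while it is outputting strings from the $A$-sets, plus one extra time-step per layer. The key structural fact is the analogue of \cref{lem:zero_contain}. Every $L \in E_P$ contains only strings of $D_{v_i}$ for $v_i \in P$, and contains all of $B_{v_i}$. So if some string of $D_v$ appears in the enumeration, then $v$ and all its ancestors lie on the target ray, and $B_u \subseteq K$ for every ancestor $u$ of $v$, including $v$ itself.

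\textbf{Algorithm.} At time $t$, let $v^\star_t$ be the deepest node $v$ such that some $x_{t'}$ with $t' \le t$ lies in $D_v$. All revealed nodes lie on one ray, so $v^\star_t$ is well defined and every revealed node is an ancestor of it. If some string of $\bigcup_{u \preceq v^\star_t} B_u$ (with $u$ ranging over ancestors of $v^\star_t$, inclusive) is not among $x_1,\dots,x_t$, output any such string. By the structural fact it is a correct unseen string of $K$. Otherwise, output an arbitrary string.

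\textbf{Counting errors.} Let $m_t = d(v^\star_t)$. A time $t$ can be an error only if every string of $B_{v_1},\dots,B_{v_{m_t}}$ has been enumerated by time $t$. Classify such a bad time $t$ by $x_t$.
\begin{itemize}
\item If $x_t$ is the first string revealed from a new deepest node, then $B_{v^\star_t}$ is entirely unseen except possibly $x_t$ itself. Since $|B_v| = q_i - p_i + 1 \ge 2$ when $p_i \le q_i$, some string remains, so $t$ is not bad.
\item If $x_t \in B_{v_i}$, then $x_t$ completes the exhaustion of the $B$-sets on the current path. This can happen at most once per value of $m_t$, i.e.\ at most $m_t$ times in total.
\item Otherwise $x_t$ is an $A$-string, and there are at most $a_t := \#\{t' \le t : x_{t'} \in \bigcup_i A_{v_i}\}$ such times.
\end{itemize}
Hence $|W_t| \le a_t + m_t$. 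Moreover, at a bad time $t_0$ the adversary has output all $S_B(m_{t_0}) := \sum_{i \le m_{t_0}} (q_i - p_i + 1)$ $B$-strings plus $a_{t_0}$ $A$-strings, so $t_0 \ge S_B(m_{t_0}) + a_{t_0}$. Also $a_{t_0} \le S_A(m_{t_0}) := \sum_{i \le m_{t_0}} (p_i - 1)$. For any $t$, let $t_0 \le t$ be the last bad time. Then
\[\frac{|W_t|}{t} \le \frac{a_{t_0} + m_{t_0}}{S_B(m_{t_0}) + a_{t_0}} \le \frac{S_A + m_{t_0}}{S_B + S_A} = \frac{\sum_{i \le m_{t_0}} p_i}{\sum_{i \le m_{t_0}} q_i + m_{t_0}} \le \frac{\sum_{i \le m_{t_0}} p_i}{\sum_{i \le m_{t_0}} q_i}.\]
The second inequality holds because the bound is increasing in $a$.

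\textbf{Finishing and main obstacle.} Since $K$ is infinite and each layer contributes only finitely many strings, $m_t \to \infty$. Either there are finitely many bad times, in which case the rate is $0$, or $m_{t_0} \to \infty$ along bad times. In the latter case, $p_i/q_i \le \beta$ gives $\sum_{i \le m} p_i / \sum_{i \le m} q_i \le \beta$ directly (a mediant bound), so $\limsup_t |W_t|/t \le \beta$.

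The step I expect to be delicate is the bookkeeping in the classification of bad times. I must make sure that the time at which a new layer is revealed, or at which the last $B$-string is exhausted, is charged correctly, so that the $+m$ term is absorbed. The computation above handles this by dropping the $+m$ in the denominator, and that is where the exact sizes $|B_v| = q_i - p_i + 1$ and $|A_v \cap L| = p_i - 1$ are used.
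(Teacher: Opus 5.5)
Your proof is correct, but it takes a different route from the paper in both the algorithm and the accounting.

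\textbf{The paper's route.} The paper's algorithm generates only from $B_{n(x_t)}$, the $B$-set of the node containing the current example. Its analysis is local. For each node $v$ at depth $i$, \cref{lem:real_vertex_bound} shows that the bad times with $x_t \in D_v$ make up at most a $p_i/q_i$ fraction of the times with $x_t \in D_v$. The reason is that the first $|B_v|-1$ visits to $D_v$ are always safe, and $K$ contains at most $q_i$ strings of $D_v$. Summing over nodes by a mediant bound then gives $|W_t|/t \le \beta$ at every $t$.

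\textbf{Your route.} You generate from every certified $B$-set on the path. So a bad time requires all $B$-strings up to depth $m_t$ to be exhausted. You then charge bad times globally: to $A$-string examples, plus one ``completion'' time per layer. Finally you compare this count against $t_0 \ge S_B(m_{t_0}) + a_{t_0}$.

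\textbf{What each approach buys.} The paper's argument is simpler and more robust. It needs no deepest-node bookkeeping, no case analysis for nodes revealed out of order, and no assumption $|B_v| \ge 2$. Your algorithm's bad times are a subset of the paper's, so the per-node bound applies verbatim to your algorithm. Your global count, however, does not transfer to the paper's algorithm. That algorithm can err while other certified $B$-strings remain unseen, so the bound $t_0 \ge S_B + a_{t_0}$ fails for it. What your route buys is transparency about tightness. Up to any bad time, the error fraction is at most $\sum_{i\le m}p_i/\sum_{i\le m}q_i$. This is exactly the ratio the adversary in \cref{lem:real_nexist} forces.

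\textbf{Two small slips.} Neither affects the conclusion.
\begin{itemize}
\item $|B_v| = q_i - p_i + 1 \ge 2$ requires $p_i < q_i$, not merely $p_i \le q_i$. This does hold whenever $\beta < 1$, since $p_i/q_i \le \beta$. The case $\beta = 1$ is trivial because $|W_t| \le t$.
\item $S_B(m) + S_A(m) = \sum_{i \le m} q_i$ exactly: the $+1$ per layer from the $B$-sets cancels the $-1$ per layer from the $A$-sets. So your ratio equals $\sum_{i\le m} p_i / \sum_{i \le m} q_i$ outright, and there is no extra $+m$ in the denominator to drop.
\end{itemize}
Relatedly, your finishing paragraph about $m_t \to \infty$ is unnecessary. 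The mediant bound already gives $|W_t|/t \le \beta$ for every $t$, which is the stronger $\sup$ statement the paper also remarks on.
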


The algorithm follows a similar idea as the algorithm in \cref{lem:zero_algo},
where the algorithm attempts to identify a set of strings that are safe to generate from.
Since a language $L_{P} \in \SC_{\beta}$ always contains every string in $B_v$
for nodes $v \in P$, if the adversary ever reveals a string from some set $D_v = A_v \cup B_v$,
then the entire set $B_v$ must be contained in the target language.
The set $B_v$ has size $q_{d(v)} - p_{d(v)} + 1$, implying that the algorithm
can generate correct strings from $B_v$ for at least $q_{d(v)} - p_{d(v)}$ time-steps.
In total, the target language contains exactly $q_{d(v)}$ strings from $D_v$,
allowing us to show that the algorithm generates incorrectly at most $p_{d(v)} / q_{d(v)}$
fraction of the time from $D_v$.
Since $p_{i} / q_{i} \le \beta$ for all $i$,
this implies that the algorithm $\beta$-generates in the limit.

For any string $x \in U_{\beta}$, we define $n(x)$ to be the unique node $v \in T_{\beta}$
such that $x \in D_v$.

\begin{lemma}
    \label{lem:real_contain}
    For any language $L \in \SC_{\beta}$ and string $x$, if $x \in L$, then $B_{n(x)} \subseteq L$.
\end{lemma}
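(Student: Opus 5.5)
The argument mirrors that of \cref{lem:zero_contain} and is an unfolding of the definition of $\SC_{\beta}$. Fix an arbitrary $L \in \SC_{\beta}$ and a string $x \in L$. By definition of $\SC_{\beta} = \bigcup_{P \in I_{\beta}} E_P$, there is some ray $P = (v_1, v_2, \dots) \in I_{\beta}$ with $L \in E_P$. I will first locate $n(x)$ on this ray, and then read off the inclusion $B_{n(x)} \subseteq L$ from the description of the languages in $E_P$.

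For the first step, every language in $E_P$ contains only strings from the sets $D_{v_i}$ with $v_i \in P$. Hence $x \in D_{v_i}$ for some $i$. The sets $D_v$ are pairwise disjoint, since each $D_v$ consists of the tuples obtained by extending the label $v$ by one coordinate, and distinct nodes have distinct labels. It follows that $n(x) = v_i$ is the unique such node, and so $n(x) \in P$.

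For the second step, every language in $E_P$ contains all strings of $B_{v_j}$ for every node $v_j$ on the ray $P$. The only freedom in choosing the language is the choice of $p_j - 1$ strings from each $A_{v_j}$. Applying this with $v_j = n(x)$ gives $B_{n(x)} \subseteq L$.

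There is no real obstacle here. The only point requiring care is the disjointness of the sets $D_v$, which guarantees that $n(x)$ is well defined and must lie on $P$. As for edge cases, when $p_i = 1$ the language takes no strings from $A_{v_i}$, but it still contains all of $B_{v_i}$, so the argument is unaffected. I also note that the lemma says nothing about $A_{n(x)}$, which is exactly why the algorithm can safely generate only from $B_{n(x)}$.
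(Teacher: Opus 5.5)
Your proof is correct and follows the paper's argument exactly. You locate $n(x)$ on the ray $P$ underlying $L$, using that $L$ only contains strings from the sets $D_{v_i}$ with $v_i \in P$, and then use that every language in $E_P$ contains all of $B_v$ for each $v \in P$; your remark on the pairwise disjointness of the sets $D_v$, which the paper leaves implicit, is correct.
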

\begin{proof}
    Let $P$ be the infinite ray that corresponds to the language $L$.
    Since $L$ can only contain strings from sets $D_v$ where $v \in P$,
    it must be that $n(x) \in P$.
    Because $L$ contains the set $B_v$ for every $v \in P$,
    this implies that $B_{n(x)} \subseteq L$.
\end{proof}

We now formally define the algorithm that $\beta$-generates in the limit for $\SC_{\beta}$.
At some time $t$, if there exists a string in $B_{n(x_t)} \setminus \{x_1, \dots, x_t\}$, the algorithm outputs
$z_t$ to be any such string.
Otherwise, we set $z_t$ to be an arbitrary string of $U_{\beta}$.
By \cref{lem:real_contain}, if the algorithm outputs a string from $B_{n(x_t)} \setminus \{x_1, \dots, x_t\}$
at some time $t$, then the output $z_t$ must be a correct unseen string from the target language.
Thus, the algorithm outputs an incorrect string at some time $t$
only if 
\begin{equation}
    B_{n(x_t)} \subseteq \{x_1, \dots, x_t\}.  \label{eq:tbad}
\end{equation}
We say $t$ is bad if (\ref{eq:tbad}) holds. 
For any node $v \in T_{\beta}$, let $F_{t}(v) = \{t' \text{ bad} \mid x_{t'} \in D_v\}$
be the set of bad $t$ when the adversary also outputs a string from $D_v$.
Also let $H_t(v) = \{t' \le t \mid x_{t'} \in D_v\}$ be the set of times
at which the adversary outputs any string from $D_v$.
Since the algorithm only outputs incorrectly at times $t$ bad, the upper density of wrong output timestamps is at most
\begin{equation}
    \label{eqn:real_bound}
     \limsup_{t \to \infty} \frac{\sum_{v \in T_{\beta}} \abs{F_t(v)}}{\sum_{v \in T_{\beta}} \abs{H_t(v)}}.
\end{equation}
Thus, it suffices to bound the value of $\abs{F_t(v)} / \abs{H_t(v)}$ for every $v$.

\begin{lemma}
    \label{lem:real_vertex_bound}
    For every time $t$ and $v \in T_{\beta}$, either $\abs{F_t(v)} = 0$
    or $\abs{F_t(v)} / \abs{H_t(v)} \le \beta$.
\end{lemma}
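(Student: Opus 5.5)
The plan is a counting argument at a single node. Fix $v$ at depth $i = d(v)$, and read $F_t(v)$ as the set of bad times $t' \le t$ with $x_{t'} \in D_v$, so that $F_t(v) \subseteq H_t(v)$.

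First I dispose of the trivial case. If $v$ is not on the ray $P$ of the target language $K$, then $K \cap D_v = \emptyset$. The adversary then never outputs a string of $D_v$, so $H_t(v)$ and $F_t(v)$ are both empty and $\abs{F_t(v)} = 0$. From now on I assume $v \in P$. In that case $K$ contains exactly $q_i$ strings of $D_v$: all $q_i - p_i + 1$ strings of $B_v$, and $p_i - 1$ strings of $A_v$. Since the adversary enumerates distinct strings, $\abs{H_t(v)} \le q_i$ for every $t$.

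Next I identify when a time in $H_t(v)$ can be bad. Suppose $t' \in H_t(v)$, so $x_{t'} \in D_v$ and $n(x_{t'}) = v$. Then $t'$ is bad exactly when $B_v \subseteq \{x_1, \dots, x_{t'}\}$. Let $s$ be the time at which the last string of $B_v$ is enumerated. Every $t' \in H_t(v)$ with $t' < s$ is not bad. Among such times are the enumeration times of the other $q_i - p_i$ strings of $B_v$. Hence if $F_t(v) \neq \emptyset$, then $s \le t$, and
\[
  \abs{H_t(v)} \;\ge\; (q_i - p_i) + \abs{F_t(v)}.
\]
Here the $q_i - p_i$ earlier $B_v$-times are disjoint from the bad times, and all bad times are at least $s$.

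Combining this with $\abs{H_t(v)} \le q_i$ gives $f := \abs{F_t(v)} \le p_i$. Therefore
\[
  \frac{\abs{F_t(v)}}{\abs{H_t(v)}} \;\le\; \frac{f}{q_i - p_i + f}.
\]
The right-hand side is increasing in $f$ because $q_i - p_i \ge 0$. At $f = p_i$ it equals $p_i/q_i$, which is at most $\beta$ by the choice of the sequences. Hence the ratio is at most $\beta$.

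The only delicate point is the disjointness used in the displayed lower bound on $\abs{H_t(v)}$: the non-bad times before $s$ must be counted separately from the bad times. Once that is in place, the rest is the monotonicity of $f/(c+f)$.
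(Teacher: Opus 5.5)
Your proof is correct and matches the paper's argument: a nonempty $F_t(v)$ forces $\abs{H_t(v)} \ge \abs{F_t(v)} + q_i - p_i$, and combining this with $\abs{H_t(v)} \le q_i$ gives the ratio bound $p_i/q_i \le \beta$. The differences are cosmetic: for the non-bad witnesses you use the enumeration times of the other $q_i - p_i$ strings of $B_v$ where the paper uses the first $\abs{B_v}-1$ times in $H_t(v)$, and your off-ray case split and the monotonicity of $f/(c+f)$ just spell out the same step in more detail.
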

\begin{proof}
    Let $v$ be a node at depth $i$.
    Intuitively, the algorithm can generate correctly at least the first $\abs{B_v} - 1$
    times the adversary outputs from $D_v$, implying that the algorithm generates incorrectly at most
    $(q_i - (\abs{B_v} - 1))/q_i = p_{i}/q_{i} \le \beta$ fraction of the time from $D_v$.

    Assume that $\abs{F_t(v)} \ge 1$.
    This implies that the adversary has already enumerated every string in $B_v$.
    For each of the first $\abs{B_v} - 1$ times that the adversary outputs a string from $D_v$,
    it cannot have been that $B_{n(x_{t'})} \subseteq \{x_1, \dots, x_{t'}\}$.
    Thus, $\abs{H_t(v)} \ge \abs{F_t(v)} + \abs{B_v} - 1 = \abs{F_t(v)} + q_{i} - p_{i}$.
    Since every language in $\SC_{\beta}$ contains at most $q_i$ strings from $D_v$,
    we also have $\abs{H_t(v)} \le q_i$.
    Thus, 
    $\abs{F_t(v)} / \abs{H_t(v)}
    \leq p_{i} / q_{i} \le \beta$.
\end{proof}

\begin{proof}[Proof of~\cref{lem:real_algo}]
    Combining \cref{eqn:real_bound,lem:real_vertex_bound},
    we see that $\limsup_{t \to \infty} \abs{W_t}/t \le \beta$ as desired.
    In fact, we note that our proof shows the stronger statement that
    $\sup_{t \to \infty} \abs{W_t}/t \le \beta$.
\end{proof}

We now prove the other direction of \cref{thm:real}.

\begin{lemma}
    \label{lem:real_nexist}
    There does not exist an algorithm that $\beta'$-generates in the limit for $C_{\beta}$
    for any $\beta' < \beta$.
\end{lemma}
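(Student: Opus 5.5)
The plan is to build an adaptive adversary, in the style of \cref{lem:zero_nexist}, that walks down a ray $P=(v_1,v_2,\dots)$ of $T_\beta$ one layer per stage. At the end of stage $i$ it will have enumerated exactly the $q_i$ strings that the target language contains from $D_{v_i}$, and it will force at least $p_i$ incorrect outputs among those $q_i$ time-steps. Summing over stages, the error fraction at the end of stage $i$ is at least $\sum_{j\le i}p_j/\sum_{j\le i}q_j$. Since $p_j/q_j\to\beta$ and $\sum_j q_j=\infty$ (each $q_j\ge 1$), this weighted average converges to $\beta$. Hence $\limsup_t \abs{W_t}/t\ge\beta>\beta'$ for every algorithm.

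Stage $i$ has length $q_i$ and two phases.
\begin{enumerate}
\item The adversary first outputs all $q_i-p_i+1$ strings of $B_{v_i}$ in any order.
\item It then outputs $p_i-1$ strings of $A_{v_i}$, chosen adaptively. Each time it must reveal an $A$-string, it picks one not previously output by the algorithm during this stage. This is always possible because $\abs{A_{v_i}}=2q_i$ exceeds the at most $q_i$ algorithm outputs plus the at most $p_i-1$ earlier revealed strings.
\end{enumerate}
Once stage $i$ ends, the adversary chooses $v_{i+1}$ to be a child of $v_i$ whose subtree contains none of the $q_i$ algorithm outputs $z_t$ from stage $i$. This is possible because $v_i$ has $q_i+1$ children, exactly as in \cref{lem:zero_nexist}. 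The target language $K$ is the set of all strings enumerated. It contains all of $B_{v_j}$ and exactly $p_j-1$ strings of $A_{v_j}$ for every $j$, so $K\in E_P\subseteq\SC_\beta$, and the enumeration is distinct and complete.

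The key step is to verify that in stage $i$, only the first $q_i-p_i$ time-steps can possibly be correct. I would check each location an output $z_t$ made during stage $i$ could lie in.
\begin{itemize}
\item \emph{A node off the ray $P$:} then $z_t\notin K$. This covers nodes in the subtree of an unchosen child of $v_i$, by the choice of $v_{i+1}$.
\item \emph{A node $v_j$ with $j>i$:} this also cannot happen, by the same choice of $v_{i+1}$.
\item \emph{An ancestor $v_j$ with $j<i$:} $z_t$ is either an already-revealed string or an $A_{v_j}$-string outside $K$, since the $A$-part of $K$ at $v_j$ was fully revealed during stage $j$.
\item \emph{A string of $A_{v_i}$:} it is either already revealed, or never revealed by the adaptive choice, hence not in $K$.
\item \emph{A string of $B_{v_i}$:} it is correct only if still unseen. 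That requires $B_{v_i}\not\subseteq\{x_1,\dots,x_t\}$, which holds only during the first $\abs{B_{v_i}}-1=q_i-p_i$ steps of the stage.
\end{itemize}
Thus at least $p_i$ of the $q_i$ steps in stage $i$ are errors.

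I expect the main obstacle to be the bookkeeping in this case analysis rather than any deep step. One must order the phases correctly: $B$ first, so that the algorithm gains nothing from $B$ during the $A$-phase. One must also make sure the adaptive $A$-choices and the child choice jointly neutralize every possible output, including outputs made before a string's status in $K$ is committed. The final limit computation is a standard Ces\`aro/Stolz-type argument: given $\varepsilon>0$, $p_j\ge(\beta-\varepsilon)q_j$ for all large $j$, and the finitely many early terms are swamped because $\sum q_j\to\infty$.
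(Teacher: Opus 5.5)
Your proposal is correct and follows essentially the same route as the paper. You reveal $B_{v_i}$ first, then $p_i-1$ strings of $A_{v_i}$ that avoid the algorithm's outputs, choose one of the $q_i+1$ children of $v_i$ whose subtree avoids the stage's outputs, conclude that the last $p_i$ of the $q_i$ steps of each stage are errors, and finish with $\sum_{j\le i}p_j/\sum_{j\le i}q_j\to\beta$. The only cosmetic difference is that you pick $v_{i+1}$ to avoid just the stage-$i$ outputs rather than all outputs so far; this is equivalent because the subtrees are nested, and your per-stage case analysis only needs the stage-$i$ outputs excluded.
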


\begin{proof}
    Fix an arbitrary $\beta'< \beta$ and assume for contradiction that
    there exists an algorithm $G$ that $\beta'$-generates in the limit for $\SC_{\beta}$.
    We will inductively define an adversary strategy and corresponding target language $K \in \SC_{\beta}$
    which ensures that $G$ does not $\beta'$-generate in the limit for any $\beta' < \beta$.
    The adversary construction will follow a similar idea to the proof of \cref{lem:zero_nexist},
    where we inductively build a ray starting from the root.

    At some stage $i$, inductively assume that the adversary has chosen some path
    of nodes $v_1$, \dots, $v_{i}$ such that $v_1 = \emptyset$ is the root,
    and $v_{j}$ is the child of $v_{j-1}$ for all $j \ge 2$.
    For the base case of $i = 1$, set $v_1 = \emptyset$.
    Also assume that inductively, the adversary has output every string from $B_{v_j}$
    and exactly $p_{j} - 1$ strings from $A_{v_j}$ for every $j < i$
    at the beginning of stage $i$.
    Finally, also inductively assume that by the beginning of stage $i$,
    the algorithm has output no strings from the subtree of $v_i$.

    During stage $i$, the adversary will output every string from $B_{v_i}$
    and $p_{i} - 1$ strings from $A_{v_i}$.
    Let stage $i$ start at time $t_i$.
    For the first $q_{i} - p_{i} + 1$ time-steps
    when $t \in [t_i, t_i + q_{i} - p_{i}]$,
    the adversary enumerates the strings of $B_{v_i}$ in arbitrary order.
    Then, for each of the next $p_{i} - 1$ time-steps
    when $t \in [t_i + q_{i} - p_{i} + 1, t_i + q_{i} - 1]$,
    the adversary outputs an arbitrary unrevealed string from $A_{v_i}$.
    To see that such an unrevealed string exists at each time,
    note that the adversary outputs $q_{i}$ total strings in stage $i$.
    Thus, the adversary and algorithm collectively output at most $2q_{i}$
    strings during stage $i$.
    Since the algorithm has not output any strings from $D_{v_i}$ before the start of stage $i$,
    and we have $\abs{A_{v_i}} = 2q_{i}$, this implies that an unrevealed string from $A_{v_i}$
    exists at each time.

    Finally, we choose $v_{i+1}$ to be a child of $v_i$ such that the subtree of $v_{i+1}$
    does not contain any strings output by the algorithm up to the end of stage $i$.
    Since we assumed inductively that the algorithm did not output any strings from the subtree of $v_i$ before the beginning of stage $i$, the algorithm can output at most $q_i$ strings from the subtree of $v_i$ by the end of stage $i$.
    Since $v_i$ has $q_{i} + 1$ children, there must be a child of $v_i$
    that does not contain any strings output by the algorithm, fulfilling the inductive hypothesis.

    We set $K = \{x_1, x_2, \dots\}$.
    Note that the adversary strategy constructs an infinite ray $v_1$, $v_2$, \dots\ where
    for each $v_{i}$, the language $K$ contains all of $B_{v_i}$ along with exactly $p_{i} - 1$
    strings from $A_{v_i}$.
    The collection $\SC_{\beta}$ exactly contains all languages of the above form, so $K \in \SC_{\beta}$.

    We now argue that in each stage $i$, at least $p_{i}$ strings output by the algorithm are incorrect.
    In particular, for each $t' \in [t_i + q_{i} - p_{i}, t_i + q_{i} - 1]$,
    the output $z_{t'}$ must be incorrect.
    Let $t'$ be an arbitrary time in $[t_i + q_{i} - p_{i}, t_i + q_{i} - 1]$.
    If $z_{t'}$ is not contained in the subtree of $v_i$, then it must either
    have already been output by the adversary, or does not appear in $K$.
    If $z_{t'} \in B_{v_i}$, then note that the adversary has already output every string from $B_{v_i}$
    by time $t_i + q_{i} - p_{i}$.
    Thus $z_{t'}$ cannot be a correct unseen string from $K$.

    If $z_{t'} \in A_{v_i}$, the adversary is constructed so that it never outputs $z_{t'}$
    at any future time.
    Thus $z_{t'}$ cannot be a correct unseen string from $K$.
    Finally, if $z_{t'}$ is contained in the subtree of some child of $v_{i}$,
    the adversary chooses $v_{i+1}$ so that no algorithm outputs are contained in $v_{i+1}$.
    This again implies $z_{t'} \notin K$.
    Thus, for each $t' \in [t_i + q_{i} - p_{i}, t_i + q_{i} - 1]$
    the algorithm's output $z_{t'}$ cannot be a correct unseen string from $K$.
    Since the algorithm outputs $q_{i}$ strings in total during stage $i$, we have
    \[\limsup_{t \to \infty} \frac{\abs{W_t}}{t} \ge
    \limsup_{i \to \infty} \frac{\sum_{j < i} p_{j}}{\sum_{j < i} q_{j}}.\]
    The sequence $p_{i}/q_{i}$ converges to $\beta$ from below,
    implying that $(\sum_{j < i} p_{j})/(\sum_{j < i} q_{j})$
    also converges to $\beta$.
    Thus, $\limsup_{t \to \infty} \abs{W_t}/t \ge \beta$, so $G$
    does not $\beta'$-generate in the limit for any $\beta' < \beta$.
\end{proof}

\subsection{Hierarchy of Generation with Breadth}

We now explore the tradeoff between the rate of hallucination and the breadth of language generation.
Our first result shows that there is a collection where allowing arbitrary amounts of hallucination
does not improve the breadth of generation.

\densityzero*
\begin{proof}
    Let $A = \{2^i \mid i \in \NN\}$ and the uncountable collection of languages $\SC$ be all the supersets of $A$ in $\mathbb{N}$.
    Clearly $\SC$ can be generated in the limit with finite error (in fact, with no errors at all)
    because the algorithm can simply output an unrevealed string from $A$ at each time-step.
    Now fix an arbitrary $\beta \le 1$ and let $G$ be an algorithm
    that $\beta$-generates in the limit.
    Consider the following inductive adversary strategy.
    When $t$ is odd, output the smallest unrevealed string in $\NN \setminus A$,
    and when $t$ is even, output the smallest string in $A$
    not yet output by the adversary.
    Finally, let the target language simply be $K = \{x_1, x_2, \dots\}$.

    Let the set of strings output by the algorithm be $O = \{z_1, z_2, \dots\}$.
    It is clear that $K = (\NN \setminus O) \cup A$, so $K \in \SC$.
    Additionally, $(O \cap K) \subseteq A$,
    so it suffices to show that $A$ has upper density $0$ in $K$.
    We claim that $K$ has lower density at least $1/4$ in $\NN \setminus A$.
    This is because at every odd time $t$, the adversary outputs
    the smallest string of $\NN \setminus A$ not output by the adversary or algorithm.
    Between subsequent odd times $t$ and $t+2$, there are at most $3$ strings
    output by the adversary or algorithm -- $z_{t}$, $x_{t+1}$, and $z_{t+1}$.
    This implies that the set $\{x_1, x_3, x_5, \dots\}$ has lower density at least $1/4$
    in $\NN \setminus A$.
    Since $K$ has lower density at least $1/4$ in $\NN \setminus A$,
    and $A$ has density $0$ in $\NN$, this implies that $A$ has upper density $0$ in $K$.
    Thus, $G$ generates with upper density at most $0$.
\end{proof}

We next establish a strict hierarchy of generation at every rate of hallucination
and breadth of generation.

\densityhierarchy*

Fix an arbitrary $\beta \in (0, 1]$ and $\alpha \in [0, 1/2]$.
The collection $\SC_{\beta, \alpha}$ will be very similar to the collection $\SC_{\beta}$
used in the proof of \cref{thm:real}.
We again define $\{p_i\}$ and $\{q_i\}$ to be any two sequences that converge to $\beta$
from below, with the additional restriction that $q_i - p_i \ge 2$ for all $i \in \NN$.
The tree $T_{\beta}$ is defined identically to the tree in the proof of \cref{thm:real}.
At each node $v$ of depth $i$, the set $B_v$ that is common to all languages passing through $v$
is still defined to be a set of $q_{i} - p_{i} + 1$ strings.
For notational ease, we will write $B_v = \{(v, j) \mid j \in [q_{i} - p_{i} + 1]\}$.
The only true difference comes in the definition of the set $A_v$.
For some node $v$ at depth $i$, we will think of $A_v$ as containing $p_i - 1$ distinct bins,
each containing exactly $q_i + 1$ strings.
Formally, we have $A_v = \{(v, a, b) \mid a \in [p_i - 1], b \in [q_i + 1]\}$
and the $j$-th bin of $A_v$ is the set $A_{v, j} = \{(v, j, b) \mid b \in [q_i + 1]\}$.
We then let $D_v = A_v \cup B_v$, and $U = \bigcup_{v \in T_{\beta}} D_v$.

Each ray $P \in I_{\beta}$ will again correspond to a set of languages $E_P$.
As before, each language $L \in E_P$ will contain only strings from $D_{v_i}$ where $v_i$
is contained in the ray $P$.
In particular, each language $L$ will contain all of the strings from each $B_{v_i}$ along with exactly $p_{i} - 1$
strings from each $A_{v_i}$, where we choose exactly one string from each bin of $A_{v_i}$.
The set $E_P$ consists of the languages formed from all possible ways of
choosing exactly one string from each bin of $A_{v_i}$.
As before, each language contains exactly $q_i$ strings from each layer $D_{v_i}$---exactly
$q_{i} - p_{i} + 1$ strings from $B_{v_i}$, and $p_{i} - 1$ strings from $A_{v_i}$.
The final collection $\SC_{\beta, \alpha} = \bigcup_{P \in I_{\beta}} E_P$
is simply the union of the sets of languages for each ray.

Now recall that the notion of breadth of generation depends on an ordering of the strings of the universe $U$.
We define the linear ordering of $U$ as follows. 
First partition $\mathbb{N}$ into three sets $H$, $R$, and $Y$
such that $H$ has upper and lower density $2\alpha$, $R$ has upper and lower density $0$, and $Y$ has upper and lower density $1 - 2 \alpha$.
We label the elements in each set in increasing order,
i.e., $H = \{h_1 < h_2 < \dotsm\}$, $R = \{r_1 < r_2 < \dotsm\}$, and $Y = \{y_1 < y_2 < \dotsm\}$.

We define the linear ordering of $U$ via a (non-injective) mapping $f \colon U \to \NN$, where we say for
strings $s, t \in U$ that $s \prec t$ if $f(s) < f(t)$.
Two strings with $f(s) = f(t)$ are incomparable, but our proof will only ever compare strings where $f(s) \neq f(t)$.
At every node $v$, the function $f$ will map exactly two strings from $B_{v}$ into the density $2 \alpha$ set $H$.
The rest of the strings from $B_v$ will be mapped into the density $0$ set $R$.
Finally, the strings from $A_v$ will be mapped into the set $Y$ while ensuring that all strings
within the same box are mapped to the same element of $Y$.
The mapping is layer-order-preserving in $H$, and similarly in $R$ and $Y$:
for any $i<i'$, every image of a string in layer $i$ precedes every image of a string in layer $i'$ in $H$.
Furthermore, there are no gaps in that the image of the strings at a node $v$
immediately follow the image of the strings at $v$'s parent.

Intuitively, this mapping will guarantee that the algorithm can correctly output exactly one out of the two strings
mapped into $H$ at each level, ensuring that the algorithm generates with lower density $2\alpha/2 = \alpha$.
However, the algorithm can do no better because it cannot generate any correct strings from the set $A_{v}$,
and the remaining strings in $B_v$ that are mapped to $R$ do not contribute to the lower density of generation.
The argument that $\SC_{\beta, \alpha}$ can be $\beta$-generated, but not $\beta'$-generated for any $\beta' < \beta$
follows the same idea as in \cref{thm:real}.

We first show that there exists an algorithm that $\beta$-generates in the limit with lower density $\alpha$.

\begin{lemma}
    \label{lem:density_algo}
    There exists an algorithm that $\beta$-generates in the limit with lower density $\alpha$ for $\SC_{\beta, \alpha}$.
\end{lemma}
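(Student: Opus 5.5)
I plan to reuse the algorithm from \cref{lem:real_algo}, with one adjustment that governs which string of $B_v$ gets output, and to analyze the two requirements separately. The algorithm is as follows. At time $t$, let $v = n(x_t)$, where now $n(x)$ is the node $v$ with $x \in D_v$. If $B_v \setminus \{x_1,\dots,x_t\}$ is nonempty, output a string from it, giving priority to the two strings of $B_v$ mapped into $H$. Otherwise output an arbitrary string.

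\emph{Hallucination rate.} The analogue of \cref{lem:real_contain} still holds: every language through $v$ contains all of $B_v$. So every output from $B_{n(x_t)}\setminus\{x_1,\dots,x_t\}$ is correct, and a time-step is bad only if $B_{n(x_t)}$ has already been fully enumerated. The proof of \cref{lem:real_vertex_bound} then goes through verbatim. The first $|B_v|-1 = q_i-p_i$ adversary strings from $D_v$ cannot create bad steps, and the target contains exactly $q_i$ strings of $D_v$. Hence $|F_t(v)|/|H_t(v)| \le p_i/q_i \le \beta$, and combining as in (\ref{eqn:real_bound}) gives $\limsup |W_t|/t \le \beta$.

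\emph{Lower density $\alpha$.} Let $v_1,v_2,\dots$ be the ray of the target $K$, and let $h_v^1,h_v^2$ denote the two strings of $B_v$ mapped to $H$. The key claim is that for every $i$, at least one of $h_{v_i}^1,h_{v_i}^2$ lies in $O(E,G)$. Because $K$ is infinite and contains all of $B_{v_i}$, the adversary eventually outputs some string of $D_{v_i}$. At the first such time, $|B_{v_i}| = q_i-p_i+1 \ge 3$ and at most one string of $B_{v_i}$ has been revealed. The priority rule therefore makes the algorithm output an unseen $H$-string of $B_{v_i}$, which stays unseen at that time, so it lands in $O$.

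I then need to convert this claim into density within $K$. Here I use the structure of $f$:
\begin{itemize}
\item $K$ contains exactly two strings per layer mapped into $H$, namely $B_{v_i}\cap f^{-1}(H)$.
\item The $H$-images are layer-order-preserving with no gaps, so $K$'s $H$-strings occupy $h_1,h_2,\dots$ consecutively in $H$.
\item $K$'s $A$-strings map into $Y$, one element per bin, and again with no gaps, since each bin's strings share an image. So $K$ hits $Y$ in exactly the elements of $Y$ in order.
\item $K$'s remaining $B$-strings go to $R$, which has density $0$.
\end{itemize}
Thus, listing $K$ in $f$-order, the first $n$ strings of $K$ correspond essentially to an initial segment of $\NN$ restricted to $H\cup Y\cup R$, up to $R$ and the gaps at layer boundaries.

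Within that segment, the fraction coming from $H$ tends to $2\alpha$, and at least half of those (one per layer) are in $O$. Hence $\liminf |O\cap\{\ell_1,\dots,\ell_n\}|/n \ge \alpha$. A layer boundary can cut a pair in half, but this contributes $O(1)$ error per layer. Since the number of layers up to position $n$ is $o(n)$ (each layer contributes $q_i\ge 3$ strings, and $H$ has positive density when $\alpha>0$), this error vanishes. For $\alpha=0$ the statement is trivial.

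\emph{Main obstacle.} I expect the hardest step to be this last density bookkeeping. One must check that the ordering induced on $K$ through the non-injective $f$ makes the count of $K$'s first $n$ strings comparable to $|[m]\cap(H\cup Y\cup R)|$ for the matching $m$. One must also check that $Y$ contributes exactly one element per bin with no gaps, so that $H$'s share within $K$ is exactly $2\alpha$ asymptotically rather than being diluted by unused $Y$ elements. I would handle this by fixing $m=f(\ell_n)$ and bounding $n$ above and below by $|[m]\cap H| + |[m]\cap Y| + |[m]\cap R| \pm O(1)$.
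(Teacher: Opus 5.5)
Your proposal follows the paper's proof. The algorithm is the same: serve from $B_{n(x_t)}\setminus\{x_1,\dots,x_t\}$, preferring the two $H$-strings. The hallucination bound reuses \cref{lem:real_vertex_bound} verbatim, as the paper does. The key claim is also the same: at the adversary's first visit to $D_{v_i}$, at most one of the two $H$-strings of $B_{v_i}$ has been enumerated, so one of them lands in $O$. (The bound $|B_{v_i}|\ge 3$ is not needed for this.) Finally, you convert to density through $f$, which is injective on $K$ with $f(K)=\NN$ by the no-gap property. So the density of $O$ in $K$ equals the density of $f(O\cap K)$ in $\NN$.

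One step in your density bookkeeping is false, although you do not actually need it. You absorb an ``$O(1)$ error per layer'' by claiming that the number of layers up to position $n$ is $o(n)$. That claim is wrong. Layer $i$ contributes exactly the two elements $h_{2i-1},h_{2i}$ of $H$, and $H$ has density $2\alpha$. So about $\alpha n$ layers have their $H$-pair among the first $n$ strings of $K$, which is linearly many when $\alpha>0$. Your own stated reason, the positive density of $H$, points the opposite way. An $O(1)$-per-layer error would therefore not vanish.

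The fix is that there is no per-layer error. The $H$-pairs occupy consecutive positions, so among $h_1,\dots,h_k$ with $k=|H\cap[n]|$, only the last pair can be split by the cutoff. Hence $|f(O\cap K)\cap[n]|\ge\lfloor k/2\rfloor$, and $\lfloor k/2\rfloor/n\to\alpha$. There are likewise no ``gaps at layer boundaries'' to handle, since $f(K)=\NN$ exactly. With this correction your argument is complete and coincides with the paper's.
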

\begin{proof}
    The algorithm will be almost identical to the algorithm described in the proof of \cref{lem:real_algo},
    except we first choose to output strings that map to elements in $H$.
    Recall that for any string $x \in U_{\beta}$,
    we define $n(x)$ to be the unique node $v \in T$ such that $x \in D_v$.
    At some time $t$, if there exists a string $y$ in $B_{n(x_t)} \setminus \{x_1, \dots, x_t\}$
    such that $f(y) \in H$, the algorithm outputs $z_t$ to be any such string.
    Otherwise, we try setting $z_t$ to be any string in $B_{n(x_t)} \setminus \{x_1, \dots, x_t\}$.
    If $B_{n(x_t)} \subseteq \{x_1, \dots, x_t\}$, we set $z_t$ to be an arbitrary string of $U_{\beta}$.

    Once an adversary outputs a string from some set $D_v$, it is still true that
    the entire set $B_v$ must be contained in the target language.
    Thus, the argument in the proof of \cref{lem:real_algo} directly shows
    that the algorithm $\beta$-generates in the limit.

    We now show that the algorithm generates with lower density $\alpha$.
    Let $K$ be the target language and $P = (v_1, v_2, \dots)$ be the ray corresponding to $K$.
    Also let $O = \{z_1, z_2, \dots\}$ be the total set of algorithm outputs,
    and $f(O)$ and $f(K)$ be the images of $O$ and $K$ under $f$.
    The lower density of the algorithm
    is equal to the lower density of $f(O)$ in $f(K)$.
    By the property that the image of strings at a node $v$ immediately follow the image of strings at $v$'s parent, we have that $f(K) = \NN$.
    Thus, we wish to find the lower density of $f(O)$ in $\NN$.
    Since each set $B_{v_i}$ contains two strings that map to $H$, and the adversary
    must eventually enumerate a string from each $D_{v_i}$,
    the algorithm's set of outputs $O$ is guaranteed to contain at least one string
    that maps to an element of $H$ from each set $B_{v_i}$.
    Because $H$ has lower density $2 \alpha$ in $\NN$, this implies $f(O)$ has lower density at least $\alpha$ in $\NN$,
    implying that the algorithm generates with lower density $\alpha$.
\end{proof}

\begin{lemma}
    \label{lem:density_beta_nexist}
    There does not exist an algorithm that $\beta'$-generates in the limit for $C_{\beta, \alpha}$
    for any $\beta' < \beta$ (regardless of the breadth of generation).
\end{lemma}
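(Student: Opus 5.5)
We will mirror the adversary construction from the proof of \cref{lem:real_nexist}, adapting only the step where the adversary picks strings from $A_{v_i}$ so that it respects the bin structure. Fix $\beta' < \beta$ and an arbitrary algorithm $G$. The adversary builds a ray $v_1, v_2, \dots$ in $T_\beta$ stage by stage. We maintain the invariant that at the start of stage $i$ the algorithm has output no string from the subtree of $v_i$. In stage $i$, which starts at time $t_i$, the adversary first enumerates all $q_i - p_i + 1$ strings of $B_{v_i}$ in arbitrary order. It then spends the next $p_i - 1$ time-steps on the bins $A_{v_i,1}, \dots, A_{v_i,p_i-1}$ in turn. At step $j$ of this phase it outputs a string of $A_{v_i,j}$ that the algorithm has not yet output.

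Such a string always exists. Within stage $i$ the algorithm outputs only $q_i$ strings, and by the invariant none of them came from $D_{v_i}$ earlier. Each bin has $q_i + 1$ strings, so at least one string in $A_{v_i,j}$ has never been output by $G$. At the end of stage $i$, the algorithm has output at most $q_i$ strings in the subtree of $v_i$, while $v_i$ has $q_i + 1$ children. We therefore choose $v_{i+1}$ to be a child whose subtree contains none of these outputs, which restores the invariant. The target is $K = \{x_1, x_2, \dots\}$. It contains all of $B_{v_i}$ and exactly one string from each bin of $A_{v_i}$ along the ray, so $K \in E_P \subseteq \SC_{\beta,\alpha}$.

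Next we show that for every $t' \in [t_i + q_i - p_i, t_i + q_i - 1]$ the output $z_{t'}$ is not a correct unseen string of $K$. There are $p_i$ such times, and we split into cases according to where $z_{t'}$ lies.
\begin{itemize}
\item If $z_{t'}$ is outside the subtree of $v_i$, the adversary never outputs it later: either it has already been enumerated or it is not in $K$.
\item If $z_{t'} \in B_{v_i}$, it has already been enumerated by time $t_i + q_i - p_i$.
\item If $z_{t'}$ lies in some bin $A_{v_i,j}$, there are two subcases. If that bin was already handled, the adversary never outputs another string from it, so $z_{t'} \in K$ only if it was already seen. If the bin is handled later, the adversary deliberately picks a string never output by $G$, and in particular not $z_{t'}$.
\item If $z_{t'}$ lies in the subtree of a child of $v_i$, then $v_{i+1}$ was chosen to avoid it, so $z_{t'} \notin K$.
\end{itemize}
The one point needing care is that the adversary's choice within a bin must avoid all algorithm outputs, including those made later in the stage. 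This is handled by processing bins in order and noting that when bin $j$ is handled at time $t$, later algorithm outputs landing in bin $j$ are never enumerated afterwards, since only one string per bin is used.

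Each stage therefore forces at least $p_i$ errors out of $q_i$ time-steps. As in \cref{lem:real_nexist}, at the stage endpoints
\[\frac{\abs{W_t}}{t} \ge \frac{\sum_{j \le i} p_j}{\sum_{j \le i} q_j},\]
and the right-hand side converges to $\beta$ because $p_i/q_i \to \beta$. Hence $\limsup_{t\to\infty} \abs{W_t}/t \ge \beta > \beta'$. Since the argument never refers to the ordering $f$ or to densities, it holds regardless of breadth.
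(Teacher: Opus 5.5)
Your proposal is correct and takes the same approach as the paper's proof: you reuse the stage-wise adversary of \cref{lem:real_nexist}, first enumerating $B_{v_i}$ and then one not-yet-output string from each bin $A_{v_i,j}$ in turn (possible because each bin has $q_i+1$ strings), and you choose $v_{i+1}$ to avoid the algorithm's outputs. Your case analysis of the $p_i$ forced errors per stage, including the split between bins handled before or after time $t'$, spells out what the paper simply states carries over unchanged from \cref{lem:real_nexist}.
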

\begin{proof}
    The proof is very similar to the proof of \cref{lem:real_nexist}.
    Let $G$ be any algorithm.
    We will inductively define an adversary strategy and corresponding target language $K \in \SC_{\beta, \alpha}$
    which ensures that $G$ does not $\beta'$-generate in the limit for any $\beta' < \beta$.

    Identically to \cref{lem:real_nexist}, at some stage $i$, inductively assume that the adversary has chosen some path
    $v_1$, \dots, $v_{i}$ starting from the root.
    Let stage $i$ start at time $t_i$.
    For the first $q_{i} - p_{i} + 1$ time-steps when $t \in [t_i, t_i + q_{i} - p_{i}]$,
    the adversary enumerates the strings of $B_{v_i}$ in order.
    Now, unlike in \cref{lem:real_nexist}, we need to output exactly one string from each bin in $A_{v_i}$.
    We will simply go through the bins in order and output any string that has not been output
    by the algorithm.
    For each $t \in [t_i + q_{i} - p_{i} + 1, t_i + q_{i} - 1]$,
    let $j_t = t - (t_i + q_{i} - p_{i})$, i.e., $j_t$ starts at $1$ and goes up to $p_i - 1$.
    At each time $t$, we set $x_t$ to be an arbitrary string from the bin $A_{v_i, j_t}$
    that has not yet been output by the algorithm.
    Since each bin has size $q_i + 1$, and the algorithm can only output $q_i$
    strings during stage $i$, such a string must exist.
    Finally, we choose $v_{i+1}$ to be a child of $v_i$ such that the subtree of $v_{i+1}$
    does not contain any strings output by the algorithm up to the end of stage $i$.

    We set $K = \{x_1, x_2, \dots\}$.
    At each stage $i$, we output all of the strings from $B_{v_i}$ and
    exactly one string from each bin of $A_{v_i}$, so $K \in \SC_{\beta, \alpha}$.

    The proof that $\limsup_{t \to \infty} \abs{W_t}/t \ge \beta$ is identical to
    the proof in \cref{lem:real_nexist}; the algorithm only outputs strings correctly
    during the first $q_i - p_i$ time-steps out of the $q_i$ total time-steps at stage $i$.
    Thus the algorithm does not $\beta'$-generate in the limit for any $\beta' < \beta$.
\end{proof}

\begin{lemma}
    \label{lem:density_alpha_nexist}
    There does not exist an algorithm that $\gamma$-generates in the limit with upper density $\alpha'$
    for $C_{\beta, \alpha}$ for any $\alpha' > \alpha$ and $\gamma \le 1$.
\end{lemma}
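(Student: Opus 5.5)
We prove this with an adversary strategy that reuses the stage-by-stage ray construction of \cref{lem:density_beta_nexist}, adding one rule for the two $H$-strings of each $B_{v_i}$. We then bound the upper density of $f(O\cap K)$ in $f(K)=\NN$. By construction, $f(K)=\NN$ for every $K\in\SC_{\beta,\alpha}$, and the map from $K$ to $\NN$ is essentially a bijection: one string per bin of $A_v$ maps to one element of $Y$, and each string of $B_v$ maps to its own element of $H\cup R$. Hence the density of $O\cap K$ in $K$ equals the density of $f(O\cap K)$ in $\NN$. It therefore suffices to show that $f(O\cap K)$ hits at most one of the two $H$-images at each level, hits no elements of $Y$, and possibly hits elements of $R$.

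Fix any algorithm $G$. At stage $i$, with ray $v_1,\dots,v_i$ chosen so that $G$ has output nothing in the subtree of $v_i$, the adversary proceeds as follows.

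\textbf{The set $B_{v_i}$.} The adversary first enumerates the two strings of $B_{v_i}$ that $f$ sends to $H$, back-to-back in its first two time-steps of the stage. After the first of them is revealed, $G$ may output the second one correctly, but that is at most one correct $H$-string. Next the adversary enumerates the rest of $B_{v_i}$ (the $R$-strings). Since $q_i-p_i\ge 2$, we have $|B_{v_i}|\ge 3$, so there really are two $H$-strings in $B_{v_i}$.

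\textbf{The bins of $A_{v_i}$.} For each bin $A_{v_i,j}$, the adversary reveals a string that $G$ has not output at any time so far. The bin has $q_i+1$ elements and $G$ produces only $q_i$ outputs in the subtree of $v_i$ during stage $i$, so such a string exists. The strings of bin $j$ that $G$ output before this reveal are never enumerated, because the language contains exactly one string per bin. The only subtle point is that $G$ might output the chosen string of bin $j$ after it is revealed, but such an output is a repeat of an adversary string, not a correct unseen string. Hence $O\cap K$ contains no $A$-strings.

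\textbf{Later stages.} Choose $v_{i+1}$ as before, avoiding all of $G$'s outputs. Then outputs of $G$ in the subtrees of non-chosen children are not in $K$. Outputs made during later stages that land at earlier levels cannot be new correct strings either, since every string of those levels in $K$ has already been enumerated by the end of its stage.

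\textbf{Density bound.} Thus $f(O\cap K)\subseteq R\cup H'$, where $H'$ contains exactly one of the two consecutive $H$-images at each level. Since the $H$-images are layer-ordered, $H'$ consists of every other element of $H$ (or fewer), so its upper density is at most $\tfrac12\cdot 2\alpha=\alpha$, and $R$ has density $0$. Therefore $\mu_{\text{up}}(O,K)\le\alpha<\alpha'$.

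The main obstacle is verifying that the two $H$-strings in each $B_{v_i}$ are handled correctly: "at most one per level" must hold regardless of when $G$ outputs them, including outputs made before stage $i$. Those early outputs are excluded by the inductive invariant that $G$ never touched the subtree of $v_i$ before stage $i$. Outputs made after both $H$-strings are revealed are repeats of adversary strings and hence incorrect. Finally, the density computation must handle the counting precisely: for prefixes $[n]$, $|H'\cap[n]|\le \lceil |H\cap[n]|/2\rceil$.
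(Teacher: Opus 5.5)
Your proposal is correct and takes essentially the same approach as the paper. You reuse the ray-and-bin adversary of \cref{lem:density_beta_nexist} and make the first revealed string of each $B_{v_i}$ an $H$-string, so $f(O\cap K)$ contains at most one $H$-image per level and no element of $Y$, giving the bound $0 + 2\alpha/2 = \alpha$; your added details (both $H$-strings back-to-back, which is harmless but more than needed, the invariant excluding pre-stage outputs, and the $\lceil |H\cap[n]|/2\rceil$ count) just make the paper's brief argument more explicit.
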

\begin{proof}
    We claim that under the adversary strategy described in the proof of \cref{lem:density_beta_nexist},
    no algorithm can generate with upper density greater than $\alpha$.
    Let $O = \{z_1, z_2, \dots\}$ be the set of algorithm outputs.
    As in the proof of \cref{lem:density_algo}, the algorithm's upper density
    is equal to the upper density of $f(O)$ in $\NN$, so it suffices to show that $f(O)$
    has at most $\alpha$ upper density.
    Note that at each stage $i$, the adversary enumerates the strings of $B_{v_i}$ in order.
    In particular, the first string that the adversary enumerates from each set $D_{v_i}$
    is a string in $B_{v_i}$ that maps to $H$.
    This implies that the algorithm's set of outputs $O$ contains at most one string
    that maps to a string of $H$ from each set $B_{v_i}$.
    Furthermore, we argued in the proof of \cref{lem:density_beta_nexist} that the algorithm (regardless of its rate of hallucination)
    does not output any correct strings from $A_{v_i}$. 
    This implies that $f(O)$ does not contain any strings from the set $Y$.
    Together with the fact that $R$ has upper density $0$ and $H$ has upper density $2 \alpha$, it must be that $f(O)$
    has upper density at most $0 + (2\alpha)/2 = \alpha$ in $\NN$.
    This implies that no algorithm can generate with upper density greater than $\alpha$
    regardless of the hallucination rate.
\end{proof}

Combining \cref{lem:density_algo,lem:density_beta_nexist,lem:density_alpha_nexist}
gives the proof of \cref{thm:two_sided_hierarchy}.

    \section{Generation with Hallucination Without Repetitions}\label{sec:without_repeats}In this section, we consider the setting where the algorithm is not allowed to repeat strings.
Recall that this restriction on the algorithm allows us to directly bound the fraction of
incorrect strings generated by the algorithm in terms of its rate of hallucination.

\subsection{Hierarchy of Generation}

Similar to the setting of generation with algorithm repetitions,
we establish a strict hierarchy on language collections generatable without repetition
at every allowed rate of hallucination.

\boxreal*

We first describe the collection $\SC_{\beta}$ used in the proof of the above theorem.
The universe $U$ on which the collection $\SC_{\beta}$ will be defined is a subset of $\NN \times \NN$.
As in the proof of \cref{thm:real}, we define two sequences $\{p_{i}\}$ and $\{q_{i}\}$ such that $p_{i}/q_{i}$
converges to $\beta$ from below.
However, we will need the additional property that $q_{i}$ does not grow too quickly.
For any $\beta \in (0, 1]$, define $\{p_{i}\}$ and $\{q_{i}\}$
to be two sequences of positive integers where
\[q_{i} = \ceil{\frac{1}{\beta}} + \floor{\log_2(i)}, \text{ and }
p_{i} = \floor{\beta \cdot q_{i}}.\]
It is clear that $\{p_{i}\}$ and $\{q_{i}\}$ satisfy the following conditions.
\begin{enumerate}
    \item $p_{i}/q_{i} \le \beta$ for all $i \in \NN$,
    \item $1 \le p_{i} \le q_{i}$ for all $i \in \NN$,
    \item $\lim_{i \to \infty} p_{2i+2} / i = \lim_{i \to \infty} q_{2i+2}/i = 0$,
    \item $\lim_{i \to \infty} p_{i}/q_{i} = \beta$.
\end{enumerate}

The universe $U$ on which we define the collection $\SC_{\beta}$ is given
by an infinite number of ``boxes'' where the $i$-th box
\[S_i \coloneq \{(1, i), (2, i), \dots, (2q_{i}, i)\}\]
is simply a set of $2q_{i}$ strings that are disjoint from the strings in any other box.
The universe $U = \bigcup_{i \in \NN} S_i$ is the union of all the boxes,
and the collection $\SC_{\beta}$ consists of all languages that
contain exactly $2q_{i} - p_{i}$ strings from the $i$-th box.
Formally, we have
\[\SC_{\beta} = \{L \subseteq U \mid \abs{L \cap S_i} =
2q_{i} - p_{i}, \, \forall \, i \in \NN\}.\]

We now prove both directions of~\cref{thm:box}.

\begin{lemma}
    \label{lem:box_algo}
    There exists an algorithm that $\beta$-generates in the limit without repetitions for $\SC_{\beta}$.
\end{lemma}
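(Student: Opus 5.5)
The algorithm I would use ignores the adversary's input except to skip revealed strings. It sweeps through the boxes in increasing order: at time $t$ it outputs the smallest (in a fixed order) string of $U$ that lies in the lowest-indexed box still containing a string not in $\{x_1,\dots,x_t,z_1,\dots,z_{t-1}\}$. Such a string always exists because $U$ is infinite. By construction the algorithm never repeats a string and never outputs a revealed one, so it is a valid generator without repetitions.

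The key observation concerns correctness. Whenever the algorithm outputs a string $z_t\in K$, that string is unseen at time $t$ (it is not among $x_1,\dots,x_t$), so $t\in V$. Hence $t\in W$ only if $z_t\in S_i\setminus K$ for some $i$. Since $\abs{S_i\setminus K}=p_i$ for every $K\in\SC_\beta$, the algorithm makes at most $p_i$ errors inside box $i$, whatever the adversary does.

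To bound the error rate, fix a large $t$ and let $m=m(t)$ be the box the algorithm is working in at time $t$. Then every string of $S_1\cup\dots\cup S_{m-1}$ has been output by either the adversary or the algorithm by time $t$. Since exactly $2t$ strings are output in total, $\sum_{i<m} 2q_i\le 2t$. Together with the per-box bound this gives
\[\abs{W_t}\le \sum_{i\le m} p_i \le \beta\sum_{i<m} q_i + p_m \le \beta t + p_m,\]
using $p_i/q_i\le\beta$ (condition 1). It therefore suffices to show $p_{m(t)}/t\to 0$.

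This boundary term is the only delicate step, and it is where condition 3 (slow growth of $q_i$) comes in. Since $q_i\ge1$, we have $t\ge\sum_{i<m}q_i\ge m-1$. Hence $p_m/t\le p_m/(m-1)$, which tends to $0$ by condition 3 because $p_i=O(\log i)$. (If $m(t)$ stays bounded, the term is trivially $o(t)$.) Therefore $\limsup_{t\to\infty}\abs{W_t}/t\le\beta$.

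The point to check carefully is the counting identity. Strings of box $i$ that the adversary reveals first are never output by the algorithm, so the only strings the algorithm outputs in that box are those not yet revealed. Consequently the adversary racing ahead into a far box $j$ can cost the algorithm at most $p_j$ consecutive errors when the algorithm reaches that box. This cost is exactly the $p_m$ boundary term in the bound above.
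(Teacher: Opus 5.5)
Your proof is correct, but it uses a different algorithm and a different accounting than the paper.

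\textbf{The paper's route.} The paper works in stages. At the start of each stage it jumps to the first box $b_i$ that is still completely untouched, which satisfies $b_i \le 2t_i+2$. It outputs exactly $q_{b_i}$ unrevealed strings from that box. Each stage therefore contributes at most $p_{b_i}$ errors over exactly $q_{b_i}$ steps, and an induction gives $\abs{W_{t_i}}/t_i \le \beta$ at every stage boundary. The only loss is the within-stage overshoot $p_{b_i}/t_i$, which the slow growth of $p_i$ makes vanish.

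\textbf{Your route.} Your sweep never skips boxes the adversary has already touched. So if the adversary pre-fills box $j$ with its $2q_j-p_j$ correct strings, you suffer up to $p_j$ consecutive errors there. The error fraction among your own outputs in a single box can thus be $1$, and a per-box ratio argument like the paper's would not go through. You absorb this with a global double count instead: $2t$ strings are revealed by time $t$, so $\sum_{i<m(t)} q_i \le t$. This gives $\abs{W_t} \le \beta t + p_{m(t)}$ in one line. The boundary term is then controlled by $m(t) \le t+1$ rather than by $b_i \le 2t_i+2$.

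\textbf{Trade-off.} The paper's version buys a self-contained per-stage invariant, at the cost of searching for fresh boxes and capping the time spent in each. Yours is a simpler greedy algorithm with a shorter analysis. It also recovers the same ratio-$\le\beta$ guarantee at box-completion times.

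\textbf{Two small fixes.} Phrase the output rule as ``the smallest \emph{unrevealed} string of that box''; as written it literally reads as the smallest string of the box. Also state explicitly that $m(t)$ is nondecreasing, because the revealed set only grows. That is what guarantees all outputs up to time $t$ lie in boxes $\le m(t)$, and it justifies your dichotomy between $m(t)$ bounded and $m(t)\to\infty$.
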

\begin{proof}
    The algorithm will proceed in an infinite number of stages, where
    in each stage $i$, we pick a new box $b_i$ consisting only of unrevealed strings.
    Then, as long as an unrevealed string from $S_{b_i}$ exists, the algorithm
    will output any such unrevealed string.
    Since $S_{b_i}$ contains $2q_{b_i}$ strings, and the adversary
    can only enumerate one string at each time-step, the algorithm will be able to
    output at least $q_{b_i}$ strings from $S_{b_i}$.
    Now, note that for any language $L \in \SC_{\beta}$, there are at most $p_{b_i}$ strings
    of $S_{b_i}$ that are not contained in $L$.
    Thus, at most $p_{b_i}$ of the algorithm outputs in $S_{b_i}$ are incorrect.
    Since $p_{b_i}/q_{b_i} \le \beta$, the fraction of errors after each stage is bounded by $\beta$.
    Within a particular stage, the first $p_{b_i}$ algorithm outputs may be incorrect,
    temporarily raising the error rate above $\beta$.
    However, because $q_{i}/i = o(1)$, we can show that in the limit, the
    errors accumulated in the first $p_{b_i}$ steps of each stage
    do not contribute to the error rate, and hence the $\limsup$ of the error rate is bounded by $\beta$ as desired.

    Formally, let the $i$-th stage start at time $t_i + 1$,
    and let $b_i$
    be the first box that contains only unrevealed strings.
    For each of the next $q_{b_i}$ times $t \in [t_i + 1, t_i + q_{b_i}]$,
    we set the algorithm's output $z_t$
    to be any string of $S_{b_i}$ that has not been output by the adversary or algorithm at time $t$.
    Since $\abs{S_{b_i}} = 2q_{b_i}$, such a string $z_t$ will always exist.
    We now show that the algorithm $\beta$-generates in the limit.

    We first prove by induction that the fraction of errors after each stage is at most $\beta$, i.e.,
    $\abs{W_{t_i}} / t_i \le \beta$ for all $i$.
    Within any stage $i$, the algorithm outputs at least $q_{b_i}$ distinct strings from box $b_i$.
    Note that for any language $L \in \SC_{\beta}$, at most $p_{b_i}$ strings of box $b_i$
    are not contained in $L$, implying that $\abs{W_{t_{i+1}}} \le \abs{W_{t_i}} + p_{b_i}$.
    Since $p_{b_i} / q_{b_i} \le \beta$ and $\abs{W_{t_i}} / t_i \le \beta$, we have
    \[\frac{\abs{W_{t_{i+1}}}}{t_{i+1}} \le \frac{\abs{W_{t_i}} + p_{b_i}}{t_{i} + q_{b_i}} \le \beta,\]
    and the induction is complete.

    However, within stage $i$, the first $p_{b_i}$ strings output by the algorithm may be incorrect.
    Thus, the maximum fraction of errors during stage $i$ is bounded by
    \[\frac{\abs{W_{t_i}} + p_{b_i}}{t_{i} + p_{b_i}} \le
    \frac{\beta t_i + p_{b_i}}{t_{i} + p_{b_i}} \le \beta + \frac{p_{b_i}}{t_i}.\]
    To bound the term $p_{b_i} / t_i$, notice that $b_i \le 2t_i + 2$, as
    at least one of the first $2t_i + 2$ boxes must contain no adversary or algorithm outputs
    at time $t_i + 1$.
    Now recall that $\lim_{i \to \infty} p_{2i+2} / i = 0$,
    so
    \[\limsup_{t \to \infty} \frac{\abs{W_t}}{t} \le \limsup_{i \to \infty} \beta + \frac{p_{2i+2}}{t_i} = \beta,\]
    proving that the algorithm $\beta$-generates in the limit.
\end{proof}

We now prove the other direction of~\cref{thm:box}.
\begin{lemma}
    \label{lem:box_nexist}
    There does not exist an algorithm that $\beta'$-generates in the limit without repetitions for $C_{\beta}$
    for any $\beta' < \beta$.
\end{lemma}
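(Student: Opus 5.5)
The plan is to build an adaptive adversary that chooses both the enumeration and the target language $K \in \SC_{\beta}$ on the fly. Within each box, the strings excluded from $K$ will be exactly the first strings the algorithm guesses there, and the adversary will keep pace with the algorithm inside every box it touches. For each box $j$ and each time, let $o_j$ be the number of algorithm outputs landing in $S_j$ so far. Two rules govern the construction.

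\textbf{(a) Exclusion.} The first $p_j$ strings of $S_j$ that the algorithm outputs (before the adversary has revealed them) are declared to be outside $K$. The adversary never enumerates them.

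\textbf{(b) Shadowing.} At time $t+1$, if $z_t \in S_j$ and box $j$ still has fewer than $2q_j - p_j$ enumerated strings, the adversary enumerates a fresh string of $S_j$. This string is chosen from those neither output by the algorithm nor excluded. Otherwise the adversary makes progress on a default pointer box $i$, namely the smallest box whose quota $2q_i - p_i$ is not yet met.

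The pointer rule ensures every box is eventually filled to exactly $2q_j - p_j$ strings. So $K := \{x_1, x_2, \dots\}$ lies in $\SC_{\beta}$, and every string of $K$ is enumerated.

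First I will check feasibility: the adversary always finds a fresh legal string in the box it targets. Shadowing gives, up to an additive constant, that the adversary's count in box $j$ is at least $o_j$ while box $j$ is active. Hence roughly $o_j + a_j \le 2q_j$, where $a_j$ is the adversary's count in box $j$, and the $p_j$ excluded strings fit because $a_j$ stops at $2q_j - p_j$. This gives $o_j \le q_j + O(1)$ before box $j$ is exhausted. I expect this to be the main obstacle. The danger is an algorithm that races ahead in a far box before the adversary arrives: without shadowing, it could output all $2q_j$ strings of $S_j$ and obtain error fraction only $p_j/(2q_j)$. Shadowing is designed to block this. I need to verify carefully the off-by-one issues: the adversary responds one step late, and excluded strings are never enumerated. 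I also need to check that after a box is full, further algorithm outputs there are automatically errors, since such a string is either enumerated or outside $K$. If the counting turns out tight, I would first try enlarging the shadow to two adversary strings per algorithm output in the box.

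Next comes the per-box error accounting. In box $j$, the first $\min(p_j, o_j)$ outputs are errors by rule (a). By the bound above, the correct outputs are at most $o_j - p_j \le q_j - p_j + O(1)$. So the fraction of erroneous outputs among the algorithm's outputs in box $j$ is at least $\min\{1, p_j/(q_j + O(1))\}$.

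Finally, I sum over boxes. Since the algorithm never repeats, $t$ equals the number of its outputs, so
\[
\frac{\abs{W_t}}{t} \ge \frac{\sum_j e_j}{\sum_j o_j},
\]
where $e_j$ is the number of erroneous outputs in box $j$. Fix $\varepsilon > 0$. Since $q_j \to \infty$ and $p_j/q_j \to \beta$, every box with $j \ge J(\varepsilon)$ has error fraction at least $\beta - \varepsilon$. The boxes $j < J$ absorb at most $\sum_{j<J} 2q_j = O(1)$ outputs in total. Strings in no box, if $U$ contains any, are errors anyway. Hence $\liminf_t \abs{W_t}/t \ge \beta - \varepsilon$ for every $\varepsilon$, so $\limsup_t \abs{W_t}/t \ge \beta > \beta'$. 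This contradicts $\beta'$-generation and proves \cref{lem:box_nexist}.
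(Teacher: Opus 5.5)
There is a genuine gap: your adversary does not always produce a language in $\SC_\beta$. Your ingredients match the paper's proof (exclude the first $p_j$ algorithm outputs in each box, shadow the algorithm inside the box of its last output, fill boxes through a first-unfilled-box pointer), and your per-box accounting and final summation are sound. The failing step is the sentence ``The pointer rule ensures every box is eventually filled.''

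In your rule (b), shadowing has priority whenever $z_t$ lies in a box whose quota is not yet met, and the pointer moves only otherwise. At every moment infinitely many boxes are unfilled, so the algorithm can starve the pointer forever. For instance, let the algorithm output, at each step, a string from a box in which neither party has output anything yet. Then every step after the first is a shadowing step into a new box, and the pointer never acts again. Every box $j\ge 2$ receives at most one enumerated string, although its quota is $2q_j-p_j\ge q_j\ge 2$. So $\{x_1,x_2,\dots\}\notin\SC_\beta$, the sequence is not an enumeration of any target language, and you get no valid instance against this algorithm. The lemma requires, for every algorithm, some $K\in\SC_\beta$ and an enumeration of $K$ with a high error rate, so the construction as written does not prove it.

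The missing idea is the one the paper builds in from the start. Reserve a sparse but infinite set of cleanup times, e.g.\ $R=\{2^k : k\in\NN\}$, at which the adversary advances the pointer no matter where $z_{t-1}$ lies. Since $R$ is infinite and each box needs only finitely many strings, every box is eventually filled and $K\in\SC_\beta$.

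Your accounting survives with an additive correction. Let $m_j$ be the number of cleanup times that interrupted shadowing in box $j$. Then the number of algorithm outputs in $S_j$ made while box $j$ is below quota is at most $q_j+O(1)+m_j/2$. Hence for $j\ge J(\varepsilon)$ the number of correct outputs in box $j$ is at most $(1-\beta+\varepsilon)\,o_j+m_j$. Summing and using $\sum_j m_j\le\abs{R\cap[t]}=O(\log t)$, you still get $\liminf_{t}\abs{W_t}/t\ge\beta-\varepsilon$.

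Separately, your fallback of two adversary strings per algorithm output is not available, since the adversary emits exactly one string per step. It is also not needed: fresh shadowing already gives the $q_j+O(1)$ bound.
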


\begin{proof}
    Let us reserve a set of time stamps $R = \{2^i \mid i \in \mathbb{N}\}$.
    For each box $S_i$, the adversary essentially wants to ensure that the first $p_i$
    outputs from the algorithm in that box are wrong.
    To do this, for each of the first $p_i$ times the algorithm ever outputs a string in $S_i$
    (which do not have to be consecutive time slots), the adversary ``declares'' that the string output by the algorithm
    is wrong and adds the string to a forbidden set $F_i$ that the adversary will never output from.
    We now describe the adversary's outputs.
    At some time $t$, let $S_i$ be the box that contains the algorithm's previous output $z_{t-1}$.
    If $t \notin R$ and the adversary has output fewer than $2q_i - p_i$ strings from $S_i$,
    the adversary sets $x_t$ to be any string from $S_i \setminus F_i$
    that the adversary has not yet output.
    Otherwise, if $t \in R$ or the algorithm has already output $p_i$ strings from $S_i$,
    the adversary finds the first box $S_j$ where the adversary has output fewer than $2q_j - p_j$ strings,
    and sets $x_t$ to be any string from $S_j \setminus F_j$ that it has not yet output.
    Finally, if this was among the first $p_i$ times that the algorithm output a string in $S_i$, the adversary adds
    $z_{t-1}$ to $F_i$.

    To see that any algorithm generates with at least $\beta$ fraction of errors,
    at some time $t$, partition the set of time-steps $\{1, 2, \dots, t\}$ into $T_1 \cup \dots$ so that each $T_i$
    contains exactly the time-stamps when the algorithms output a string in the box $S_i$ among the first $t$
    time stamps.

    Then the first $\min(p_i, |T_i|)$ time-steps in $T_i$ are always wrong.
    Note that since the adversary always tries to mirror the algorithm unless $t \in R$,
    the adversary always reveals at least $|T_i| - |R \cap T_i|$ strings in
    $S_i$ before the algorithm in the first $t$ timestamps.
    Thus $|T_i| - |R \cap T_i| + |T_i| \leq |S_i|$, implying $|T_i| \leq |S_i|/2 + |R \cap T_i|/2$.

    Therefore the hallucination rate is at least
    \[\limsup_{t \to \infty} \frac{\sum_{i} \min(p_i, |T_i|)} {\sum \min(|T_i|, |S_i| /2 + |R \cap T_i|/2)}
    \geq \limsup_{t \to \infty} \frac{\sum_{i} \min(p_i, |T_i|)} {\sum \min(|T_i|, |S_i| /2) + |R \cap T_i|/2}  \geq
    \beta \]
    because $\sum |R \cap T_i| = \log t$, and $\min(p_i, |T_i|) \geq \beta \min(|T_i|, |S_i| /2)$.
\end{proof}

Combining~\cref{lem:box_algo,lem:box_nexist} gives the proof of~\cref{thm:box}.

\subsection{Hierarchy of Generation with Breadth}

We next establish a strict hierarchy of generation at every rate of hallucination
and breadth of generation.

\densityhierarchywithout*

To prove \cref{thm:density_hierarchy_without}, we first introduce some gadget collections that,
for any rate of hallucination, allow generation at the extreme ends of the density spectrum --- with either
lower density $0$ or $1/2$.
Then, we interpolate between these gadget collections to prove \cref{thm:density_hierarchy_without}.
In the remaining lemmas in this section, all notions of generation
refer to the setting of generation without algorithm repetitions.

We introduce the first gadget, which is a collection that cannot be generated with upper density any higher than $0$. 
\begin{lemma}
    \label{lem:box_zero_density}
    For any $\beta \in (0, 1]$, there exists a collection $\SC_{\beta}$ that is $\beta$-generatable
    in the limit with lower density $0$, but is not $\beta'$-generatable in the limit for any $\beta' < \beta$
    or generatable with upper density $\alpha$ for any $\alpha > 0$.
\end{lemma}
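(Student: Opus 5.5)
Start from the box collection $\SC_\beta$ used for \cref{thm:box} and change only the ordering of the universe that defines density. Keep the boxes $S_i$ of size $2q_i$ and the languages containing exactly $2q_i - p_i$ strings of each box. The algorithm of \cref{lem:box_algo} then still $\beta$-generates without repetitions, and the adversary of \cref{lem:box_nexist} still rules out any $\beta' < \beta$. Neither argument mentions an ordering of $U$, so both transfer verbatim. Lower density $0$ is vacuous. It remains to make upper density $0$ forced for every algorithm.

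To force upper density $0$, I would pad the universe with a large set of "free" strings in the spirit of \cref{thm:density_zero}. Add a countable set $Z$, disjoint from the boxes, and let the collection be all languages $L$ with $|L\cap S_i| = 2q_i - p_i$ for every $i$ and $Z' \subseteq L$ for some $Z' \subseteq Z$. Order $U$ so that the boxes occupy a density-$0$ set of positions, for instance placing $S_i$ immediately after a long block of $Z$. The positive algorithm ignores $Z$ entirely, so its hallucination analysis is unchanged.

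For the impossibility of upper density $\alpha > 0$, combine the two adversaries by interleaving. On time-steps of one parity (excluding the reserved set $R$), the adversary runs the mirroring strategy of \cref{lem:box_nexist} on the boxes. On the other parity, it outputs the smallest unrevealed string of $Z$, and $K\cap Z$ is taken to be exactly the $Z$-strings the adversary outputs. Then no algorithm output in $Z$ is ever correct, because any $z\in Z$ output by the algorithm is revealed and hence never enumerated later. Correct outputs therefore lie only in the boxes, which have density $0$ in $K$. Here I would use the argument of \cref{thm:density_zero}: $K\cap Z$ has lower density at least about $1/4$ in $Z$, while the boxes are sparse in the ordering.

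The main obstacle is the interleaving. It must not break the $\beta'$ lower bound, since the algorithm may now spend time-steps outputting $Z$-strings, which are always wrong and only help the bound. It also must not break the counting $|T_i|\le |S_i|/2 + \dots$, since the mirroring now lags. I would handle the lag by letting the adversary respond to every algorithm box-output at the next box-parity step, treating the delay like the reserved times $R$. This is plausible but delicate. If it fails, a cleaner fallback is to keep the adversary of \cref{lem:box_nexist} unchanged and instead choose the ordering so that each language's box strings have density $0$ in that language. For this I would use boxes of size $2q_i$ but place filler strings $B$, common to every language and taking almost all positions, that no algorithm can usefully exploit; securing that last property is exactly the difficulty. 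So I would commit to the $Z$-interleaving route.
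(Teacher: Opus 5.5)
Your route differs from the paper's. The paper keeps $\SC_\beta$ exactly as in \cref{thm:box} and only reorders $U$: $2p_i$ strings of each box go to a density-$1$ set $H$ and the rest go to a density-$0$ set. A new adversary answers each algorithm output in a box with an unrevealed $H$-string of that box. It declares the algorithm's first $p_i$ $H$-strings per box wrong, and uses the sparse times $F$ to complete boxes, so almost no correct output lands in the dense part of $K$. You instead pad with a free set $Z$ and make the boxes the sparse part, as in \cref{thm:density_zero}. Your density argument is correct, and simpler than you make it. Once $K\cap Z$ is exactly the set of $Z$-strings the adversary enumerates (smallest unrevealed, every other step), every correct output lies in the boxes, and the boxes have density $0$ in $K$ however they are completed. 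So no mirroring is needed for this claim. The positive algorithm also transfers, since it never outputs from $Z$. What the paper's choice buys is that the collection is unchanged, so \cref{lem:box_algo,lem:box_nexist} apply literally; your padding is exactly what reopens the $\beta'$ bound.

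The step that fails is the one you flag as delicate: letting the same parity-interleaved adversary also certify the $\beta'$ bound. Every language contains all but $p_i$ strings of $S_i$, and your adversary places at most one box string per two steps. So an algorithm that stays in a fresh box on consecutive steps makes about $4q_i/3$ outputs there with at most $p_i$ of them wrong. That gives error rate about $3\beta/4<\beta$, whatever the adversary does on its box steps. The lag is not like the reserved times $R$, which number $O(\log t)$ up to time $t$; the parity diverts a constant fraction of the adversary's steps. You do not need a merged adversary, because the statement's two negative claims are independent. For the $\beta'$ claim, run the adversary of \cref{lem:box_nexist} with $K\cap Z=\emptyset$. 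Every algorithm output in $Z$ is then wrong, and the step after such an output can be spent as a fill-in step, which only adds adversary strings to boxes and hence only shrinks the algorithm's room. Use the interleaved adversary only for the density claim.

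One caution: this lemma serves as a gadget for \cref{lem:interpolate}, whose hypothesis needs one ordering under which $L_2$ has density $2\alpha$ in $L_1\cup L_2$ for every $L_1$. In your collection $L_1\cap Z$ ranges from $\emptyset$ to all of $Z$ while $Z$ dominates $U_1$. So for $\alpha\in(0,1/2)$ no ordering agreeing with yours on $U_1$ can satisfy that hypothesis. Your collection proves the lemma as stated, but it would not slot into that later argument unchanged.
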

\begin{proof}
    The collection $\SC_{\beta}$ is identical to the collection $\SC_{\beta}$ used in the proof of \cref{thm:box}.
    However, we will choose an ordering of the universe $U$
    which will give the desired lower density of generation.
    Let $H$ and $Y$ be two infinite sets which partition $\NN$ such that $H$ has upper and lower density $1$, and $Y$
    has upper and lower density $0$.
    Recall for each language $L \in \SC_{\beta}$ and each box $S_i$ of size $2q_i$, there are exactly $p_i$ strings from
    $S_i$ that are not contained in $L$.

    For each box $S_i$, we will arbitrarily map exactly $2 p_i$ strings to the density $1$ set $H$,
    and map the remaining strings of $S_i$ to the density $0$ set $Y$.

    The collection $\SC_{\beta}$ is identical to the collection $\SC_{\beta}$ used in the proof of \cref{thm:box},
    so there exists an algorithm that $\beta$-generates in the limit (with lower density $0$),
    and there does not exist an algorithm that $\beta'$-generates in the limit for any $\beta' < \beta$.

    We now show that an algorithm cannot generate at an upper density higher than $0$. We define an adversary strategy.
    Let \( F = \{F_1, F_2, \dots\} \) denote the set of timestamps where \(F_i\)
    is the first time the algorithm has touched \(2^i\) boxes.
    At some time $t$, let $S_i$ be the box that contains the algorithm's previous output $z_{t-1}$.
    At any time $t \notin F$, the adversary outputs an unrevealed string in $S_i$ mapped to $H$,
    if such a string exists, and otherwise outputs an arbitrary unrevealed string.
    At the times $t \in F$, the adversary fills in a string from the first box $S_i$
    that is missing a string from the true language $K$.
    At all times, if it is among the first $p_i$ times that the algorithm
    outputs a string in $S_i$ that is mapped to $H$,
    the adversary declares that string to be wrong and will never output the string in the future.
    Therefore for each $S_i$, since the adversary is mirroring the algorithm at all times other than in $F$,
    the algorithm generates at most $M_i$ strings in $S_i$ that are mapped to $H$,
    where $M_i$ is the number of times in which the algorithm output a string $z_t \in S_i$ and $t+1 \in F$.
    The true language in $S_i$ should contain at least $p_i$ strings mapped to $H$ except those time stamps in $F$.
    Notice among the first $i$ boxes, there are at most $\log i$ time stamps in $F$,
    implying that $\sum_{j<i} M_j = \log i$.
    Therefore the generation upper density is at most
    \[\limsup_{i \to \infty} \frac{\log (i+1)}{\sum_{j \leq i} p_j - \log i} = 0. \qedhere\]
\end{proof}

The next two gadgets are $\beta$-generatable with lower density $1/2$.
We will have two different collections for the regimes of $\beta \in (0, 1/2]$ and $\beta \in (1/2, 1]$.

\begin{lemma}
    \label{lem:box_half_density}
    For any $\beta \in (0, 1/2]$, there exists a collection $\SC_{\beta}$
    that is $\beta$-generatable in the limit with lower density $1/2$,
    but is not $\beta'$-generatable in the limit for any $\beta' < \beta$ regardless of the breadth of generation.
\end{lemma}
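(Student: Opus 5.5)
The plan is to build $\SC_{\beta}$ as a product of two independent components on disjoint parts of the universe. The first is the box collection of \cref{thm:box}, which carries the hallucination rate. The second is a ``pair'' component, which carries the breadth but offers no error-free strings until the adversary unlocks them.

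\textbf{Construction.} Take the boxes $S_i$ of size $2q_i$ with the sequences $\{p_i\},\{q_i\}$ from \cref{thm:box}. Add pair strings $(j,c,1),(j,c,2)$ for $j,c\in\NN$. A language $L\in\SC_\beta$ consists of three parts:
\begin{itemize}
\item exactly $2q_i-p_i$ strings from each box $S_i$;
\item for each $j$, a single choice $c_j\in\NN$;
\item both strings of the chosen pair $\{(j,c_j,1),(j,c_j,2)\}$.
\end{itemize}
Since $c_j$ ranges over infinitely many values, a string of pair $j$ is only safe to output after the adversary has revealed some string $(j,c_j,\cdot)$, exactly as with the sets $B_v$ in \cref{thm:real}.

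\textbf{Ordering.} As in the proof of \cref{thm:two_sided_hierarchy}, we use a non-injective map $f$. Partition $\NN=H\cup Y$ with $H$ of density $1$ and $Y$ of density $0$. Map $(j,c,1)$ and $(j,c,2)$ to the $(2j-1)$-th and $2j$-th elements of $H$ for every $c$, and map all box strings order-preservingly onto $Y$. For any target $K$ this gives $f(K)=\NN$ (after relabelling), with the pair part of $K$ occupying a density-$1$ set.

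\textbf{Upper bound (generation).} At time $t$, if $x_t$ is a pair string $(j,c,b)$ whose partner is unrevealed, output the partner; this is always correct and unseen. Otherwise, advance the box algorithm of \cref{lem:box_algo} by one step. Since every pair $j$ is eventually revealed first by the adversary and the algorithm immediately outputs the partner, $O$ contains exactly one string of each chosen pair. Hence $f(O)$ contains half of a density-$1$ subset of $f(K)$, giving lower density $1/2$.

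For the error rate, errors occur only at box steps. I will re-run the stage argument of \cref{lem:box_algo} restricted to box steps, where the error fraction is at most $\beta$ in the limit; pair steps only add correct time-steps, so $\limsup |W_t|/t\le\beta$. The one point needing care is that a stage is now interrupted by pair steps. The bound $b_i\le 2t_i+2$ and the induction still go through if $t_i$ counts all time-steps, since pair steps only increase the denominator.

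\textbf{Lower bound (rate).} Let $G$ be any algorithm without repetitions. The adversary runs the strategy of \cref{lem:box_nexist} on box strings. At a sparse set of reserved times, say $\{2^k\}\cup\{3\cdot 2^k\}$ (disjoint from the box strategy's reserved set, or simply merged into it), it instead reveals the next pair string. To keep the pair component in $\SC_\beta$, it chooses $c_j$ for each $j$ as a value such that $(j,c_j,1),(j,c_j,2)$ have not been output by $G$ so far, which is possible since there are infinitely many $c$. It reveals both strings of pair $j$ at two consecutive reserved times.

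With this strategy every pair string $G$ ever output before its pair was fixed is wrong. Pair steps occupy only $O(\log t)$ of the first $t$ time-steps, so they do not affect the limit. The counting of \cref{lem:box_nexist} then applies to $G$'s outputs inside boxes. The main obstacle is $G$'s outputs that land in pair strings or in boxes at non-box times: these do not enter the box counting. They should only help, because outputs into unfixed pairs are errors, and outputs into fixed pairs are impossible after both strings are revealed, which happens at consecutive steps. I will therefore handle them by partitioning time-steps by where $z_t$ lands and lower-bounding the error fraction in each class by $\beta-o(1)$, so that $\limsup|W_t|/t\ge\beta$.

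I do not see where $\beta\le 1/2$ is used in this plan. If the rate analysis goes through as above, the construction would work for all $\beta$, and the split into two regimes in the paper suggests that some step (likely the error-rate argument) is more delicate than I expect.
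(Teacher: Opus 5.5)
Your proposal is correct, and it takes a genuinely different route from the paper.

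\textbf{What each construction does.} The paper uses a single gadget. Each box $S_i$ consists of $q_i$ two-string bins, and a language takes exactly $q_i-p_i$ whole bins per box. The same bins therefore carry both parameters: mirroring inside a bin gives density $1/2$, and forbidding the first $p_i$ bins the algorithm touches in each box gives the rate. You instead decouple the two. The rate comes verbatim from the collection, algorithm and adversary of \cref{thm:box}. The breadth comes from an independent density-$1$ pair component, whose strings become safe only after the adversary reveals one of them, and whose pairs the adversary reveals only at $O(\log t)$ times.

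\textbf{Why the paper needs $\beta\le 1/2$.} In the paper's gadget every string of $K$ lies in a bin of $K$, and each such bin costs the adversary two time-steps. The mirroring algorithm gets at least one correct output per bin the adversary touches, so $\abs{V_t}\ge t/2$ and that gadget can never force a rate above $1/2$. This is why the paper needs the separate tree gadget of \cref{lem:tree_half_density} for $\beta>1/2$. In your product the adversary spends almost all of its time on box strings, which offer no such mirror, so no cap arises. Your construction thus proves the statement for every $\beta\in(0,1]$ and makes the second gadget unnecessary. The price is a two-component universe and a more delicate ordering.

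\textbf{Points to pin down when you write it out.}

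First, the adversary's move at a non-reserved time after a pair output $z_{t-1}$ must be a box cleanup step: an unrevealed string of $S_j\setminus F_j$ in the first incomplete box. It must not be a pair reveal. If new pairs were revealed on demand, $G$ could chain partner outputs and drive its error rate to $0$. Cleanup steps are harmless, because the first $\min(p_j,\abs{T_j})$ outputs in $S_j$ are wrong regardless, and extra adversary outputs in $S_j$ only shrink $\abs{T_j}$.

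Second, the pair-reserved times must be disjoint from the cleanup times. Note that $\{2^k\}$ is exactly the set $R$ of \cref{lem:box_nexist}. Also, between the two reveals of a pair, $G$ can output the partner correctly once; this costs only $O(\log t)$, but your remark that such outputs are ``impossible'' is not quite right.

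Third, your $f$ has infinite fibres: all $(j,c,b)$ with $c\in\NN$ share a value. So the induced order on $U$ cannot be refined to an enumeration of $U$. This is consistent with the paper's convention of only comparing strings within one language. If an $\omega$-type order is required, restrict $c_j$ to a finite range $[N_j]$, with $N_j$ larger than the fixed reserved time at which the adversary commits to $c_j$. This keeps both the avoidance argument and the density argument intact.
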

\begin{proof}
    The collection $\SC_{\beta}$ will be similar to the collection from the proof of \cref{thm:box}.
    Now, each box $S_i$ will contain a set of $q_i$ (not $2q_i$) bins $B_{ij}$, each of which is simply a set of $2$
    strings.
    For any bin $B_{ij}$, each language $L \in \SC_{\beta}$
    will either contain both strings from the bin or neither of the two strings.
    Each language will contain the strings from exactly $q_i - p_i$ bins from each box $S_i$.
    Thus, each box $S_i$ now contains $2q_i$ strings, and each language contains $2(q_i - p_i)$ strings from $S_i$.
    The collection $\SC_{\beta}$ is the set of languages formed from all possible ways of taking exactly $q_i - p_i$
    bins from each box $S_i$.
    We order the strings in the universe in increasing order, first by the index of the boxes,
    and then by the index of the bins within each box.

    We first give an algorithm that $\beta$-generates in the limit with lower density $1/2$.
    At any time $t$, if $x_t$ is in some bin $B_{ij}$, and the other
    string of $B_{ij}$ is unrevealed, the algorithm mirrors the adversary by outputting that other string.
    Otherwise, the algorithm finds the first box $S_i$ in which there exists a bin $B_{ij}$
    such that the algorithm has not output any string from $B_{ij}$, and $B_{ij}$
    contains an unrevealed string.
    The algorithm then outputs the unrevealed string from $B_{ij}$.

    First, it is clear that the algorithm generates with lower density $1/2$.
    Consider any bin $\{y_1, y_2\}$ that is contained in the target language $K$.
    The adversary must eventually output both $y_1$ and $y_2$ in its enumeration.
    For the first string the adversary reveals, the other one should be immediately generated by the algorithm,
    or has already been output by the algorithm.
    Thus, for every bin $\{y_1, y_2\} \subseteq K$, the algorithm will occupy at least one of the strings,
    implying lower density at least $1/2$.

    We now show that the algorithm $\beta$-generates in the limit.
    Clearly any algorithm outputs produced by mirroring the adversary must be correct.
    For any time $t$, let $s(t)$ be the index of the smallest box
    containing a bin $B_{s(t),j}$ without any algorithm outputs.
    Note that the adversary also cannot have output any strings from $B_{s(t),j}$, because otherwise,
    the algorithm would have mirrored the adversary and also output a string from $B_{s(t),j}$.
    In particular, this implies that any algorithm outputs up to $t$ in later bins (location-wise)
    were produced by mirroring the adversary, and thus are correct.
    Thus, the only incorrect algorithm outputs must have come from $S_1$, \dots, $S_{s(t)}$.
    Since the algorithm outputs at most one string from each bin, there are at most $\sum_{i = 1}^{s(t)} p_i$
    incorrect time-steps.
    By definition of $s(t)$, every bin from $S_1$, \dots, $S_{s(t)-1}$ contains at least one algorithm output,
    implying that the algorithm has output at least $\sum_{i = 1}^{s(t)-1} q_i$ total strings.
    Thus the fraction of error is at most
    \[\limsup_{t \to \infty} \frac{\sum_{i = 1}^{s(t)} p_i}{\sum_{i = 1}^{s(t)-1} q_i} \le \beta,\]
    since $p_i/q_i$ converges to $\beta$ and $p_i/\sum_{j=1}^{i} q_j$ converges to $0$.

    Finally, we show that no algorithm can generate with hallucination rate less than $\beta$.
    The adversary follows a similar strategy as in \cref{lem:box_nexist}.
    We again reserve a set of time-steps $R = \{2^i \mid i \in \NN\}$.
    We designate each of the first $p_i$ distinct bins that the algorithm outputs from in some box $S_i$,
    as being forbidden (as long as the adversary has not already output from that bin),
    and the adversary will never output from those bins.
    At some time $t$, let the algorithm's previous output $z_{t-1}$ lie in some bin $B_{ij}$
    inside box $S_i$. 
    At a non-cleanup time $t \notin R$, if the adversary has not yet output the other string in the bin $B_{ij}$,
    and the bin is not forbidden, the adversary outputs that other string.
    Otherwise, if the bin is forbidden, or the adversary has already output the other string in $B_{ij}$ or $t \in R$,
    the adversary finds the first box $S_j$ such that the adversary has touched fewer than $q_j - p_j$
    strings from $S_j$ and such that $S_j$ contains some bin that is not forbidden
    and has not been fully enumerated by the adversary.
    The adversary then outputs the unenumerated string from the appropriate bin in $S_j$.
    The target language is $K = \{x_1, x_2, \dots\}$.
    Since $R$ is infinite, in each box $S_i$, the adversary eventually outputs both strings
    from exactly $q_i - p_i$ bins and no strings from the other $p_i$, implying that $K \in \SC_{\beta}$.

    The argument that any algorithm generates with at least $\beta$ fraction of errors,
    is similar to the argument in \cref{lem:box_nexist}.
    At some time $t$, partition the set of time-steps $\{1, 2, \dots, t\}$ into $T_1 \cup \dots$ so that each $T_i$
    contains exactly the time-stamps when the algorithm output a string in the box $S_i$ among the first $t$
    time stamps.
    Furthermore, let $M_i$ be the set of cleanup times during which the adversary output a string in $S_i$.
    Note that the first $p_i$ strings that the algorithm outputs in each box are incorrect,
    except possibly if the adversary has already output from that bin during a cleanup time.
    Thus the number of wrong strings from $S_i$ is at least $\min(p_i - \abs{M_i}, \abs{T_i})$.
    The adversary always reveals at least $|T_i| - |R \cap T_i|$ strings in
    $S_i$ before the algorithm in the first $t$ timestamps.
    Thus $|T_i| - |R \cap T_i| + |T_i| \leq |S_i|$, implying $|T_i| \leq |S_i|/2 + |R \cap T_i|/2$.

    Therefore the hallucination rate is at least
    \[\limsup_{t \to \infty} \frac{\sum_{i} \min(p_i - \abs{M_i}, |T_i|)}{\sum_i \min(|T_i|, |S_i| /2 + |R \cap T_i|/2)}
    \geq \limsup_{t \to \infty} \frac{\sum_{i} \min(p_i, |T_i|) - \sum_{i} \abs{M_i}}
    {\sum_i \min(|T_i|, |S_i| /2) + |R \cap T_i|/2} \geq \beta \]
    because $\sum_i |R \cap T_i| = \sum_{i} \abs{M_i} = \log t$, and
    $\min(p_i, |T_i|) \geq \beta \min(|T_i|, |S_i| /2)$.
\end{proof}

\begin{lemma}
    \label{lem:tree_half_density}
    For any $\beta \in (1/2, 1]$, there exists a collection $\SC_{\beta}$
    that is $\beta$-generatable in the limit with lower density $1/2$,
    but is not $\beta'$-generatable in the limit for any $\beta' < \beta$ regardless of the breadth of generation.
\end{lemma}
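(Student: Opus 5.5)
The plan is to combine the tree gadget from the proof of \cref{thm:real} with the ``pair bins'' from \cref{lem:box_half_density}. Fix $\beta \in (1/2,1]$ and sequences $p_i \le q_i$ with $p_i/q_i \to \beta$ from below, chosen so that $q_i$ grows slowly enough for the cleanup argument of \cref{lem:box_nexist} to apply. Set $m_i = q_i - p_i$ and $a_i = 2p_i - q_i$. Because $\beta > 1/2$, eventually $a_i \ge 0$ and $m_i/q_i \to 1-\beta < 1/2$; these two facts are what separate this regime from \cref{lem:box_half_density}.

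\textbf{Construction.} Build the tree $T_\beta$ with $q_i+1$ children per node at depth $i$. At a node $v$ of depth $i$ place two kinds of strings:
\begin{itemize}
\item a set $B_v$ consisting of many disjoint pairs (say $2q_i$ pairs), and
\item a set $A_v$ of $a_i$ bins of size $q_i+1$.
\end{itemize}
A language through a ray $P$ contains, at each $v_i \in P$, both strings of exactly $m_i$ of the pairs of $B_{v_i}$ (and nothing from the other pairs), plus exactly one string from each bin of $A_{v_i}$. It contains nothing outside $P$. Thus each language has exactly $2m_i + a_i = q_i$ strings at depth $i$.

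\textbf{Ordering.} As in \cref{thm:two_sided_hierarchy}, order $U$ via a layer-preserving, gap-free map $f$ that sends pair strings into a density-$1$ set $H$ and sends $A_v$ bins (collapsing each bin) into a density-$0$ set $Y$. Then density within $K$ is governed by how many pair strings of $K$ the algorithm occupies.

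\textbf{Upper bound (algorithm).} The algorithm is the mirroring algorithm of \cref{lem:box_half_density}, applied inside the node last revealed:
\begin{itemize}
\item whenever $x_t$ lies in a pair whose partner is unrevealed, output the partner;
\item otherwise output a fresh string from an untouched pair at the current or an earlier revealed node.
\end{itemize}
Every pair in $K$ gets one algorithm string, which gives lower density $1/2$. The correct outputs at a node number at least $m_i$, one per mirrored pair. The adversary spends exactly $q_i$ time-steps on each node, and the algorithm has one time-step per adversary step. So, per node, the error fraction is at most $(q_i-m_i)/q_i = p_i/q_i \le \beta$.

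Two bookkeeping issues remain. Idle outputs must go somewhere; they can be sent to untouched pairs, which costs at most one error per step and is already counted. Error accumulated mid-node must be controlled; I would handle this with a per-node accounting as in \cref{lem:real_vertex_bound}, where $|F_t(v)|/|H_t(v)| \le \beta$.

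\textbf{Lower bound (adversary).} This is the main obstacle. I would run the ray-building adversary of \cref{lem:real_nexist}, moving to a child whose subtree the algorithm has not touched. Within $B_{v_i}$, it uses the forbidding rule of \cref{lem:box_nexist}/\cref{lem:box_half_density}:
\begin{itemize}
\item any pair that the algorithm touches before the adversary does is declared forbidden, i.e.\ not in $K$;
\item the adversary fully enumerates only $m_i$ pairs it touched first, revealing them one string at a time so the algorithm can at best mirror;
\item the $a_i$ bin strings from $A_{v_i}$ are chosen to avoid algorithm outputs.
\end{itemize}
There are $2q_i$ pairs and the algorithm makes at most $q_i$ outputs per stage, so enough untouched pairs always remain.

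Under this strategy the algorithm gets at most one correct string per enumerated pair, i.e.\ at most $m_i$ correct outputs among the $q_i$ steps of stage $i$. Summing over stages gives error rate at least $\sum p_i/\sum q_i \to \beta$, regardless of breadth.

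The delicate point is that the algorithm may output ahead of the adversary inside pairs the adversary intended to use. Forbidding such pairs handles this, provided the forbidden count never exceeds the available slack. I expect to verify this by a counting argument like $|T_i| \le |S_i|/2 + |R\cap T_i|/2$, possibly with reserved cleanup times $R=\{2^j\}$ if stages must be interleaved.
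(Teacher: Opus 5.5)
\textbf{The gap is in the upper bound, and it comes from your construction rather than from the bookkeeping.} In your tree, revealing a node $v$ certifies no string: only $m_i$ of the $2q_i$ pairs of $B_v$ lie in $K$. The only guaranteed-correct outputs are therefore partners of pair strings the adversary has already shown.

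The accounting of \cref{lem:real_vertex_bound} relies on the opposite property. There, any string of $D_v$ certifies all of $B_v$, so a bad step at $v$ can only occur after $B_v$ is exhausted, i.e.\ after a guaranteed number of good steps at $v$. In your collection, an adversary that leads with bin strings already has $\abs{F_t(v)} = \abs{H_t(v)}$.

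The adversary, not the algorithm, controls the interleaving. It can output the bin strings of many consecutive ray nodes $v_{n+1}, \dots, v_{n+M}$ and postpone all their pairs. Your algorithm only touches revealed nodes, so any child can be taken next. Throughout this stretch your fallback rule (output a fresh string from an untouched pair) fires at every step, and the adversary declares every touched pair absent from $K$. It can afford this:
\begin{itemize}
\item If the fallback works at the current node, at most $a_j + m_j = p_j \le 2q_j - m_j$ pairs of $v_j$ are ever touched.
\item If it piles onto one node to exhaust that capacity, it earns only $m_j$ correct outputs per $2q_j$ spent.
\end{itemize}
Choose the stretch much longer than the elapsed time. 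Then $\abs{W_t}/t$ reaches at least roughly $(1+\beta)/2$ (and tends to $1$ in the first case) at infinitely many $t$. This exceeds $\beta$ for $\beta<1$. Slow growth of $q_i$ does not help, because one stretch spans unboundedly many nodes. The same obstruction appears to defeat every algorithm on this collection.

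\textbf{The missing ingredient is a block that every language through $v$ contains in full.} The paper reuses the tree of \cref{thm:real}: $B_v$ has size $q_i - p_i + 1$ and lies in every language through $v$, and $A_v$ has size $2q_i$ with $p_i - 1$ strings chosen. It retunes the sequences so that $p_i/q_i \to 2\beta - 1$, and maps $B_v$ into the density-$1$ set $H$ and $A_v$ into the density-$0$ set $Y$.

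The halving you tried to build in with pairs is supplied automatically by the no-repetition race. Once any string of $D_v$ appears, the algorithm outputs unrevealed strings of $B_v$. The adversary can race it through $B_v$, so only about $(q_i-p_i)/2$ correct outputs per node are guaranteed. Bad steps at $v$ still occur only after $B_v$ is exhausted, so $\abs{F_t(v)}/\abs{H_t(v)} \le (q_i+p_i)/(2q_i) \le \beta$ holds at every time, under any interleaving.

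Lower density $1/2$ follows because the algorithm occupies about half of each $B_{v_i}$, which is all of $K$'s density-$1$ part. The matching lower bound is the adversary of \cref{lem:real_nexist}. It enumerates $B_{v_i}$ first, so the algorithm captures at most about half of it. This gives a correct fraction of at most about $(q_i-p_i)/(2q_i) \to 1-\beta$ per stage.
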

\begin{proof}
    The collection $\SC_{\beta}$ is exactly equal to the tree construction from \cref{thm:real},
    except we take $\{p_i\}$ and $\{q_i\}$ to be sequences
    such that $p_i/q_i$ converges to $2 \beta - 1$ instead of $\beta$.
    Note that this implies $(q_i - p_i)/q_i$ converges to $2 - 2\beta$.

    To define the ordering on the universe $U$, let $H$ and $Y$ be two infinite sets
    that partition $\NN$ such that $H$ has upper and lower density $1$ and
    $Y$ has upper and lower density $0$.

    Recall that at every node $v$ of depth $i$,
    there are two sets $B_v$ of size $q_i - p_i + 1$, and $A_v$ of size $2q_i$.
    The entire set $B_v$ is contained in every language that passes through $v$,
    while exactly $p_i - 1$ strings from $A_v$ will be contained in a language that passes through $v$.
    We map every string in $B_v$ to the high density set $H$
    and every string in $A_v$ to the low density set $Y$.
    The mapping is layer-order-preserving in $H$ and $Y$:
    for any $i<i'$, every image of a string in layer $i$ precedes every image of a string in layer $i'$ in $H$,
    and similarly for $Y$.

    The algorithm that $\beta$-generates in the limit without repetitions will be very similar
    to the algorithm from \cref{lem:real_algo} that $\beta$-generates in the limit (with repetitions).
    At some node $v$ with depth $i$, the algorithm can correctly generate at least $\abs{B_v} / 2 = (q_i - p_i)/2$ strings from $B_v$.
    Since the adversary can spend at most $q_i$ time steps generating from $D_v$, we have that the fraction of correct time steps at node $v$ is at least $(q_i - p_i)/2q_i$, which converges to $1 - \beta$ as desired.

    To see that the algorithm generates with lower density $1/2$, note that the algorithm ensures
    that at each node $v_i$, the algorithm generates at least half of the strings from $B_{v_i}$.
    Since only the strings in sets $B_{v_i}$ are mapped to the density $1$ set $H$,
    this implies that the algorithm generates with lower density $1/2$.

    Finally, to see that $\SC_{\beta}$ is not $\beta'$-generatable in the limit without repetitions
    for any $\beta' < \beta$, the adversary construction will follow identically to the proof in \cref{lem:real_nexist}.
    However, now at each layer $i$, since the algorithm is not allowed to repeat strings, the algorithm can only generate half of the nodes from the set $B_{v_i}$, implying that the algorithm generates correctly for $(q_i - p_i)/2$ time steps.
    The adversary will still spend $q_i$ total steps generating from $D_{v_i}$.
    As before, the adversary ensures that during this time, the algorithm does not generate any strings correctly from $A_{v_i}$, and does not correctly guess the next child $v_{i+1}$.
    Thus, the algorithm can generate correctly at most a
    $(q_i - p_i)/2q_i$ fraction of the time, which converges to $1 - \beta$ as desired.
\end{proof}

We now show how to interpolate between \cref{lem:box_zero_density}
and \cref{lem:box_half_density,lem:tree_half_density} to prove \cref{thm:density_hierarchy_without}.
For any two collections $\SC_1$ and $\SC_2$, we write
$\SC_1 \times \SC_2 = \{L_1 \cup L_2 \mid L_1 \in \SC_1, L_2 \in \SC_2\}$.

\begin{lemma}
    \label{lem:interpolate}
    For any $\beta \in (0, 1]$, let $\SC_1$ be a collection over $U_1$ and $\SC_2$ be a collection over $U_2$
    such that both $\SC_1$ and $\SC_2$ are $\beta$-generatable in the limit but not $\beta'$-generatable in the limit
    for any $\beta' < \beta$.
    Furthermore, assume that $\SC_1$ is not generatable with upper density greater than $0$,
    and $\SC_2$ is $\beta$-generatable with lower density $1/2$.
    Finally, assume that $U_1$ and $U_2$ are disjoint,
    and that for any $\alpha \in (0, 1/2]$ there exists a fixed ordering of $U_1 \cup U_2$
    such that for any languages $L_1 \in \SC_1$ and $L_2 \in \SC_2$,
    the language $L_2$ has $2 \alpha$ upper and lower density in $L_1 \cup L_2$.

    For any $\alpha \in (0, 1/2]$, by ordering the strings in $U_1 \cup U_2$ appropriately,
    the collection $\SC = \SC_1 \times \SC_2$ is $\beta$-generatable in the limit
    with lower density $\alpha$.
    Furthermore, for any $\beta' < \beta$, the collection $\SC$ is not $\beta'$-generatable in the limit.
    Also, the collection $\SC$ is not generatable with upper density $\alpha' > \alpha$
    regardless of the hallucination rate.
\end{lemma}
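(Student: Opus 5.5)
The plan is to run the two given algorithms side by side on the product collection and, for the lower bounds, to run the two given adversaries side by side. Let $G_1$ be an algorithm that $\beta$-generates $\SC_1$ (with arbitrary breadth), and $G_2$ one that $\beta$-generates $\SC_2$ with lower density $1/2$. Given a target $K = L_1 \cup L_2$, the adversary's stream splits into two substreams according to membership in $U_1$ or $U_2$, and each substream enumerates $L_1$ or $L_2$ respectively. The combined algorithm will maintain simulated copies of $G_1$ and $G_2$. At each time-step it feeds the new example $x_t$ to the copy whose universe contains it, and outputs that copy's next string. Since $U_1,U_2$ are disjoint, strings from one copy never collide with the other copy's universe, so there are no repetitions and every output is unseen in the combined stream as well.

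\textbf{Upper bounds.} For the error rate, let $t^{(1)}_t$ and $t^{(2)}_t$ be the number of steps up to $t$ routed to each copy. Then $\abs{W_t} = \abs{W^{(1)}_{t^{(1)}}} + \abs{W^{(2)}_{t^{(2)}}}$. Correctness in the subcopy coincides with correctness for $K$, because $L_j = K \cap U_j$ and the $U_j$-strings seen so far are exactly those in the substream. If a copy receives only finitely many steps, its errors are $O(1)$. Otherwise its $\limsup$ ratio is at most $\beta$, and a weighted average of two ratios, each eventually at most $\beta+\epsilon$, is at most $\beta+\epsilon$ plus $o(1)$. This gives the $\beta$ upper bound. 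For breadth, $G_2$ attains lower density $1/2$ of $L_2$, and $L_2$ has density exactly $2\alpha$ in $K$ under the assumed ordering. Hence the outputs have lower density at least $\tfrac12\cdot 2\alpha=\alpha$ in $K$, using $\liminf$ of a product of a convergent sequence with a sequence whose $\liminf \ge 1/2$.

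\textbf{Lower bounds.} For the error rate, take $L_1$ fixed arbitrarily and use the $\SC_2$-adversary against the induced algorithm on $U_2$. The hard part is that the combined algorithm may output $U_1$-strings when the adversary plays $U_2$-strings, so it does not induce a genuine $\SC_2$-algorithm. One way around this is to note that the lower-bound adversaries of \cref{lem:box_nexist} and \cref{lem:real_nexist} are reactive. I would instead interleave the two hard adversaries: the combined adversary alternates blocks, and whenever the algorithm outputs into $U_j$ it hands that output to the $\SC_j$-adversary. Each adversary's guarantee that the algorithm is wrong at rate $\ge\beta$ on its portion then combines by averaging. I expect this step to be the main obstacle, because the existing adversary proofs bound errors relative to the algorithm's outputs in the same box or tree, and these must be charged consistently across the two universes. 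For breadth, I would run the $\SC_1$-adversary of \cref{lem:box_zero_density} on $U_1$, which caps the density of correct outputs within $L_1$ at $0$. On $U_2$ I would use any enumeration, together with the trivial fact that at most half of $L_2$ can be generated by a non-repeating algorithm; this follows by having the adversary mirror the algorithm within $U_2$, as in the bin argument of \cref{lem:box_half_density}. The upper density in $K$ is then at most $0\cdot(1-2\alpha)+\tfrac12\cdot 2\alpha=\alpha$. The "at most half" step also needs the adversary to interleave, so I would state it carefully with the same reserved-timestamp trick $R=\{2^i\}$ to ensure full enumeration of $K$.
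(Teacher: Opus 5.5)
Your upper bounds match the paper's: you run $G_1$ and $G_2$ on the two substreams. Your weighted-average and $\liminf$-of-a-product arguments are more careful than the paper's one-line version.

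The gap is in the lower bounds. You leave them open, and your concrete suggestion, alternating blocks on a schedule fixed by the adversary, fails. Whenever the adversary is playing one universe's block while the algorithm outputs into the other universe $U_j$, the $\SC_j$-adversary never answers those outputs. The $U_j$-projection is then not a legal run of the single-universe game, and the reactive guarantees are lost.

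\begin{itemize}
\item In \cref{lem:box_nexist}, the inequality $\abs{T_i}\le\abs{S_i}/2+\abs{R\cap T_i}/2$ fails. The algorithm can emit all $2q_i$ strings of a box during the other block with only $p_i$ of them wrong, giving an error rate of about $\beta/2$.
\item Unanswered $U_2$-outputs let the algorithm get ahead of the adversary in a prefix of $L_2$, so your at-most-half density bound also breaks.
\end{itemize}

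The missing idea, which is exactly the paper's, is to let the algorithm choose the universe. If $z_t\in U_j$, then $x_{t+1}$ is produced by the $\SC_j$-adversary run on the $U_j$-projected history. Within each $U_j$, adversary and algorithm moves then alternate, and each algorithm move avoids every $U_j$-string revealed so far. So each projection is an honest single-universe run against some legal sequence of moves, and the reactive adversaries' analyses apply verbatim. Correctness of $z_t$ for $K$ coincides with correctness in its projection. Hence $\abs{W_t}$ is the sum of the two projected error counts, and nothing has to be charged across universes.

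Three further points need care; the paper glosses over them too.

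\begin{enumerate}
\item Under this routing, if the algorithm eventually stops outputting into some $U_j$, then $K\cap U_j$ is never fully enumerated. You still need reserved times such as your $\{2^i\}$. You must also check that each sub-adversary tolerates the resulting $O(\log t)$ unsolicited or delayed moves; the box adversaries are built with such cleanup times.
\item ``Combines by averaging'' is not automatic. The algorithm controls the interleaving, so two error ratios that each have $\limsup\ge\beta$ can have offset peaks. Their average can then have $\limsup<\beta$, for instance when both sides are tree adversaries with fast-growing $q_i$. It suffices that one side is at least $\beta-o(1)$ at \emph{all} large times. The adversary of \cref{lem:box_nexist} gives this, since $\min(p_i,\abs{T_i})\ge (p_i/q_i)\min(\abs{T_i},q_i)$ holds at every $t$, and in the application $\SC_1$ is always that box collection.
\item For the at-most-half bound, do not use an arbitrary enumeration. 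Enumerate $L_2$ in increasing order, skipping strings already output. With the routing above, each algorithm output in $U_2$ is then followed by an adversary string from $U_2$. So in every prefix of $L_2$ the algorithm holds only $O(1)$ more strings than the adversary.
\end{enumerate}
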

\begin{proof}
    Fix $\alpha \in (0, 1/2]$ and order the strings of $U_1 \cup U_2$
    so that for any languages $L_1 \in \SC_1$ and $L_2 \in \SC_2$,
    the language $L_2$ has $2 \alpha$ upper and lower density in $L_1 \cup L_2$.

    The algorithm that $\beta$-generates in the limit with lower density $\alpha$ for $\SC$ simply
    combines the generators for $\SC_1$ and $\SC_2$.
    Let $G_1$ be an algorithm that $\beta$-generates in the limit for $\SC_1$
    and $G_2$ be an algorithm that $\beta$-generates in the limit with lower density $1/2$ for $\SC_2$.
    For an enumeration $x_1$, \dots, $x_t$, let $x^{{(i)}}(t)$ be the enumeration
    restricted to the strings in $U_i$.
    At some time $t$, if $x_t \in U_1$, then the algorithm outputs $G_1(x^{(1)}(t))$.
    Otherwise, the algorithm outputs $G_2(x^{(2)}(t))$.
    Since the individual algorithms $\beta$-generate in the limit,
    clearly the combined algorithm must also $\beta$-generate in the limit.
    Let $K = K_1 \cup K_2$ be an arbitrary target language where $K_1 \in \SC_1$ and $K_2 \in \SC_2$.
    Furthermore, $G_2$ generates with lower density $1/2$,
    and $K_2$ has density $2 \alpha$ in $K$.
    This implies that the combined algorithm generates with density $2 \alpha/2 = \alpha$ in $\SC$ as desired.

    We now show the lower bounds.
    For any algorithm $G$, there exist adversaries $A_1$ and $A_2$
    which ensure that the projections of $G$ onto $U_1$ and $U_2$ do not $\beta'$-generate in the limit
    for any $\beta' > \beta$.
    We get an adversary $A$ for $\SC$ by combining $A_1$ and $A_2$.
    At some time $t$, if the algorithm $G$ outputs a string $z_t \in U_1$,
    run the adversary $A_1$ on the projected outputs of $G$ onto $U_1$.
    If $z_t \in U_2$, run $A_2$ on the projected outputs of $G$ onto $U_2$.
    Since $A_1$ and $A_2$ ensure that the projections of $G$ onto $U_1$ and $U_2$ do not $\beta'$-generate in the limit
    for any $\beta' > \beta$,
    the adversary $A$ ensures that $G$ does not $\beta'$-generate in the limit as well.

    Finally, since no algorithm can generate with better than $0$ upper density for $\SC_1$,
    there exists an adversary $A_1$ which ensures that the projected outputs of $G$ onto
    $U_1$ have upper density $0$.
    For $\SC_2$, no algorithm can generate with higher than $1/2$ upper density.
    Thus, in total, since $K_2$ has density $2 \alpha$ in $K$,
    no algorithm can generate with higher than $2 \alpha/2 = \alpha$ upper density as desired.
\end{proof}

Using \cref{lem:interpolate}, we now interpolate between the previous gadgets
to prove \cref{thm:density_hierarchy_without}.

\begin{proof}[Proof of \cref{thm:density_hierarchy_without}]
    Fix an arbitrary $\beta \in (0, 1]$.
    If $\beta \in (0, 1/2]$, we apply \cref{lem:interpolate}
    by taking $\SC_1$ and $U_1$ to be the collection and universe from \cref{lem:box_zero_density},
    and $\SC_2$ and $U_2$ to be the collection and universe from \cref{lem:box_half_density}.
    It remains to show that for any $\alpha \in (0, 1/2]$,
    there exists an ordering of $U_1 \cup U_2$ satisfying the conditions of \cref{lem:interpolate}.

    Fix $\alpha \in (0, 1/2]$ and let $\{r_i\}$ and $\{s_i\}$ be any sequences of positive integers
    such that $s_i / (r_i + s_i)$ converges to $2 \alpha$ from below and
    $(r_i + s_i)/(\sum_{j \le i} r_j + s_j)$ converges to $0$.
    Intuitively, we will arrange the strings from $U_1 \cup U_2$ into groups $M_i$,
    so that at each group $M_i$, and for any $L_1 \in \SC_1$ and $L_2 \in \SC_2$,
    we roughly have $\abs{L_1 \cap M_i} = r_i$ and $\abs{L_2 \cap M_i} = s_i$.
    Since $s_i / (r_i + s_i)$ converges to $2 \alpha$, this would imply the desired result.

    We will inductively construct the strings in each group $M_i$ by
    placing strings from $U_1$ and $U_2$ sequentially into the groups.
    For $U_1$, we will only insert into the groups strings from $U_1$
    that are mapped to the high density set $H$ within $U_1$.
    We will insert the strings that map to the density $0$ set $Y$ at the end of the process.
    Assume inductively that we are at some group $M_i$ and have appropriately placed strings into $M_{i'}$ for $i' < i$.
    Also assume that we are at some box $S_j \in \SC_1$ and have placed all strings from previous boxes that map to $H$
    into the previous groups.
    Note that for box $S_j$, a language $L_1 \in \SC_1$ contains
    $2q_j - p_j$ out of the $2q_j$ total strings from $S_j$.
    Thus, we place into $M_i$ the next $\floor{r_i 2q_j/(2q_j-p_j)}$ strings from $S_j$
    that map to $H$ (which have not already been placed into previous groups).
    If there are not enough new strings from $S_{j}$, we move on to the next box and fill the required strings from
    $S_{j+1}$.
    Note that in the limit, since a language $L_1 \in \SC_1$ contains $2q_j - p_j$
    out of the $2q_j$ total strings from a box $S_j$,
    and because the sequence $\{q_i\}$ and $\{r_i\}$ do not grow too quickly,
    we have that $\abs{L_1 \cap M_i}$ converges to $r_i$.

    We place strings from $U_2$ in a similar fashion.
    For each box $S_j \in \SC_2$, a language $L_2 \in \SC_2$
    contains $2q_j - 2p_j$ out of the $2q_j$ total strings from $S_j$.
    Thus, we place into $M_i$ the next $\floor{s_i 2q_j/(2q_j-2p_j)}$ strings from $S_j$,
    filling from subsequent boxes if there are not enough new strings from $S_j$.
    As before, in the limit, for any language $L_2 \in \SC_2$,
    we have that $\abs{L_2 \cap M_i}$ converges to $s_i$.
    Finally, we place the strings from $U_1$ that map to the density $0$ set $Y$.
    These strings will be inserted so that the projection of the order of $U_1 \cup U_2$ onto $U_1$ respects
    the initial ordering on $U_1$, and the set of strings that map to $Y$ forms a density $0$ set in $U_1 \cup U_2$.
    This will not affect the density of a language $L_2$ in $L_1 \cup L_2$.
    
    As previously argued, since $s_i/(r_i + s_i)$ converges to $2 \alpha$,
    this implies that $L_2$ will have lower density exactly $2 \alpha$ in $L_1 \cup L_2$.
    By applying \cref{lem:interpolate}, this proves the theorem for $\beta \in (0, 1/2]$.
    
    If $\beta \in (1/2, 1]$, we apply \cref{lem:interpolate}
    by taking $\SC_1$ and $U_1$ to be the collection and universe from \cref{lem:box_zero_density},
    and $\SC_2$ and $U_2$ to be the collection and universe from \cref{lem:tree_half_density}.
    Again fix $\alpha \in (0, 1/2]$ and let $\{r_i\}$ and $\{s_i\}$ be the same sequences from before.
    We will again arrange the strings from $U_1 \cup U_2$ into groups $M_i$,
    so that at each group $M_i$, and for any $L_1 \in \SC_1$ and $L_2 \in \SC_2$,
    we roughly have $\abs{L_1 \cap M_i} = r_i$ and $\abs{L_2 \cap M_i} = s_i$.
    The strings from $U_1$ will be placed into $M_i$ identically to before.

    For $U_2$, note that the collection $\SC_2$ is defined by a tree where at each layer $i$,
    a language $L_2 \in \SC_2$ contains exactly $q_i - p_i + 1$ strings from a set $B_{v}$
    and $p_i - 1$ strings from a set $A_v$ where $v$ is some node at depth $i$.
    Furthermore, the universe of $U_2$ is ordered so that the strings in $B_v$
    are drawn from a density $1$ set, while the strings in $A_v$ are drawn from a density $0$ set.
    Thus, we will sequentially move down the tree by layers, and place only strings from sets $B_{v}$
    into the groups.
    Then, at the end, we will order the strings of the sets $A_v$
    so that they form a zero density set in $U_1 \cup U_2$.

    Assume inductively that we are at some group $M_i$
    and have appropriately placed strings from $U_2$ into $M_{i'}$ for $i' < i$.
    Also assume that we are at some depth $j$ in the tree for $U_2$, and that for all nodes
    $v$ with depth less than $j$, we have placed $B_v$ into some group $M_{i'}$.
    Now, for \emph{each} node $v$ at depth $j$, we will place exactly $s_i$ strings from $B_v$
    (that have not already been placed into groups) into $M_{i}$,
    moving down to the next layer in the tree if there are insufficient strings at depth $j$.
    Note that in this fashion, since $\{q_i\}$ and $\{s_i\}$ do not grow too quickly,
    we have that $\abs{L_2 \cap M_i}$ converges to $s_i$.

    Finally, after placing all of the strings from $U_1$ and all strings from $U_2$
    that are drawn from some set $B_v$ into groups, we insert the strings from the sets $A_v$
    into the ordering of $U_1 \cup U_2$ so that they form a density $0$ set within the ordering
    and the projection of the ordering of $U_1 \cup U_2$ onto $U_2$ respects the initial ordering of $U_2$.
    This will not affect the density of a language $L_2$ in $L_1 \cup L_2$.
    Thus, as argued previously, since $s_i/(r_i + s_i)$ converges to $2 \alpha$,
    this implies that $L_2$ will have lower density exactly $2 \alpha$ in $L_1 \cup L_2$.
    By applying \cref{lem:interpolate}, this proves the theorem for $\beta \in (1/2, 1]$.
\end{proof}

    \section{Closing Remarks}\label{sec:closing}In this paper, we initiated the study of language generation in the limit with hallucination, and showed several surprising phenomena. We first showed that allowing hallucination expands the set of generatable collections beyond the KM model, even when the hallucinated time-steps have $0$-measure in the time horizon. We then proceeded to show a fine-grained and strict hierarchy on language collections by the allowed rate of hallucination. We further showed that the hierarchy extends to breadth of generation, namely that for every density level and rate of hallucination, there is a language collection that is generatable with those parameters but not with any stricter ones. (The impossibility holds in a rather strict sense: even if one parameter is made stricter by an arbitrarily small amount, and the other parameter is relaxed to an arbitrarily large extent, the language collections are no longer generatable.) Finally, we considered generation without repetition, which allows us to translate the time-steps with correct/incorrect output to the set of correct/incorrect strings generated by the algorithm. Once again, we showed that the rate of hallucination creates a surprisingly fine-grained hierarchy: for every rate of hallucination, there is a language collection that is generatable at that rate but not at a lower rate. Similar to the case of generation with repetition, the strict hierarchy can also be extended in generation without repetition to include breadth of generation.

Our work opens several interesting directions for future research. From a technical perspective, an interesting question is to further explore the tradeoff between the error rate and breadth of generation. In practice, they correspond to two important phenomena observed in even the most advanced language models: {\em hallucination} and {\em mode collapse}. For instance, when can an algorithm leverage a higher rate of hallucination to increase the breadth of generation? We showed that there are language collections where this is not possible, but is the contrary also possible, i.e., are there language collections where hallucination leads to better density? If both phenomena are possible, then it would be interesting to characterize the role of hallucination in terms of improving breadth more precisely. Another intriguing high-level question is to develop a complexity theory of language generation. So far, the literature has largely focused on computability of language generation in the limit, i.e., a generation algorithm is not parametrized or constrained by the computational resources that it consumes. We believe it would be extremely interesting to go beyond computability to develop a complexity theory for this important phenomenon.

\paragraph{AI Disclosure.} 

We used ChatGPT $5.4$ Pro to generate \Cref{fig:zero-tree,fig:real-tree}. The authors edited the figures and verified their correctness.



    {\small
    \bibliographystyle{alphaurl}
    \bibliography{ian}
    }

\end{document}